\numberwithin{equation}{section}
\renewcommand*{\thefootnote}{\fnsymbol{footnote}}
\title{
Robust and Consistent Estimation of Generators in Credit Risk
}
\author{
\normalsize Greig Smith\footnote{G.~Smith was supported by The Maxwell Institute Graduate School in Analysis and its
Applications, a Centre for Doctoral Training funded by the UK Engineering and Physical
Sciences Research Council (grant [EP/L016508/01]), the Scottish Funding Council, Heriot-Watt
University and the University of Edinburgh.} \\[8pt]
\small  University of Edinburgh\\
\small Maxwell Institute for Mathematical Sciences\\ 
\small  School of Mathematics \\
\small  Edinburgh, EH9 3FD, UK\\  
\\
\\
\small G.Smith-13@sms.ed.ac.uk
\and
\normalsize Gon\c calo dos Reis\footnote{G.~dos Reis acknowledges support from the \emph{Funda{\c c}$\tilde{\text{a}}$o para a Ci$\hat{e}$ncia e a Tecnologia} (Portuguese Foundation for Science and Technology) through the project [UID/MAT/00297/2013] (Centro de Matem\'atica e Aplica\c c$\tilde{\text{o}}$es CMA/FCT/UNL).} \\[8pt]
\small  University of Edinburgh\\ 
\small  School of Mathematics \\
\small  Edinburgh, EH9 3FD, UK\\  
\small  and \\
\small  Centro de Matem\'atica e Aplica\c c$\tilde{\text{o}}$es \\
\small (CMA), FCT, UNL, Portugal \\
\small  G.dosReis@ed.ac.uk
}
\date{ \currenttime, \ddmmyyyydate\today}
\theoremstyle{plain}
\newtheorem{theorem}{Theorem}[section]
\newtheorem{lemma}[theorem]{Lemma}
\newtheorem{proposition}[theorem]{Proposition}
\newtheorem{definition}[theorem]{Definition}
\newtheorem{remark}[theorem]{Remark}
\newtheorem{assumption}[theorem]{Assumption}
\numberwithin{equation}{section}
\numberwithin{figure}{section}
\numberwithin{table}{section}
\newcommand{\bE}{\mathbb{E}}
\newcommand{\bN}{\mathbb{N}}
\newcommand{\bP}{\mathbb{P}}
\newcommand{\bR}{\mathbb{R}}
\newcommand{\cQ}{\mathcal{Q}}
\newcommand{\cX}{\mathcal{X}}
\newcommand{\cY}{\mathcal{Y}}
\definecolor{darkgreen}{rgb}{0,0.35,0}
\newcommand{\1}{\mathbbm{1}}
\theoremstyle{definition}
\begin{document}
\selectlanguage{english}

\title{Robust and consistent estimation of generators in credit risk}

%

\maketitle

\renewcommand*{\thefootnote}{\arabic{footnote}}

\begin{abstract} 

Bond rating Transition Probability Matrices (TPMs) are built over a one-year time-frame and for many practical purposes, like the assessment of risk in portfolios or the computation of banking Capital Requirements (e.g.~the new IFRS 9 regulation), one needs to compute the TPM and probabilities of default over a smaller time interval. In the context of continuous time Markov chains (CTMC) several deterministic and statistical algorithms have been proposed to estimate the generator matrix. We focus on the Expectation-Maximization (EM) algorithm by \cite{BladtSorensen2005} for a CTMC with an absorbing state for such estimation. 

This work's contribution is threefold. Firstly, we provide directly computable closed form expressions for quantities appearing in the EM algorithm and associated information matrix, allowing to easily approximate confidence intervals. Previously, these quantities had to be estimated numerically and considerable computational speedups have been gained. Secondly, we prove convergence to a single set of parameters under very weak conditions (for the TPM problem). Finally, we provide a numerical benchmark of our results against other known algorithms, in particular, on several problems related to credit risk. The EM algorithm we propose, padded with the new formulas (and error criteria), outperforms other known algorithms in several metrics, in particular, with much less overestimation of probabilities of default in higher ratings than other statistical algorithms.
\end{abstract}
\smallskip
{\bf Keywords:} Likelihood  inference, Credit Risk, Transition Probability Matrices, EM algorithm, Markov Chain Monte Carlo

\noindent{\bf 2010 AMS subject classifications:}  
Primary: 
62M05; 
Secondary: 
60J22, 
91G40 




\noindent{\bf JEL subject classifications:} 
C13, 
C63 and 
G32 
\bigskip


\noindent{\bf Final author version.} Article to appear in \emph{Quantitative Finance} (submitted 2016 Dec 13; accepted 2017 Sep 15), {doi:10.1080/14697688.2017.1383627}  
\medskip

\noindent{\bf CRAN R-package} \emph{ctmcd: Estimating the Parameters of a Continuous-Time Markov Chain from Discrete-Time Data} 
-- ({https://CRAN.R-project.org/package=ctmcd}

\newpage

\section{Introduction}

Credit ratings play a key role not just in the calculation of a bank's capital charge (amount of capital a bank must hold) but also are typically a requirement for corporations wishing to issue bonds. There are different agencies which provide firms with a rating and the credit rating the agency gives a company determines in some respect the financial health of the company. Typically ratings are of the form $AAA$, $AA$, $A$, $BBB$, $BB$, $B$, $C$, $D$ (although it varies between agencies) with, $AAA$ the best (safest), $C$ the worst (riskiest) and $D$ to imply the firm has defaulted. It is also standard for banks to use their own internal ratings system (see \cite{YavinEtAl2014}). For an overview of the `science' involved in the rating procedure see \cite{Cantor2004}.

The main object of interest in this work is the so-called annual \emph{Transition Probability Matrix} (TPM), it is a stochastic matrix which shows the migration probabilities of different rated companies within a year. Rating agencies produce these annually. It is possible that such matrices are not initially stochastic due to company mergers or rounding for example. However, they can be renormalized by methods as described in \cite{KreininSidelnikova2001} and, as argued in \cite{BangiaDieboldKronimusEtAl2002}, renormalizing non rated companies across all ratings is indeed the industry standard. 

The main problem considered here is that a TPM encases transition probabilities over a $1$-year time frame and often in practice one needs a $3$ month or $10$ day transition matrix for which probabilities of default are lower than those in the TPM. Therefore one wants to accurately estimate the sub-annual matrix given the annual matrix. In the Basel proposals, Basel 3 \cite{Basel2013}*{p.3} a large part of the risk charge will be measured using ES (expected shortfall), which (as shown in \cite{ContEtAl2010}) is extremely sensitive to a small shifts in probabilities. Therefore, accurate and consistent estimation is essential in the calculation. Moreover, with the perspective of the \emph{IFTR 9} regulation, one needs to better estimate the related probabilities of default (PD) and Markov chain's generator since for a company whose risk profile changes significantly one needs to assess its risk throughout the bond's lifetime. As PD are given by exponential functions, smaller initial errors will be compounded in a significant way, especially in long termed bonds. This is compounded by the known fact, corroborated by our numerical experiments, that certain algorithms overestimate the PD. Our methodology yields a way to obtain point-in-time (PIT) estimates of the Probability of Default (PD) from the through-the-cycle (TTC) estimates. 

Credit rating models within the Markovian framework are handy both from a theoretical and numerical perspective. Evidence is given in \cite{LandoSkodeberg2002} that such Markovian structure is not true in practice, nonetheless, within the Markovian structure, efficient implementation of apt Markovian credit risk models and related risk measuring estimations able to deal with massive portfolios is a challenging problem, see \cites{BrigoMaiScherer2014, RutkowskiTarca2014}. There have been several models that produce non-Markovian effects such as mixing two generators \cite{FrydmanSchuermann2008} or considering hidden Markov models \cite{Korolkiewicz2012}, see also \cite{LongKeenanNeaguEtAl2011}. All non-Markovian models require in one way or another access to additional data for accurate calibration. This data is costly, needs to be updated over time and many companies opt to deal only with the TPMs. This work focuses on practitioners that do not have access to the data. The issue of rating momentum will be dealt with in forthcoming research. 
\smallskip

\textbf{The problem at hand.} 

We take the view of a financial agent who wishes to estimate probabilities of default or assess risk in his portfolio due to credit transitions but does not have access to (the expensive) individual credit rating transitions. The agent only has the annual TPM, say $\mathbf{P}(1)$, and uses a continuous time Markov chain (CTMC), say $(\hat{\mathbf{P}}(t))_{t\geq 0}$, with a finite state space to model the changes in rating over time. Under standard conditions the evolution of the CTMC can be written as $\hat{\mathbf{P}}(t)=e^{\mathbf{Q}t}$ where $\mathbf{Q}$ is the generator matrix. The problem is then to estimate $\mathbf{Q}$ given $\mathbf{P}(1)$. 
%
%
%
%
%
%
%
This estimation is non-trivial due to the so-called embeddability problem (not reviewed here). It is discussed in great detail by \cite{IsraelRosenthalWei2001} and, for more of the mathematics and many of the existing results on the embeddability problem, we point the reader to \cite{Lin2011}.  
%
%
%

Several approaches exist to tackle this estimation problem \cites{KreininSidelnikova2001, IsraelRosenthalWei2001,  TrueckOezturkmen2004, BladtSorensen2005, Inamura2006, BladtSorensen2009}, either using deterministic algorithms (e.g. diagonal or weighted adjustment, Quasi-optimization of the generator) or statistical ones (Expectation-Maximization (EM), Markov chain Monte-Carlo (MCMC) ones), see Section \ref{sec:CompetitorAlgorithms}. We focus on the Expectation-Maximization algorithm of \cite{BladtSorensen2005} for CTMCs and allow for an absorbing states.
\smallskip

\textbf{Contribution of this work.}
\begin{enumerate}

\item We provide sufficient conditions to extend the convergence result of \cite{BladtSorensen2005} to individual parameters rather than just convergence of likelihoods. The conditions presented are trivially satisfied in the context of the TPM problem. 

	\item We derive closed form expressions for the entries of the Hessian of the likelihood function used in the EM algorithm. This eliminated several instability issues appearing in other numerical implementations found in the literature and allows for computational speedups (comparatively). Moreover, the result provides a way to estimate the error of the estimation and assess the nature of the stationary point the algorithm has converged to.


	\item We give a short overview of known methods and implement them with some modifications as to improve their performance.  See Sections \ref{sec:CompetitorAlgorithms} \& \ref{Sec:Comparing Algorithms} for precise meanings: we apply the algorithms to certain credit risk problems and carry out a simulation study to check the impact in the computation of \emph{risk charges}, namely IRC (Incremental Risk Charge) with VaR (Value at Risk), IDR (Incremental Default Risk) with VaR and IRC with ES (Expected Shortfall). We distinguish portfolio types (mixed, investment or speculative); the impact of different types of generators (stable vs unstable); dependence on the sample size and general convergence. We compare probabilities of default as maps of time across different algorithms and find interesting results. 
	
\end{enumerate}
For the study carried out, the implemented EM algorithm is very competitive. It is slightly slower than the deterministic algorithms but much faster than the MCMC algorithms. It embeds statistical properties like robustness that deterministic algorithms cannot capture.   

\begin{remark}
We focus purely on continuous over discrete time models. Continuous time algorithms yield robust estimators while the discrete ones do not, with robustness understood in the following sense: from $\mathbf{P}(1)$ estimate $\mathbf{P}(0.5)$ and $\mathbf{P}(0.25)$. From $\mathbf{P}(0.5)$ estimate $\mathbf{P}(0.25)$ again. Continuous algorithms yield the same $\mathbf{P}(0.25)$, discrete algorithms (in general) will not.
\end{remark}

\begin{remark}[Software availability]
\label{rem:pfeuffer-ctmcd}
The findings and algorithms of this work are now part of an improved version of the CRAN R-package \emph{ctmcd: Estimating the Parameters of a Continuous-Time Markov Chain from Discrete-Time Data} (see \cite{Pfeuffer2017}) --- {https://CRAN.R-project.org/package=ctmcd}
\end{remark}

This work is organized as follows. In Section \ref{sec:EMalgorithm} we present the EM algorithm and we state our main theoretical findings. In Section \ref{sec:CompetitorAlgorithms} we briefly present other known algorithms and in Section \ref{Sec:Comparing Algorithms} we present the benchmarking results.


\section{The EM Algorithm}
\label{sec:EMalgorithm}
There exists extensive literature on the majority of the algorithms we present in this paper, therefore we only provide brief discussions and include references for additional information. Further, we will use the theory of Markov chains extensively. We do not provide details of the theory, however, interested readers can consult texts such as \cite{Norris1998}.
\medskip

\noindent\textbf{Preliminaries and standing convention.}
Throughout this manuscript we consider companies defined on a finite state space $\{1, \dots, h\}$, where each state corresponds to a rating. We denote $AAA$ as rating $1$ and $D$ (default) as rating $h$. We adopt the standard notation that $\mathbf{P}$ is an $h$-\emph{by}-$h$ stochastic matrix, which will be the observed TPM (at, say, time $t=1$) and $\mathbf{Q}$ is an $h$-\emph{by}-$h$ generator matrix. We further denote by $P_{ij}:=\left(\mathbf{P}\right)_{ij}$, by $q_{ij}:=\left(\mathbf{Q}\right)_{ij}$ and the intensity of state $i$ by $q_{i}=\sum_{j \neq i}q_{ij}$ where $i,j \in \{1,\dots,h\}$. A standard assumption used in credit risk modelling is that default is an absorbing state, hence $P_{hh}=1$.
We work with infinitesimal generators of the following class.
\begin{definition}[Stable-Conservative infinitesimal Generator matrix of a CTMC]
\label{def:GeneratorMatrixSpace}
We say a matrix $\mathbf{Q}$ is a generator matrix if the following properties are satisfied for all $i,j \in \{1, \dots, h\}$: i)  $0 \le q_{ij} < \infty$ for $i \neq j$; ii) $q_{ii} \le 0$; and iii) $\sum_{j=1}^{h} q_{ij}=0$ $\forall i$.
\end{definition}
If a matrix $\mathbf{Q}$ satisfies the above properties, then for all $t \ge 0$ the matrix $\mathbf{P}(t):=e^{\mathbf{Q}t}$ is a stochastic matrix, where $e^{\mathbf{A}}$ is the matrix exponential of matrix $\mathbf{A}$ \cite{Norris1998}*{p.63}. The goal of the algorithms presented is to calculate a generator matrix $\mathbf{Q}$ such that $e^{\mathbf{Q}t}$ is the best fit to the observed TPM, where $t$ denotes the length of time between the rating updates (typically one year). 

Throughout let $(X(t))_{t\geq0}$ denote a CTMC over the finite state space $\{1, \dots, h\}$ with a generator $\mathbf{Q}$ of the above class. Associated to $X(t)$ is, for $i,j$ in the state space, $K_{ij}(t)$ the number of jumps from $i$ to $j$ in the interval $[0,t]$ and by $S_{i}(t)$ the holding time of state $i$ in the interval $[0,t]$.
\begin{remark}
\label{rem:embeddableP}
If a matrix $\mathbf{P}$ is embeddable\footnote{In this setting a stochastic matrix $\mathbf{P}$ is embeddable if there exists a generator $\mathbf{Q}$ such that $\mathbf{P}=e^{\mathbf{Q}}$.}, the algorithms below are pointless and one can easily tackle the problem through eigenvalue decomposition etc. Or in the case where the exact timing of rating transitions are known one can use the standard maximum likelihood estimator as in \cite{JarrowLandoTurnbull1997}.
In our examples the only data given is a set of yearly TPMs which in general are not embeddable and the methods just mentioned do not yield useful results.
\end{remark}

\subsection{The Algorithm}
Many methods have been developed in statistics in order to calculate the maximum likelihood estimate, but many methods break in the presence of missing data. Mathematically, we are interested in some set $\mathcal{X}$, but we are only able to observe $\mathcal{Y}$, with the assumption there is a many-to-one mapping from $\mathcal{X}$ to $\mathcal{Y}$. That is, $\cX$ is a much richer set than $\cY$. When dealing with such a case, the Expectation Maximization (EM) algorithm often offers a robust solution to the problem. \cite{McLachlanKrishnan2007} provide a complete overview of the algorithm.

The basis of algorithm is, we observe data $y$ which is a realisation (element) of $\mathcal{Y}$. We know $y$ has density function $g$ (sometimes referred to as a sampling density) depending on parameters $\Psi$ in some space $\Lambda$, but we want the density (likelihood) of $\mathcal{X}(y)$. Hence, postulate some family of densities $f$, dependent on $\Psi$, where $f$ corresponds to the density of the complete data set $\mathcal{X}(y)$ (the set of points $x \in \cX$ which are in the pre-image of $y\in \cY$). The relation between $f$ and $g$ is,
\begin{align*}
g(y; \Psi)
=
\int_{\mathcal{X}(y)}f(x;\Psi)dx \, . 
\end{align*}
The idea is, the EM algorithm maximizes $g$ w.r.t. $\Psi$, but we force it to do so by using the density $f$. Further, define,
\begin{align}
\label{Qdefn}
R(\Psi^{\prime};\Psi) 
:=
\mathbb{E}_{\Psi}\Big[\ln\big( f(x;\Psi^{\prime}) \big) \big| y\Big] \, 
\qquad \textrm{for }~ \Psi',\Psi \in \Lambda,
\end{align}
where $\bE_{\Psi}[\cdot|y]$ is the conditional expectation, conditional on $y$ under parameters $\Psi$.
We assume $R$ to exist for all pairs $(\Psi^{\prime},\Psi)$, in particular we assume $f(x;\Psi) >0$ almost everywhere in $\mathcal{X}$ for all $\Psi$ (otherwise the logarithm is infinite). Let us clarify, $f$ is calculated using $\Psi^{\prime}$, but the expectation is calculated using $\Psi$. The EM algorithm is then the following iterative procedure. 
\begin{enumerate}
	\item Choose an initial $\Psi^{(1)}$ and take $p=1$.
	\item E-step: Compute $R(\Psi;\Psi^{(p)})$.
	\item M-step: Choose $\Psi^{(p+1)}$ to be the value of $\Psi \in \Lambda$ that maximizes $R(\Psi;\Psi^{(p)})$.
	\item Check if the predefined convergence criteria is met, if not, take $p=p+1$ and return to (ii).
\end{enumerate}

\subsubsection{The particular problem of generator estimation}

For our problem the observed process is a discrete time Markov chain (DTMC), the unobserved process to estimate is a continuous time Markov chain (CTMC). Therefore, the observed data is the discrete transitions and the parameters we wish to estimate are the entries in the generator. The likelihood of a continuous time fully observed Markov chain with generator $\mathbf{Q}$ is given by the following expression (see \cite{KuchlerSorensen1997}*{Chapter 3.4}),
\begin{align*}
L_{t}(\mathbf{Q})=\exp \left\{\sum_{i=1}^{h} \left[\sum_{j \neq i}\log(q_{ij}) K_{ij}(t) -  S_{i}(t)\sum_{j \neq i}q_{ij} \right]\right \} \, ,
\end{align*}
with $K$ and $S$ the same as before (immediately before Remark \ref{rem:embeddableP}). Hence given two generators $\mathbf{Q}',\mathbf{Q}$, the function $R$ in \eqref{Qdefn} is,
\begin{align}
\label{Eq:R function for CTMC}
R(\mathbf{Q}^{\prime};\mathbf{Q})
=
\sum_{i=1}^{h} \left[
                     \sum_{j \neq i}\log(q'_{ij}) \bE_{\mathbf{Q}}[K_{ij}(t)|y] 
										  -  \bE_{\mathbf{Q}}[S_{i}(t)|y]\sum_{j \neq i}q'_{ij} 
              \right] \, , 
\end{align}
where $y$ denotes the discrete time observations. Maximizing for $q'_{ij}$ in $R(\mathbf{Q}';\mathbf{Q})$ yields 
\begin{align}
\label{Eq:Max EM Function}
q'_{ij}=\frac{\bE_{\mathbf{Q}}[K_{ij}(t)|y]}{ \bE_{\mathbf{Q}}[S_{i}(t)|y]} \, .
\end{align}
The difficult step is the calculation of $\bE_{\mathbf{Q}}[K_{ij}(t)|y]$ and $ \bE_{\mathbf{Q}}[S_{i}(t)|y]$. We follow an approach similar to \cite{BladtSorensen2005} (see also \cite{BladtEtAl2002}) but express the result in a framework more suited to the problem of generator estimation from TPMs, rather than the estimation from individual movements. Furthermore, the result derived in \cite{BladtEtAl2002} is for irreducible Markov chains making it not applicable to our case (CTMC with absorbing states),  accounted for in Proposition \ref{Prop:Expected Values}. 

Consider the following functions (see \cite{BladtEtAl2002}), for $1 \le i,j \le h$
\begin{align}
\label{LaplaceProbTransform}
V^{*}_{ij}(\mathbf{c},\mathbf{Z};t)
=
\bE_{\mathbf{Q}}\left[\exp \left\{-\sum_{\mu=1}^{h}c_{\mu}S_{\mu}(t) \right\}
\prod_{\mu, \nu=1}^{h}Z_{\mu \nu}^{K_{\mu \nu}(t)}\mathbbm{1}_{\{X(t)=j\}}
\biggl| X(0)=i \right] \, , 
\end{align}
where we denote by $\mathbf{c}=(c_1,\cdots,c_h)\in \bR^h$ and $\mathbf{Z}\in\bR^{h\times h}$ with $Z_{ii}=1$ for $i\in\{1,\cdots,h\}$.  Observe that $V^{*}_{ij}$ is the Laplace-Stieltjes transform of the holding times $S$ and the probability generating function of the jumps $K$, with initial and final states $X(0)=i$ and $X(t)=j$ respectively. Denoting by $\mathbf{V}^{*}(\mathbf{c},\mathbf{Z};t)$ the $h$-\emph{by}-$h$ matrix of elements $V^{*}_{ij}(\mathbf{c},\mathbf{Z};t)$. This allows us to give the main theorem (similar version) in \cite{BladtEtAl2002}.

\begin{theorem}
\label{Thm: MatrixExpRelation}
	For $t \ge0$, the matrix $\mathbf{V}^{*}(\mathbf{c},\mathbf{Z};t)$ is given by,
	\begin{equation*}
	\mathbf{V}^{*}(\mathbf{c},\mathbf{Z};t) = \exp\{[\mathbf{Q} \bullet \mathbf{Z} - \Delta(\mathbf{c})]t\} \, ,
	\end{equation*}
	where $\bullet$ is the Schur (Hadamard) product\footnote{The Shur product of two $h\times h$ matrices $A$ and $B$ is the $h\times h$ matrix with elements $A_{ij}B_{ij}$.} of matrices, $\mathbf{Q}$ is the generator matrix from the CTMC and $\Delta(\mathbf{c})$ is the diagonal matrix with entries $c_i$ at position $ii$ for $i=1,\dots,h$.
\end{theorem}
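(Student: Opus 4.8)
The plan is to identify $t\mapsto \mathbf{V}^{*}(\mathbf{c},\mathbf{Z};t)$ as the unique solution of the linear matrix initial value problem
\begin{align*}
\frac{\ud}{\ud t}\,\mathbf{V}^{*}(\mathbf{c},\mathbf{Z};t)=\big(\mathbf{Q}\bullet\mathbf{Z}-\Delta(\mathbf{c})\big)\,\mathbf{V}^{*}(\mathbf{c},\mathbf{Z};t),\qquad \mathbf{V}^{*}(\mathbf{c},\mathbf{Z};0)=\mathbf{I},
\end{align*}
and then to invoke uniqueness for such constant-coefficient linear systems, which forces $\mathbf{V}^{*}(\mathbf{c},\mathbf{Z};t)=\exp\{[\mathbf{Q}\bullet\mathbf{Z}-\Delta(\mathbf{c})]t\}$ (the exponential solves the same problem). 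The initial condition is immediate: at $t=0$ one has $S_{\mu}(0)=0$, $K_{\mu\nu}(0)=0$ and $X(0)=i$, so $V^{*}_{ij}(\mathbf{c},\mathbf{Z};0)=\1_{\{i=j\}}$.

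To derive the ODE I would condition on the first jump of the chain started at $i$. Writing $q_{i}=\sum_{k\neq i}q_{ik}=-q_{ii}$ for the total exit intensity of state $i$, the holding time at $i$ is $\mathrm{Exp}(q_{i})$; on the holding interval $[0,s)$ the factor $\exp\{-\sum_{\mu}c_{\mu}S_{\mu}\}$ contributes $e^{-c_{i}s}$, the jump at time $s$ lands in $k$ with probability $q_{ik}/q_{i}$ and contributes the weight $Z_{ik}$, and by the strong Markov property the remainder of the trajectory from $(k,s)$ contributes $V^{*}_{kj}(\mathbf{c},\mathbf{Z};t-s)$. Separating the events ``no jump before $t$'' and ``first jump at some $s\le t$'' and substituting $u=t-s$ gives the renewal-type equation
\begin{align*}
V^{*}_{ij}(\mathbf{c},\mathbf{Z};t)=e^{-(q_{i}+c_{i})t}\1_{\{i=j\}}+\int_{0}^{t}e^{-(q_{i}+c_{i})(t-u)}\sum_{k\neq i}q_{ik}Z_{ik}\,V^{*}_{kj}(\mathbf{c},\mathbf{Z};u)\,\ud u .
\end{align*}
Multiplying by $e^{(q_{i}+c_{i})t}$ and differentiating in $t$ yields $\frac{\ud}{\ud t}V^{*}_{ij}=-(q_{i}+c_{i})V^{*}_{ij}+\sum_{k\neq i}q_{ik}Z_{ik}V^{*}_{kj}$; using $q_{ii}=-q_{i}$ and $Z_{ii}=1$ to write $-(q_{i}+c_{i})V^{*}_{ij}=-c_{i}V^{*}_{ij}+q_{ii}Z_{ii}V^{*}_{ij}$ folds the diagonal term into the sum, so $\frac{\ud}{\ud t}V^{*}_{ij}=-c_{i}V^{*}_{ij}+\sum_{k=1}^{h}q_{ik}Z_{ik}V^{*}_{kj}$, which is precisely the $(i,j)$ entry of $(\mathbf{Q}\bullet\mathbf{Z}-\Delta(\mathbf{c}))\mathbf{V}^{*}$. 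A more computational alternative, avoiding differentiation, is to condition on the number $n$ of jumps in $[0,t]$, integrate out the ordered exponential holding times, and recognise the resulting double sum over $n$ and over state-paths $i=i_{0}\to\cdots\to i_{n}=j$ as the series expansion of $\exp\{[\mathbf{Q}\bullet\mathbf{Z}-\Delta(\mathbf{c})]t\}$; there the same identities $q_{ii}=-q_{i}$, $Z_{ii}=1$ make the diagonal of $\mathbf{Q}\bullet\mathbf{Z}-\Delta(\mathbf{c})$ encode the holding-time decay and the off-diagonal encode the weighted jumps.

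The main obstacle is not the algebra but the analytic bookkeeping required to make this rigorous: one must ensure $\mathbf{V}^{*}$ is finite and that the interchanges of expectation, summation and integration — together with the differentiation under the integral sign — are legitimate. This is clean in the regime relevant here and in the EM application, namely $c_{\mu}\ge 0$ and $Z_{\mu\nu}\in[-1,1]$, where every summand is bounded by $1$; continuity and local boundedness of $u\mapsto V^{*}_{kj}(\mathbf{c},\mathbf{Z};u)$ then follow from the integral equation itself by a Picard/Gr\"onwall argument and justify the differentiation step. Finally, the absorbing state $h$ needs no special treatment: since $q_{h}=0$ the integral equation collapses to $V^{*}_{hj}(\mathbf{c},\mathbf{Z};t)=e^{-c_{h}t}\1_{\{h=j\}}$, which is exactly the $h$-th row of $\exp\{[\mathbf{Q}\bullet\mathbf{Z}-\Delta(\mathbf{c})]t\}$, so the irreducibility hypothesis of \cite{BladtEtAl2002} can be dropped.
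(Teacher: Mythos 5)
Your proof is correct, and it supplies something the paper itself does not: the paper gives no proof of Theorem \ref{Thm: MatrixExpRelation}, deferring to \cite{BladtEtAl2002} (and, in the proof of Proposition \ref{Prop:Expected Values}, merely asserting that $\mathbf{V}^{*}$ ``satisfies the differential equation''), while explicitly flagging that the cited result is stated for irreducible chains. Your first-jump decomposition is the standard route to that differential equation, your algebra folding the $-(q_i+c_i)$ term into the diagonal of $\mathbf{Q}\bullet\mathbf{Z}-\Delta(\mathbf{c})$ via $q_{ii}=-q_i$, $Z_{ii}=1$ is right, and the closing observation that the absorbing row collapses to $V^{*}_{hj}(t)=e^{-c_h t}\1_{\{h=j\}}$, matching the $h$-th row of the exponential, is exactly the point that closes the gap between the irreducible-case reference and the theorem as used here --- this is the most valuable part of your write-up. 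One small imprecision: you restrict the analytic justification to $c_\mu\ge 0$, $Z_{\mu\nu}\in[-1,1]$, whereas the statement (and the definition in \eqref{LaplaceProbTransform}) allows $\mathbf{c}\in\bR^h$, $\mathbf{Z}\in\bR^{h\times h}$; this costs nothing, since on a finite horizon the total jump count is stochastically dominated by a Poisson variable of rate $\max_i(-q_{ii})$ and so has finite exponential moments of every order, while the holding times are bounded by $t$ --- hence $\mathbf{V}^{*}$ is finite and the same dominated-convergence and Gr\"onwall arguments go through for arbitrary real $\mathbf{c}$ and $\mathbf{Z}$. That unrestricted version is the one actually needed downstream, where derivatives in $Z_{\mu\nu}$ and $c_\mu$ are taken at $\mathbf{Z}=\mathbf{1}$, $\mathbf{c}=\mathbf{0}$.
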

A closed form expression for the expectation terms in \eqref{Eq:Max EM Function} follows from a result in \cite{VanLoan1978} (sketched also in \cite{HobolthJensen2011}).
\begin{proposition} 
\label{Prop:Expected Values}
	Let $\textbf{e}_{i}$ be the column vector of length $h$ which is one at entry $i$ and zero elsewhere, further let us define the $2h$-by-$2h$ matrices $\mathbf{C}^{(\alpha \beta)}_{\gamma}$ and $\mathbf{C}^{(\alpha)}_{\phi}$ as,
	\begin{equation}
	\label{Eq:AlphaandBetaMatrices}
	\mathbf{C}^{(\alpha \beta)}_{\gamma}=
	\left[ {\begin{array}{cc}
			\mathbf{Q} & q_{\alpha \beta}\mathbf{e}_{\alpha} \mathbf{e}_{\beta}^{\intercal} \\
			0 & \mathbf{Q}
		\end{array} } \right]
	\qquad
	 \text{ and }
	 \qquad  
	 \mathbf{C}^{(\alpha)}_{\phi}=
	 \left[ {\begin{array}{cc}
	 		\mathbf{Q} & \mathbf{e}_{\alpha} \mathbf{e}_{\alpha}^{\intercal} \\
	 		0 & \mathbf{Q}
	 	\end{array} } \right]\,
		\quad \alpha,\beta\in\{1,\cdots,h\}. 
	\end{equation}
Consider a CTMC $X$ that we observe at $n$ time points $0 \le t_1 < t_2 < \dots < t_n$ and denote by $y_s$ the state of the Markov chain at time $t_s$, i.e. $y_s := X(t_s)$.
	Then, the expected jumps and holding times over the observations are,
	\begin{align*}
	\bE_{\mathbf{Q}}[K_{ij}(t)|y]
	=
	& 
	\sum_{s=1}^{n-1} \frac{\left(\exp\{\mathbf{C}^{(i j)}_{\gamma} (t_{s+1} -t_{s})\}\right)_{y_{s},h+y_{s+1}}}{\left(\exp\{\mathbf{Q} (t_{s+1} -t_{s})\}\right)_{y_{s},y_{s+1}}} \, , 
	\\
	\bE_{\mathbf{Q}}[S_{i}(t)|y]=
	& 
	\sum_{s=1}^{n-1} \frac{\left(\exp\{\mathbf{C}^{(i)}_{\phi} (t_{s+1} -t_{s})\}\right)_{y_{s},h+y_{s+1}}}{\left(\exp\{\mathbf{Q} (t_{s+1} -t_{s})\}\right)_{y_{s},y_{s+1}}}.
	\end{align*}
\end{proposition}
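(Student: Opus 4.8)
The plan is to assemble three ingredients. First, the Markov property localises the problem: it reduces the two conditional expectations over the full observation window to a sum, over the inter-observation intervals $[t_s,t_{s+1}]$, of the corresponding bridge expectations with fixed endpoints. Second, Theorem~\ref{Thm: MatrixExpRelation} combined with differentiation of the transform $\mathbf{V}^*$ from \eqref{LaplaceProbTransform} in its arguments $\mathbf{Z}$ and $\mathbf{c}$ expresses each bridge expectation as a Duhamel integral of the matrix exponential. Third, Van~Loan's block-triangular exponential identity \cite{VanLoan1978} repackages every such integral as the off-diagonal block of a $2h$-by-$2h$ matrix exponential, namely $\exp\{\mathbf{C}^{(ij)}_\gamma\tau\}$ for the jumps and $\exp\{\mathbf{C}^{(i)}_\phi\tau\}$ for the holding times.

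\textit{Step 1 (localisation).} Over the window $[t_1,t_n]$ one has $K_{ij}(t)=\sum_{s=1}^{n-1}K^{(s)}_{ij}$ and $S_i(t)=\sum_{s=1}^{n-1}S^{(s)}_i$, where $K^{(s)}_{ij}$ and $S^{(s)}_i$ are the contributions of $[t_s,t_{s+1}]$. Since $X$ is time-homogeneous and Markov, conditionally on the skeleton $y=(y_1,\dots,y_n)$ the restrictions of the path to disjoint intervals are independent and the $s$-th one depends on $y$ only through $(y_s,y_{s+1})$. Writing $\tau_s:=t_{s+1}-t_s$ and using the definition of conditional expectation given the two-point event $\{X(0)=y_s,\,X(\tau_s)=y_{s+1}\}$ this gives
\[
\bE_{\mathbf{Q}}\big[K^{(s)}_{ij}\mid y\big]
=\frac{\bE_{\mathbf{Q}}\big[K_{ij}(\tau_s)\1_{\{X(\tau_s)=y_{s+1}\}}\mid X(0)=y_s\big]}{\big(e^{\mathbf{Q}\tau_s}\big)_{y_s,\,y_{s+1}}}\,,
\]
and identically for $S^{(s)}_i$, the denominator being $\bP_{\mathbf{Q}}(X(\tau_s)=y_{s+1}\mid X(0)=y_s)$. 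Summing over $s$ reduces the proposition to the single-interval identities (for all $a,b$ and $\tau\ge0$)
\begin{gather*}
\bE_{\mathbf{Q}}\big[K_{ij}(\tau)\1_{\{X(\tau)=b\}}\mid X(0)=a\big]=\big(\exp\{\mathbf{C}^{(ij)}_\gamma\tau\}\big)_{a,\,h+b},\\
\bE_{\mathbf{Q}}\big[S_i(\tau)\1_{\{X(\tau)=b\}}\mid X(0)=a\big]=\big(\exp\{\mathbf{C}^{(i)}_\phi\tau\}\big)_{a,\,h+b}.
\end{gather*}

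\textit{Step 2 (single interval).} Differentiating the definition \eqref{LaplaceProbTransform} of $V^*_{ab}$ under the expectation and evaluating at $\mathbf{Z}=\mathbf{1}$ (the all-ones $h$-by-$h$ matrix) and $\mathbf{c}=0$ yields $\partial_{Z_{ij}}V^*_{ab}=\bE_{\mathbf{Q}}[K_{ij}(\tau)\1_{\{X(\tau)=b\}}\mid X(0)=a]$ and $-\partial_{c_i}V^*_{ab}=\bE_{\mathbf{Q}}[S_i(\tau)\1_{\{X(\tau)=b\}}\mid X(0)=a]$. On the matrix side, Theorem~\ref{Thm: MatrixExpRelation} gives $\mathbf{V}^*=\exp\{[\mathbf{Q}\bullet\mathbf{Z}-\Delta(\mathbf{c})]\tau\}$, and since $\mathbf{Q}\bullet\mathbf{1}=\mathbf{Q}$, $\Delta(0)=0$, $\partial_{Z_{ij}}(\mathbf{Q}\bullet\mathbf{Z})=q_{ij}\mathbf{e}_i\mathbf{e}_j^{\intercal}$ and $\partial_{c_i}\Delta(\mathbf{c})=\mathbf{e}_i\mathbf{e}_i^{\intercal}$, the Duhamel (variation-of-parameters) formula $\frac{d}{d\theta}e^{\mathbf{A}(\theta)\tau}=\int_0^\tau e^{\mathbf{A}(\theta)u}\mathbf{A}'(\theta)e^{\mathbf{A}(\theta)(\tau-u)}\,du$ turns these derivatives into $q_{ij}\int_0^\tau e^{\mathbf{Q}u}\mathbf{e}_i\mathbf{e}_j^{\intercal}e^{\mathbf{Q}(\tau-u)}\,du$ and $\int_0^\tau e^{\mathbf{Q}u}\mathbf{e}_i\mathbf{e}_i^{\intercal}e^{\mathbf{Q}(\tau-u)}\,du$ respectively. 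Finally, Van~Loan's identity applied to a block-triangular matrix $\bigl[\begin{smallmatrix}\mathbf{Q}&\mathbf{B}\\0&\mathbf{Q}\end{smallmatrix}\bigr]$ of the form appearing in \eqref{Eq:AlphaandBetaMatrices} states that its exponential over $[0,\tau]$ has top-right block $\int_0^\tau e^{\mathbf{Q}u}\mathbf{B}\,e^{\mathbf{Q}(\tau-u)}\,du$; taking $\mathbf{B}=q_{ij}\mathbf{e}_i\mathbf{e}_j^{\intercal}$ and $\mathbf{B}=\mathbf{e}_i\mathbf{e}_i^{\intercal}$ identifies the two integrals above with the top-right blocks of $\exp\{\mathbf{C}^{(ij)}_\gamma\tau\}$ and $\exp\{\mathbf{C}^{(i)}_\phi\tau\}$. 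Reading off the $(a,b)$ position of each block --- i.e.\ the $(a,h+b)$ entry of the $2h$-by-$2h$ exponential --- gives the identities of Step~1; substituting $a=y_s$, $b=y_{s+1}$, $\tau=\tau_s$, dividing by $(e^{\mathbf{Q}\tau_s})_{y_s,y_{s+1}}$ and summing over $s$ completes the proof.

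\textit{Main obstacle.} None of this is deep, but one step does need genuine care: the differentiation of $\mathbf{V}^*$ under the expectation at the boundary value $\mathbf{Z}=\mathbf{1}$, $\mathbf{c}=0$. I would justify it by dominated convergence, checking that uniformly over a small neighbourhood of that point the $Z_{ij}$- and $c_i$-derivatives of the integrand in \eqref{LaplaceProbTransform} are dominated by an integrable random variable built from the total number of jumps of $X$ on $[0,\tau]$; the latter is stochastically dominated by a Poisson variable (finite state space, bounded intensities) and hence has finite moments of every order, while the holding times are bounded by $\tau$. Everything else then follows mechanically from Theorem~\ref{Thm: MatrixExpRelation} and the Van~Loan identity.
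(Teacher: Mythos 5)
Your proposal is correct and follows the same route the paper intends: express the bridge expectations as derivatives of the transform $\mathbf{V}^{*}$ from \eqref{LaplaceProbTransform} at $\mathbf{Z}=\mathbf{1}$, $\mathbf{c}=0$, use Theorem~\ref{Thm: MatrixExpRelation} together with the Duhamel formula, and identify the resulting integral with the top-right block of the $2h$-by-$2h$ exponential via Van Loan's identity. The paper omits these details (citing Bladt--Sorensen and Van Loan), so your write-up, including the dominated-convergence justification for differentiating under the expectation, is a faithful and complete filling-in of the omitted argument.
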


\begin{proof}
Observe that $\mathbf{V}^{*}$ in \eqref{LaplaceProbTransform} satisfies the differential equation in the statement of Theorem \ref{Thm: MatrixExpRelation} (see \cite{BladtSorensen2005}). The proof is omitted as one can solve the equation by employing the methods in \cite{VanLoan1978}.
\end{proof}
Thus we obtain closed form expressions for the two key quantities appearing in \eqref{Eq:Max EM Function}. This approach differs from \cite{BladtSorensen2005} where they describe numerical schemes to solve the differential equations, namely Runge-Kutta and uniformization. These techniques can yield good results at this level, but our closed form expression will pay dividends when it comes to error estimation.

This yields the relation we desire, however, in our example we have an observed TPM (or sequence of TPMs), $\mathbf{P}$, in the case of equal observation windows, $t$ in the interval $[0,T]$ (although it is trivial to generalize) the expectation can be expressed as,
\begin{align}
\label{Eq:ExpectedHoldandJumpsTPM}
\bE_{\mathbf{Q}}[K_{ij}(T)|\mathbf{P}]
=
& \sum_{u=1}^{N} \sum_{s=1}^{h}\sum_{r =1}^{h}P^{u}_{sr}(t)\frac{\left(\exp\{\mathbf{C}^{(i j)}_{\gamma} t\}\right)_{s,h+r}}{\left(\exp\{\mathbf{Q} t\}\right)_{s,r}} \, , \notag
\\
\bE_{\mathbf{Q}}[S_{i}(T)|\mathbf{P}]
=
& 
\sum_{u=1}^{N} \sum_{s=1}^{h}\sum_{r=1}^{h} P^{u}_{sr}(t) \frac{\left(\exp\{\mathbf{C}^{(i)}_{\phi}t\}\right)_{s,h+r}}{\left(\exp\{\mathbf{Q} t\}\right)_{s,r}} \, , 
\end{align}
where $N=T/t$ (the number of observations) and $\mathbf{P}^{u}$ is the TPM of the $u$-th observation.

\begin{remark}[The reducible case]
Previously, we only had observed transitions, hence they must have a non-zero probability of occurring. Here we can sum $s$ and $r$ over the full range because $\mathbf{P}_{sr}(t)$ acts as an indicator of possible transitions, that is, if $P_{sr}(t)=0$ we set the $s, \, r$ component as $0$. Clearly, if $P_{sr}(t)>0$, but $(\exp\{\mathbf{Q}t\})_{s r}=0$, $\mathbf{Q}$ is misspecified.  
\end{remark}

\subsubsection*{Likelihood Convergence of the EM algorithm}

In the case of this problem \cite{BladtSorensen2005} provide a proof that the likelihood function converges with one small caveat in order to keep the parameter space compact. Namely, they use the following constrained parameter space, $\cQ_{\epsilon}$, which can be achieved by setting, $\cQ_{\epsilon}=\{\mathbf{Q}\in \cQ|\text{det}[\exp(\mathbf{Q})] \ge \epsilon\}$ ($\cQ$ is the parameter space from Definition \ref{def:GeneratorMatrixSpace}) for some $\epsilon >0$. Theorem 4 in \cite{BladtSorensen2005} states that the algorithm will converge to a stationary point of the likelihood or hit the boundary of the parameter space they have induced. It is accepted this is a crude approach to solving the problem and further analysis is needed when $\text{det}[\exp(\mathbf{Q})] = \epsilon$. An alternative approach would be to use a penalized likelihood as discussed in \cite{McLachlanKrishnan2007}*{p.214}.

\subsubsection*{Parameter convergence criteria}

The above convergence is sufficient for one to conclude convergence of the likelihood. However, it is not sufficient for convergence of the parameters as one cannot state that the series of iterates $\mathbf{Q}^{(k)}$ converge ($\|\mathbf{Q}^{(k+1)}-\mathbf{Q}^{(k)}\| \to 0$ as $k\to\infty$). From a theoretical standpoint this may not be as important as convergence of the likelihood itself, nonetheless, it is of key importance for applications. For instance, without convergence of the parameters the risk charge different financial agents obtain from the same data may vary wildly, even under very strict convergence conditions. Before proving convergence we require two important points.

\begin{remark}
	With \eqref{Eq:ExpectedHoldandJumpsTPM} in mind we assume that for any $s \neq r$ such that $P^{u}_{sr}(t)=0$ for all $u$, we take the starting point $q_{sr}^{(0)}:=(\mathbf{Q}^{(0)})_{sr}=0$. As discussed in \cite{BladtSorensen2005}, any point set to zero will stay at zero for all iterations. Note, we are not changing the problem since these terms will converge to zero under the EM algorithm.
\end{remark}

\begin{assumption}[Element constraint]
	\label{Assump:Element Constraint}
	Similar to \cite{BladtSorensen2005}, we will use a manual space constraint to obtain the convergence. Take $1>\epsilon > 0$, such that $\forall ~ i\neq j$, $q_{ij}< 1/\epsilon$. Moreover, we assume adjacent mixing, namely, for $i \in \{2,\dots, h-1\}$, $q_{i, i\pm 1}>\epsilon$ and $q_{1,2}> \epsilon$.  
	
	We denote the space of generator matrices which satisfy this condition as $\Lambda_{\epsilon}$.
\end{assumption}
The above assumption ensures non-zero entries in the tri-diagonal band and also only finite entries as one can take $\epsilon$ as small as we wish. In the case of TPMs associated to credit ratings, such an assumption is trivially satisfied as one generally has diagonally dominant matrices and companies can always be upgraded or downgraded by one, thus $P^{u}_{i,i\pm 1}$ are typically non-zero. Diagonal dominance is sufficient for the generator to be identifiable and therefore entries do not blow up, we discuss the notion of identifiability in Section \ref{Sec:Estimation of Error}.

Proving the parameters converge is more challenging than the likelihoods, however, \cite{Wu1983} provide a sufficient condition for this to occur, namely a sufficient condition for $\|\mathbf{Q}^{(k+1)}-\mathbf{Q}^{(k)}\| \to 0$ as $k\to\infty$ is, there exists a forcing function\footnote{A forcing function is defined as any function $\sigma: [0,\infty) \rightarrow [0,\infty)$ such that for any sequence ${t_{k}}$ defined in $[0,\infty)$, $\lim_{k \rightarrow \infty} \sigma(t_{k})=0$ implies $\lim_{k \rightarrow \infty} t_{k}=0$.} $\sigma$ such that,
\begin{align*}
R(\mathbf{Q}^{(k+1)};\mathbf{Q}^{(k)})-R(\mathbf{Q}^{(k)};\mathbf{Q}^{(k)}) \ge \sigma(||\mathbf{Q}^{(k+1)}-\mathbf{Q}^{(k)}||) \, .
\end{align*}
An example of a forcing function is $\sigma(t)=\lambda t^2$ where $\lambda >0$. We require the following bounds on the expected values to show convergence.

\begin{lemma} 
\label{Lem:ExpectationBounds}
Let $N$ and $\mathbf{P}^{u}$ be as defined in \eqref{Eq:ExpectedHoldandJumpsTPM} and assume for $i \neq j$ there exists a $u \in \{1, \dots, N\}$ such that $P^{u}_{ij}>0$ (we observe a movement from $i$ to $j$ in observation window $u$). Then we obtain the following bounds on the expected number of jumps:
\begin{align}
\label{Eq:Bounds on expected jumps}
P^{u}_{ij} \frac{\epsilon q_{ij}}{h} 
\le 
\bE_{\mathbf{Q}}[K_{ij}(T)|\mathbf{P}] 
\le
 h^2 N \frac{ht}{\epsilon \min\big\{\epsilon^{h}t^{h}\exp\{-th^{2}/\epsilon\}\ ,\ \exp\{ht/ \epsilon \} \big\}}\, .
\end{align}
Moreover, assuming there exists a $u \in \{1, \dots, N\}$ such that $P^{u}_{ii}>0$, we obtain the following bound on the expected holding time,
\begin{align}
\label{Eq:Bound on expected holding}
 \bE_{\mathbf{Q}}[S_{i}(T)|\mathbf{P}] \ge P^{u}_{ii} t \exp\big\{ -\frac{ht}\epsilon \big\} \, .
\end{align}
\end{lemma}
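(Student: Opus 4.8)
The plan is to substitute the closed-form expressions of Proposition~\ref{Prop:Expected Values} into \eqref{Eq:ExpectedHoldandJumpsTPM} and estimate the resulting finite sums entry by entry. First I would make the matrix exponentials explicit: since $\mathbf{C}^{(ij)}_{\gamma}$ and $\mathbf{C}^{(i)}_{\phi}$ are block upper triangular with $\mathbf{Q}$ on both diagonal blocks, their top-right $h\times h$ block is the Duhamel integral $\int_0^t e^{\mathbf{Q}w}\mathbf{B}\,e^{\mathbf{Q}(t-w)}\,\mathrm{d}w$ with $\mathbf{B}=q_{ij}\mathbf{e}_i\mathbf{e}_j^{\intercal}$, respectively $\mathbf{B}=\mathbf{e}_i\mathbf{e}_i^{\intercal}$, so that
\begin{align*}
\big(\exp\{\mathbf{C}^{(ij)}_{\gamma}t\}\big)_{s,h+r}&=q_{ij}\int_0^t (e^{\mathbf{Q}w})_{si}\,(e^{\mathbf{Q}(t-w)})_{jr}\,\mathrm{d}w,\\
\big(\exp\{\mathbf{C}^{(i)}_{\phi}t\}\big)_{s,h+r}&=\int_0^t (e^{\mathbf{Q}w})_{si}\,(e^{\mathbf{Q}(t-w)})_{ir}\,\mathrm{d}w.
\end{align*}
Two facts will be used throughout: every entry of a generator's matrix exponential lies in $[0,1]$ (it is a stochastic matrix), and by the $0/0$ convention of the remark after \eqref{Eq:ExpectedHoldandJumpsTPM} a summand is non-zero only when $P^{u}_{sr}(t)>0$, in which case the denominator $(e^{\mathbf{Q}t})_{sr}=P_{sr}(t)$ is strictly positive as well.

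For the two lower bounds I would keep one (non-negative) summand and discard the rest. For \eqref{Eq:Bound on expected holding}: fix $u$ with $P^{u}_{ii}(t)>0$, keep only $(s,r)=(i,i)$, bound the denominator $(e^{\mathbf{Q}t})_{ii}\le1$, and use the no-jump estimate $(e^{\mathbf{Q}w})_{ii}\ge e^{-q_i w}$ together with the intensity bound $q_i=\sum_{k\neq i}q_{ik}<h/\epsilon$ from Assumption~\ref{Assump:Element Constraint}:
\begin{align*}
\bE_{\mathbf{Q}}[S_i(T)\,|\,\mathbf{P}]
&\ \ge\ P^{u}_{ii}(t)\int_0^t (e^{\mathbf{Q}w})_{ii}(e^{\mathbf{Q}(t-w)})_{ii}\,\mathrm{d}w\\
&\ \ge\ P^{u}_{ii}(t)\int_0^t e^{-q_i w}e^{-q_i(t-w)}\,\mathrm{d}w\ =\ P^{u}_{ii}(t)\,t\,e^{-q_i t}\ \ge\ P^{u}_{ii}(t)\,t\,e^{-ht/\epsilon}.
\end{align*}
The lower bound in \eqref{Eq:Bounds on expected jumps} is the identical computation with $(s,r)=(i,j)$ for a $u$ with $P^{u}_{ij}(t)>0$: bound $(e^{\mathbf{Q}t})_{ij}\le1$ and $(e^{\mathbf{Q}w})_{ii}(e^{\mathbf{Q}(t-w)})_{jj}\ge e^{-q_iw-q_j(t-w)}$, then evaluate the elementary integral and bound the intensities by $h/\epsilon$ (note the case $q_{ij}=0$ is automatic, since then $\bE_{\mathbf{Q}}[K_{ij}(T)|\mathbf{P}]=0$); probabilistically this is just the contribution of the event ``jump directly from $i$ to $j$, then hold at $j$''.

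For the upper bound in \eqref{Eq:Bounds on expected jumps} I would estimate the numerators and denominators of the $Nh^2$ summands separately. Every numerator $P^{u}_{sr}(t)\,q_{ij}\int_0^t(e^{\mathbf{Q}w})_{si}(e^{\mathbf{Q}(t-w)})_{jr}\,\mathrm{d}w$ is at most $q_{ij}t<t/\epsilon\le ht/\epsilon$, because all factors and $P^{u}_{sr}(t)$ lie in $[0,1]$ and $q_{ij}<1/\epsilon$. The real work is a \emph{uniform lower bound on the denominators} $(e^{\mathbf{Q}t})_{sr}$ over the pairs $(s,r)$ that occur (those with $P^{u}_{sr}(t)>0$). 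For $s=r$ this is once more the no-jump bound $(e^{\mathbf{Q}t})_{ss}\ge e^{-q_s t}\ge e^{-ht/\epsilon}$. For $s\neq r$ I would invoke the adjacent-mixing hypothesis of Assumption~\ref{Assump:Element Constraint}: since $q_{k,k\pm1}>\epsilon$ along the tri-diagonal band, any $r$ is reachable from any non-absorbing $s$ along a path $s=k_0\to\cdots\to k_m=r$ of length $m\le h-1$ every step rate of which exceeds $\epsilon$; expanding $e^{\mathbf{Q}t}$ in its Duhamel (Dyson) series and keeping only this single path, the product of step rates contributes a factor $>\epsilon^{m}$, the time-ordered sojourn integral contributes $t^{m}/m!$, and the exponential of minus (intensity $\times$ sojourn) — the intervals summing to $t$ and each intensity being $<h/\epsilon$ — contributes at least $e^{-ht/\epsilon}$, so
\[
(e^{\mathbf{Q}t})_{sr}\ \ge\ \epsilon^{m}\,\frac{t^{m}}{m!}\,e^{-ht/\epsilon}\,,
\]
which, bounding $m\le h$ (and, in the relevant regime $t\le1$, $t^{m}\ge t^{h}$), is of the order of the first term in the minimum in \eqref{Eq:Bounds on expected jumps}. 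Taking the smaller of the two denominator estimates and multiplying the per-term numerator bound by $Nh^2$ then yields \eqref{Eq:Bounds on expected jumps}. The only genuinely delicate step is this path-probability lower bound — the single place where the adjacent-mixing hypothesis is essential; everything else is a routine chain of inequalities once the Duhamel identity above is in hand.
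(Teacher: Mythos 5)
Your strategy is sound and genuinely different in flavour from the paper's: you work analytically, substituting the Van Loan/Duhamel representation of the block exponentials into \eqref{Eq:ExpectedHoldandJumpsTPM} and estimating each summand's numerator and denominator, whereas the paper argues probabilistically --- reducing the lower bounds to $\bP_{\mathbf{Q}}(K_{ij}(t)\ge 1\,|\,X(0)=i,X(t)=j)\ge q_{ij}/(-q_{ii})$ and a no-jump event, and the upper bound to a supremum-over-starting-states argument via the strong Markov property at the first hitting time of $i$, followed by $\bE_{\mathbf{Q}}[K_{ij}(t)|X(0)=i]\le -q_{ii}t$ over $\bP_{\mathbf{Q}}(X(t)=\nu|X(0)=i)$. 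The crux is the same in both: a path lower bound on the transition probability $(e^{\mathbf{Q}t})_{sr}$ using the adjacent-mixing part of Assumption \ref{Assump:Element Constraint}, and you have correctly identified this as the only delicate step. Your Duhamel identities and your derivation of \eqref{Eq:Bound on expected holding} are exactly right (they reproduce the stated constant $P^u_{ii}\,t\,e^{-ht/\epsilon}$), and your upper bound reproduces the stated right-hand side of \eqref{Eq:Bounds on expected jumps} up to the $1/m!$ and $t^m$ versus $t^h$ slack that you flag yourself; since the only downstream use (Theorem \ref{Thm:Single Point Convergence}) needs a $k$-independent finite upper bound, that slack is harmless.

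There is, however, one concrete discrepancy you gloss over: the lower bound on the expected jumps. Your retained summand gives
\begin{equation*}
\bE_{\mathbf{Q}}[K_{ij}(T)|\mathbf{P}]\ \ge\ P^{u}_{ij}\,q_{ij}\int_0^t e^{-q_i w}e^{-q_j(t-w)}\,\mathrm{d}w\ \ge\ P^{u}_{ij}\,q_{ij}\,t\,e^{-\max(q_i,q_j)t}\ \ge\ P^{u}_{ij}\,q_{ij}\,t\,e^{-ht/\epsilon},
\end{equation*}
and no amount of "evaluating the elementary integral and bounding the intensities by $h/\epsilon$" turns $t\,e^{-ht/\epsilon}$ into the factor $\epsilon/h$ appearing in \eqref{Eq:Bounds on expected jumps}: for, say, $q_i=q_j$ of order $h/\epsilon$ the integral equals $t e^{-q_i t}$, which is exponentially smaller than $\epsilon/h$. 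The paper obtains $\epsilon/h$ by a different mechanism, namely $\bP_{\mathbf{Q}}(K_{ij}(t)\ge1|X(0)=i,X(t)=j)\ge q_{ij}/(-q_{ii})\ge q_{ij}\epsilon/h$, which is not an estimate your single-path integral reproduces. So as written you prove a valid, strictly positive lower bound of a different (generally incomparable) form, not the displayed one. This is benign for the application --- the proof of Theorem \ref{Thm:Single Point Convergence} only needs some positive constant independent of the iteration --- but if you want the lemma as stated you should either switch to the paper's conditional-probability argument for this one inequality, or restate the lower bound with your constant $P^{u}_{ij}q_{ij}t e^{-ht/\epsilon}$.
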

To maintain the flow of the text we state immediately our main convergence result, and defer the proof of the Lemma to Appendix \ref{Sec:Proof of Expectation Lemma}. 
\begin{theorem}
\label{Thm:Single Point Convergence}
Under Assumption \ref{Assump:Element Constraint}, then, there exists a $\lambda>0$ such that for all EM iterations $k\in \bN$,
\begin{equation*}
R\big(\mathbf{Q}^{(k+1)};\mathbf{Q}^{(k)}\big)-R\big(\mathbf{Q}^{(k)};\mathbf{Q}^{(k)}\big) \ge \lambda \|\mathbf{Q}^{(k+1)}-\mathbf{Q}^{(k)}\|^2 \, ,
\end{equation*}
where $||\cdot||$ is the Euclidean norm.
\end{theorem}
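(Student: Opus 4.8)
The plan is to reduce the estimate to a one‑variable inequality applied entrywise. Fix an EM iteration $k$ and abbreviate $a_{ij}:=\bE_{\mathbf{Q}^{(k)}}[K_{ij}(T)\mid\mathbf{P}]$ and $b_i:=\bE_{\mathbf{Q}^{(k)}}[S_i(T)\mid\mathbf{P}]$, so that by \eqref{Eq:Max EM Function} the M‑step gives $q_{ij}^{(k+1)}=a_{ij}/b_i$ for $i\neq j$. Only off‑diagonal entries appear in \eqref{Eq:R function for CTMC}, so the increment factorises as a sum over pairs $i\neq j$ of $g_{ij}(q_{ij}^{(k+1)})-g_{ij}(q_{ij}^{(k)})$ where $g_{ij}(x)=a_{ij}\log x-b_i x$; since $q_{ij}^{(k+1)}$ maximises $g_{ij}$, writing $\tau_{ij}:=q_{ij}^{(k)}/q_{ij}^{(k+1)}$ collapses each summand to $a_{ij}\phi(\tau_{ij})$ with $\phi(\tau):=\tau-1-\log\tau\ge0$ (summands with $q_{ij}^{(k)}=q_{ij}^{(k+1)}=0$ vanish). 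On the other side I would first discard the diagonal: $q_{ii}=-\sum_{j\neq i}q_{ij}$ together with Cauchy--Schwarz gives $\|\mathbf{Q}^{(k+1)}-\mathbf{Q}^{(k)}\|^2\le h\sum_{i\neq j}(q_{ij}^{(k+1)}-q_{ij}^{(k)})^2$, and $(q_{ij}^{(k+1)}-q_{ij}^{(k)})^2=(a_{ij}/b_i)^2(1-\tau_{ij})^2$. Hence it suffices to find, for each pair $i\neq j$ that is not identically zero along the sequence, a constant $\lambda_{ij}>0$ independent of $k$ with $\phi(\tau_{ij})\ge\lambda_{ij}(a_{ij}/b_i^2)(1-\tau_{ij})^2$.

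The crux is a uniform‑in‑$k$ bound on $\tau_{ij}$. A pair with $P^u_{ij}=0$ for all $u$ stays at $q_{ij}^{(k)}=q_{ij}^{(k+1)}=0$ by the zero‑initialisation convention stated just before Assumption \ref{Assump:Element Constraint}; by that convention together with the M‑step recursion, every other off‑diagonal pair is either identically zero or strictly positive for all $k$ (immediate induction, using $a_{ij}\ge P^u_{ij}\epsilon q_{ij}^{(k)}/h$ from Lemma \ref{Lem:ExpectationBounds}), and only the strictly positive ones need attention. For such a pair fix $u$ with $P^u_{ij}>0$; the lower bound $a_{ij}\ge P^u_{ij}\epsilon q_{ij}^{(k)}/h$ of \eqref{Eq:Bounds on expected jumps} is proportional to $q_{ij}^{(k)}$, so in $\tau_{ij}=q_{ij}^{(k)}b_i/a_{ij}$ the factor $q_{ij}^{(k)}$ cancels: $\tau_{ij}\le hb_i/(P^u_{ij}\epsilon)$. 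Since $S_i(t)\le t$ path‑wise, reading off \eqref{Eq:ExpectedHoldandJumpsTPM} (each fraction there equals $\bE_{\mathbf{Q}^{(k)}}[S_i(t)\mid X(0)=s,X(t)=r]\le t$ while $\sum_r P^u_{sr}(t)=1$) gives $b_i\le hNt=hT$, whence $\tau_{ij}\le M_{ij}:=h^2T/(P^u_{ij}\epsilon)$, a constant free of $k$.

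Finally, $\phi$ is strictly convex, vanishes only at $\tau=1$, and $\psi(\tau):=\phi(\tau)/(\tau-1)^2$ (with $\psi(1):=1/2$) is continuous and strictly positive on $(0,\infty)$ with $\psi(\tau)\to\infty$ as $\tau\downarrow0$; hence $c_{ij}:=\inf_{(0,M_{ij}]}\psi>0$ and $\phi(\tau_{ij})\ge c_{ij}(1-\tau_{ij})^2$. Combining with the upper bound $a_{ij}\le M_K$ from the right‑hand side of \eqref{Eq:Bounds on expected jumps} and the holding‑time lower bound $b_i\ge\beta_i:=P^{u_i}_{ii}t\exp\{-ht/\epsilon\}>0$ from \eqref{Eq:Bound on expected holding} (for some $u_i$ with $P^{u_i}_{ii}>0$, automatic for diagonally dominant credit TPMs), one gets $a_{ij}\phi(\tau_{ij})\ge c_{ij}(b_i^2/a_{ij})(q_{ij}^{(k+1)}-q_{ij}^{(k)})^2\ge(c_{ij}\beta_i^2/M_K)(q_{ij}^{(k+1)}-q_{ij}^{(k)})^2$. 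Summing over the finitely many relevant pairs and taking $\lambda:=h^{-1}\min_{i\neq j}c_{ij}\beta_i^2/M_K>0$ proves the theorem. I expect the only real obstacle to be the ratio bound of the previous paragraph: the point is that the expected‑jumps lower bound degenerates at exactly the same linear rate as $q_{ij}^{(k)}\downarrow0$, so a single EM update cannot contract an entry by an unbounded factor; everything else is bookkeeping plus the elementary behaviour of $\phi$ at its minimum.
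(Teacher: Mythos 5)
Your proof is correct and follows essentially the same route as the paper's: reduce to an entrywise inequality, exploit the fact that the M-step update maximises the per-entry concave function $a_{ij}\log x - b_i x$ (your exact identity $a_{ij}\phi(\tau_{ij})$ is the paper's second-order Taylor remainder in disguise), and invoke Lemma \ref{Lem:ExpectationBounds} for uniformity in $k$. The only substantive differences are welcome refinements: you make explicit the factor-$h$ Cauchy--Schwarz step that absorbs the diagonal entries into the off-diagonal sum (which the paper asserts without detail), and you replace the paper's two-case analysis of the Lagrange point $Z$ by a single uniform upper bound on $\tau_{ij}$ obtained from the same cancellation of $q_{ij}^{(k)}$ in the jump lower bound.
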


\begin{proof}
	Writing out the difference in the $R$ terms we obtain,
	\begin{align*}
	\sum_{i=1}^{h} \sum_{j \neq i}\left[
	 \bE_{\mathbf{Q}^{(k)}}[K_{ij}(t)|\mathbf{P}] \big(\log(q^{(k+1)}_{ij})-\log(q^{(k)}_{ij})\big)-\bE_{\mathbf{Q}^{(k)}}[S_{i}(T)|\mathbf{P}]\big(q^{(k+1)}_{ij}- q^{(k)}_{ij}\big)
	\right] \, .
	\end{align*}	
Due to the form of the Euclidean norm squared and the function $R$, it is sufficient to show the inequality holds for all $i \neq j$. Namely, it is sufficient to show the existence of a $\lambda>0$ such that,
\begin{align}
\label{Eq:R Difference}
\nonumber
\bE_{\mathbf{Q}^{(k)}}[K_{ij}(T)|\mathbf{P}]\Big( & \log(q^{(k+1)}_{ij}) -\log(q^{(k)}_{ij})\Big)
\\
&
-\bE_{\mathbf{Q}^{(k)}}[S_{i}(T)|\mathbf{P}]\left(q^{(k+1)}_{ij}- q^{(k)}_{ij} \right) 
\ge
 \lambda \left(q^{(k+1)}_{ij}-q^{(k)}_{ij} \right)^2 \, ,
\end{align}
for all $i \neq j$. We tackle the log terms first. It is well known that we can express any $C^{\infty}$-function using Taylor expansion to a finite number of terms with some error (remainder) term. Moreover, the error term has a known form and hence, using an exact Taylor expansion to second order, there exists a $Z \in [\min(q^{(k)}_{ij},q^{(k+1)}_{ij}), \max(q^{(k)}_{ij},q^{(k+1)}_{ij})]$ such that,
\begin{align*}
\log(q^{(k+1)}_{ij})-\log(q^{(k)}_{ij})=\frac{-1}{q^{(k+1)}_{ij}}(q^{(k)}_{ij}-q^{(k+1)}_{ij}) +\frac{1}{2Z^2}(q^{(k)}_{ij}-q^{(k+1)}_{ij})^2 \, ,
\end{align*}
where we have expanded $q^{(k)}_{ij}$ around $q^{(k+1)}_{ij}$. Substituting \eqref{Eq:Max EM Function} into the LHS of \eqref{Eq:R Difference}, the condition simplifies to,
\begin{align*}
\frac{\bE_{\mathbf{Q}^{(k)}}[K_{ij}(T)|\mathbf{P}]}{2Z^2}(q^{(k)}_{ij}-q^{(k+1)}_{ij})^2 \ge \lambda (q^{(k+1)}_{ij}-q^{(k)}_{ij})^2 \, .
\end{align*}
In order to show this bound we need to get a handle on $Z$. Clearly, there are two options between iterations, either $q^{(k)}_{ij}>q^{(k+1)}_{ij}$ or $q^{(k)}_{ij} \le q^{(k+1)}_{ij}$. In the latter case we obtain,
\begin{align*}
\frac{\bE_{\mathbf{Q}^{(k)}}[K_{ij}(T)|\mathbf{P}]}{2Z^2}(q^{(k)}_{ij}-q^{(k+1)}_{ij})^2 \ge \frac{\bE_{\mathbf{Q}^{(k)}}[S_{i}(T)|\mathbf{P}]^2}{2\bE_{\mathbf{Q}^{(k)}}[K_{ij}(T)|\mathbf{P}]}(q^{(k)}_{ij}-q^{(k+1)}_{ij})^2 \, .
\end{align*}
Since we can element wise bound $\mathbf{Q}^{(k)}$, using Lemma \ref{Lem:ExpectationBounds} and Assumption \ref{Assump:Element Constraint} we can bound the term $\bE_{\mathbf{Q}^{(k)}}[K_{ij}(T)|\mathbf{P}]$ from above and $\bE_{\mathbf{Q}^{(k)}}[S_{i}(T)|\mathbf{P}]$ from below by constants (depending on $\epsilon$). Hence, we can choose a $\lambda$ independent of $k$ such that the condition is satisfied.

The second case  $q^{(k)}_{ij}>q^{(k+1)}_{ij}$, follows a similar argument. Again, we can set $Z$ as the larger of the two values, thus we obtain the following inequality,
\begin{align*}
\frac{\bE_{\mathbf{Q}^{(k)}}[K_{ij}(T)|\mathbf{P}]}{2Z^2}(q^{(k)}_{ij}-q^{(k+1)}_{ij})^2 \ge \frac{\bE_{\mathbf{Q}^{(k)}}[K_{ij}(T)|\mathbf{P}]}{2 \left(q_{ij}^{(k)}\right)^2}(q^{(k)}_{ij}-q^{(k+1)}_{ij})^2 \, .
\end{align*}
Using Lemma \ref{Lem:ExpectationBounds}, we can reduce this inequality to,
\begin{align*}
\frac{\bE_{\mathbf{Q}^{(k)}}[K_{ij}(T)|\mathbf{P}]}{2Z^2}(q^{(k)}_{ij}-q^{(k+1)}_{ij})^2 \ge \frac{P^{u}_{ij} \epsilon}{2 h q_{ij}^{(k)}}(q^{(k)}_{ij}-q^{(k+1)}_{ij})^2 \, .
\end{align*}
Since $P^{u}_{ij}>0$ and we can bound each $q_{ij}$ from above, again we choose a $\lambda$ independent of $k$.
\end{proof}

\subsubsection*{Starting value for the EM algorithm}
The final point to discuss, is the choice of the initial matrix $\mathbf{Q}$. It is useful from a computational point of view to start in a good place. Here we choose $\mathbf{Q}$ based on a generalization of the QOG algorithm (described in Section \ref{sec:DeterministicAlgorithms}) that allows for complex inputs. For each entry $q_{ij}$ we define the input as,
\begin{equation*}
q_{ij} \rightarrow \text{sign}\big(\text{Re} (q_{ij})\big)\times|q_{ij}| \, ,
\end{equation*}
where $|q_{ij}|$ is the magnitude of $q_{ij}$ and Re$(q_{ij})$, is the real component of $q_{ij}$. With the newly defined $\mathbf{Q}$ we apply the QOG algorithm. We take any zero entries not in the final row to be a small number ($10^{-5}$, say) unless there are zero observed transitions. This defines our initial choice of $\mathbf{Q}$. We define the EM algorithm steps as,
\begin{enumerate}
\item Take an initial intensity matrix $\mathbf{Q}$ and positive value $\epsilon$.
\item While the convergence criteria is not met and all entries of $\mathbf{Q}$ are within the boundaries
\begin{enumerate}[(1)]
\item E-step: calculate $\bE_{\mathbf{Q}}[K_{ij}(T)|\mathbf{P}]$ and $\bE_{\mathbf{Q}}[S_{i}(T)|\mathbf{P}]$.
\item M-step: set $q'_{ij}=\bE_{\mathbf{Q}}[K_{ij}(T)|\mathbf{P}]/\bE_{\mathbf{Q}}[S_{i}(T)|\mathbf{P}]$, for all $i \neq j$ and set $q_{ii}$ appropriately.
\item Set $\mathbf{Q}=\mathbf{Q}'$ (where $\mathbf{Q}'$ is the matrix of $q'$s) and return to E-step.
\end{enumerate}
\item End while and return $\mathbf{Q}$.
\end{enumerate}
This leads to the following theorem for convergence of the EM. 
\begin{theorem}[Convergence of the EM]
	\label{Thm:Convergence of the EM}
	Assume that our initial point is in the parameter space $\Lambda_\epsilon$: is a true generator and satisfies Assumption \ref{Assump:Element Constraint}. Then either the sequence of points $\{\mathbf{Q}^{(k)}\}_{k}$ converges to a single point in $\Lambda_\epsilon$ which is also a stationary point of the likelihood, or the entries go to the boundary (blow up or some tri-diagonal elements in an non-absorbing row go to zero).
\end{theorem}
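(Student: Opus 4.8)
The plan is to layer three ingredients already in place: the monotonicity of the EM likelihood iterates (together with Theorem~4 of \cite{BladtSorensen2005}), the quadratic lower bound of Theorem~\ref{Thm:Single Point Convergence}, and the forcing-function criterion of \cite{Wu1983}; and then to upgrade ``the consecutive increments vanish'' to ``the sequence converges to one point'' by an isolation argument for stationary points. I would first dispose of the trivial coordinates: by the remark preceding Assumption~\ref{Assump:Element Constraint}, any entry with $q^{(0)}_{sr}=0$ (those $s\neq r$ for which $P^{u}_{sr}(t)=0$ for all $u$) remains zero along every iteration and thus converges automatically, while for every remaining off-diagonal coordinate there is an observation window with $P^{u}_{ij}>0$ --- which is exactly the hypothesis under which Lemma~\ref{Lem:ExpectationBounds}, hence Theorem~\ref{Thm:Single Point Convergence}, applies.

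Assume we are not in the boundary alternative, i.e.\ the iterates $\mathbf{Q}^{(k)}$ stay in the interior of $\Lambda_\epsilon$, which is bounded by Assumption~\ref{Assump:Element Constraint}. Along EM iterates the observed-data likelihood $g(\mathbf{P};\cdot)$ is non-decreasing, and being bounded above on a bounded set it converges to some $\ell^\ast<\infty$; by Theorem~4 of \cite{BladtSorensen2005} every limit point of $\{\mathbf{Q}^{(k)}\}$ is either a stationary point of $g(\mathbf{P};\cdot)$ with value $\ell^\ast$ or a point of $\partial\Lambda_\epsilon$. Now Theorem~\ref{Thm:Single Point Convergence} states precisely that $\sigma(t)=\lambda t^2$, with $\lambda>0$ independent of $k$, is a forcing function satisfying $R(\mathbf{Q}^{(k+1)};\mathbf{Q}^{(k)})-R(\mathbf{Q}^{(k)};\mathbf{Q}^{(k)})\ge\sigma(\|\mathbf{Q}^{(k+1)}-\mathbf{Q}^{(k)}\|)$, which is the hypothesis of the \cite{Wu1983} criterion quoted before Lemma~\ref{Lem:ExpectationBounds}; therefore $\|\mathbf{Q}^{(k+1)}-\mathbf{Q}^{(k)}\|\to 0$. (Equivalently, summing the fundamental EM inequality $\log g(\mathbf{P};\mathbf{Q}^{(k+1)})-\log g(\mathbf{P};\mathbf{Q}^{(k)})\ge R(\mathbf{Q}^{(k+1)};\mathbf{Q}^{(k)})-R(\mathbf{Q}^{(k)};\mathbf{Q}^{(k)})$ against Theorem~\ref{Thm:Single Point Convergence} telescopes to $\sum_{k}\|\mathbf{Q}^{(k+1)}-\mathbf{Q}^{(k)}\|^2<\infty$.) A bounded sequence whose consecutive differences tend to zero has a limit set $\Omega$ that is compact and connected, and by the previous observation $\Omega$ lies in the set of stationary points of $g(\mathbf{P};\cdot)$ at level $\ell^\ast$, unless it meets $\partial\Lambda_\epsilon$.

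It remains to see $\Omega$ is a single point. The EM map $\mathbf{Q}\mapsto\mathbf{Q}'$ given coordinatewise by \eqref{Eq:Max EM Function}--\eqref{Eq:ExpectedHoldandJumpsTPM} is continuous on $\Lambda_\epsilon$: the numerators are entries of matrix exponentials of affine functions of $\mathbf{Q}$, hence smooth, and the denominators $\bE_{\mathbf{Q}}[S_i(T)|\mathbf{P}]$ are bounded below by a positive constant by \eqref{Eq:Bound on expected holding}. Since $\|\mathbf{Q}^{(k+1)}-\mathbf{Q}^{(k)}\|\to 0$, every point of $\Omega$ is a fixed point of this continuous map, hence (standard characterization of EM fixed points, cf.\ \cite{McLachlanKrishnan2007}) a stationary point of $g(\mathbf{P};\cdot)$. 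Invoking identifiability of the generator --- which, as recorded in Section~\ref{Sec:Estimation of Error}, holds under the diagonal dominance typical of credit TPMs --- one concludes that such stationary points are isolated, so the connected set $\Omega$ must be a singleton $\{\mathbf{Q}^\ast\}$, giving $\mathbf{Q}^{(k)}\to\mathbf{Q}^\ast\in\Lambda_\epsilon$. Collecting the two cases yields the stated dichotomy, the boundary alternative being exactly $\Omega\cap\partial\Lambda_\epsilon\neq\varnothing$, i.e.\ some $q_{ij}$ reaching the cap $1/\epsilon$ (``blow up'') or some non-absorbing tri-diagonal entry reaching $\epsilon$.

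I expect the isolation/identifiability step to be the genuine obstacle: turning diagonal dominance into discreteness of the stationary set of the (matrix-exponential-based) likelihood, and ruling out limit sets that creep along $\partial\Lambda_\epsilon$, is the only part not handed to us directly by Theorem~\ref{Thm:Single Point Convergence}, Lemma~\ref{Lem:ExpectationBounds} and \cite{Wu1983}; everything else is bookkeeping (monotonicity, compactness, and the topology of limit sets of sequences with vanishing increments).
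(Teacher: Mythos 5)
Your skeleton is exactly the paper's proof: the paper disposes of Theorem~\ref{Thm:Convergence of the EM} in a single sentence, ``follows directly from Theorem 4 in \cite{BladtSorensen2005} and our Theorem~\ref{Thm:Single Point Convergence}'', i.e.\ precisely the combination you lay out (monotone likelihood, limit points are stationary or on $\partial\Lambda_\epsilon$ by Bladt--Sorensen, vanishing increments via the forcing-function criterion of \cite{Wu1983} fed by Theorem~\ref{Thm:Single Point Convergence}). Your telescoping parenthetical, which replaces the citation of Wu's Theorem~6 by summing the fundamental EM inequality to get $\sum_k\|\mathbf{Q}^{(k+1)}-\mathbf{Q}^{(k)}\|^2<\infty$, is a nice self-contained shortcut but is not a different route.

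The one place you go beyond the paper --- the passage from ``connected compact limit set of stationary points at level $\ell^\ast$'' to ``singleton'' --- is a genuine issue, and you are right to flag it: vanishing (even square-summable) increments do not make a bounded sequence Cauchy, and Wu's machinery only delivers single-point convergence when the stationary set at level $\ell^\ast$ is discrete. The paper does not supply this discreteness; note, however, that in the ``Parameter convergence criteria'' paragraph the paper explicitly glosses ``the iterates converge'' as ``$\|\mathbf{Q}^{(k+1)}-\mathbf{Q}^{(k)}\|\to 0$'', so under the authors' own (weaker) reading of the conclusion your Wu/telescoping step already closes the argument and no isolation step is needed. If one insists on the literal ``single point'' statement, your proposed patch is the weak link: identifiability means injectivity of $\mathbf{Q}\mapsto e^{\mathbf{Q}t}$ on $\Lambda_\epsilon$, which says nothing about the critical points of the (non-convex) likelihood being isolated --- a continuum of non-maximal stationary points is not excluded by injectivity, and ``diagonal dominance of the TPM'' is only quoted in Section~\ref{Sec:Estimation of Error} as a sufficient condition for identifiability, not for discreteness of the stationary set. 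So: same approach as the paper, correctly executed up to a gap that the paper itself leaves open; your attempted repair of that gap does not go through as stated.
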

A proof of Theorem \ref{Thm:Convergence of the EM} follows directly from Theorem 4 in \cite{BladtSorensen2005} and our Theorem \ref{Thm:Single Point Convergence}.

\begin{remark}[The unique maximizer of the Likelihood]
\label{rem:LikelihoodUniqueMaximizer}
The natural question one may ask is does this stronger form of convergence imply convergence to the global maximum? The problem of existence and uniqueness of maximum likelihoods in this setting is a very challenging problem with a long history. \cite{BladtSorensen2005} give a wonderful overview on the subject, Theorem 1 in \cite{BladtSorensen2005} also provides results on existence and uniqueness of the maximum. Unfortunately, one cannot say more than this, if one can derive conditions under which a unique maximum existed (for non-embeddable TPMs) then the above convergence result is sufficient to conclude the EM will converge to the MLE.

During the final revision of this manuscript, \cite{PfeufferMoestelFischer2017} came to our attention. There, the authors propose two algorithms to mitigate the effect of the EM's possible convergence towards a local but not necessarily the global maximum of the likelihood function (no proofs are given). Our Theorem \ref{Thm:Single Point Convergence} is handy in this context as it shows that once the EM lands ``near'' the global maximum the iteration will converge to it. 
\end{remark}

\subsection{Variance Estimation} 
\label{Sec:Estimation of Error}

In this section we derive an expression for the Hessian of the likelihood. We use a result in \cite{Oakes1999} and follow \cite{BladtSorensen2009}, however, unlike \cite{BladtSorensen2009}, we provide a closed form expression for the Hessian. This result eliminates the stability problems observed in the numerical simulation case when the entries in $\mathbf{Q}$ are small. 
The Hessian provides a way to estimate the standard error of the maximum likelihood estimates and further allows us to assess the nature of the converged stationary point (this is further discussed in Section \ref{Sec:Connection to Global Max}).


We point the reader to \cite{BladtSorensen2005}*{Theorem 1} for results on the existence and uniqueness of maximum likelihood estimators with respect to this problem. Further, for discussions on consistency and asymptotic normality related to this problem one should consult \cite{KremerWeissbach2013}, \cite{KremerWeissbach2014}. \cite{KremerWeissbach2013}, provide sufficient conditions for consistency, the key assumption relies on so-called model \emph{identifiability}\footnote{In our setting a model is identifiable if there are no two generators $\mathbf{Q} \neq \mathbf{Q}'$ such that $\exp\{\mathbf{Q}t\}=\exp\{\mathbf{Q}'t\}$.}. \cite{KremerWeissbach2013} prove \emph{identifiability} under conditions which are too restrictive for our purpose; \cites{BladtSorensen2005, DehayYao2007} discusses the problem of \emph{identifiability} in detail. From \cite{Cuthbert1973}, \cite{BladtSorensen2005} for the model to be identifiable it is sufficient (though very crude) to have $\min_{i}(\exp\{\mathbf{Q}t\})_{ii}> 1/2$, \cite{Culver1966} gives a requirement for general matrices based on the eigenvalues which one can always aposteriori verify after a $Q$ is deduced. The crucial assumption in \cite{KremerWeissbach2014} to obtain asymptotic normality, is that the Hessian must be invertible at the true value, we can of course verify invertibility a posteriori.


We recall a result from \cite{Oakes1999} for calculating the Hessian of the likelihood.
\begin{lemma} \label{Lem: EM Hessian}
The second derivative of the likelihood with parameter $\Psi$ and observed information $y$ is related to the EM function $R$ by
\begin{equation*}
\frac{\partial^{2}L(\Psi;y)}{\partial \Psi^{2}}= \left[ \frac{\partial^{2}R(\Psi' ;\Psi)}{\partial \Psi'^{2}} +\frac{\partial^{2}R(\Psi';\Psi)}{\partial \Psi' \partial\Psi}\right]_{\Psi'=\Psi} \, .
\end{equation*}
\end{lemma}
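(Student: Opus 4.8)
The plan is to identify the complete-data log-likelihood, compute the function $R(\Psi';\Psi)$ explicitly, and then differentiate twice, exploiting the special structure that $R$ is \emph{linear} in the conditional expectations of the sufficient statistics. Recall from \eqref{Eq:R function for CTMC} that for our problem
\[
R(\mathbf{Q}';\mathbf{Q}) = \sum_{i} \Big[ \sum_{j\neq i} \log(q'_{ij})\, \bE_{\mathbf{Q}}[K_{ij}(t)|y] - \bE_{\mathbf{Q}}[S_i(t)|y]\sum_{j\neq i} q'_{ij}\Big],
\]
so in the abstract EM notation $R(\Psi';\Psi) = A(\Psi') + B(\Psi')\cdot T(\Psi)$ where $T(\Psi)$ collects the conditional expectations of the complete-data sufficient statistics and $A,B$ do not depend on $\Psi$. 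First I would write down the identity that underlies everything: the incomplete-data log-likelihood satisfies $L(\Psi;y) = R(\Psi';\Psi)|_{\Psi'=\Psi} - H(\Psi;\Psi)$ in the usual decomposition, but more directly I would use the classical fact (Fisher's identity / the missing-information principle, as in \cite{Oakes1999}) that the score of the incomplete-data likelihood equals the conditional expectation of the complete-data score, i.e.
\[
\frac{\partial L(\Psi;y)}{\partial\Psi} = \frac{\partial R(\Psi';\Psi)}{\partial\Psi'}\Big|_{\Psi'=\Psi}.
\]

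Given that identity, the proof is a one-line differentiation: differentiate both sides once more with respect to $\Psi$. On the left this produces $\partial^2 L(\Psi;y)/\partial\Psi^2$. On the right, $\partial R(\Psi';\Psi)/\partial\Psi'|_{\Psi'=\Psi}$ is a composition — it is a function of $\Psi$ both through the explicit first slot (after setting $\Psi'=\Psi$) and through the second slot — so by the chain rule its $\Psi$-derivative is the sum of the partial with respect to the first argument and the partial with respect to the second argument, both evaluated on the diagonal $\Psi'=\Psi$. That is precisely
\[
\frac{\partial^2 L(\Psi;y)}{\partial\Psi^2} = \left[\frac{\partial^2 R(\Psi';\Psi)}{\partial\Psi'^2} + \frac{\partial^2 R(\Psi';\Psi)}{\partial\Psi'\,\partial\Psi}\right]_{\Psi'=\Psi},
\]
which is the claim. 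The steps in order: (1) state Fisher's identity for the score; (2) justify it by differentiating $g(y;\Psi)=\int f(x;\Psi)\,dx$ under the integral sign and dividing by $g$, recognising the result as $\bE_\Psi[\partial_\Psi \log f(x;\Psi)\,|\,y]$ and noting this equals $\partial_{\Psi'}R(\Psi';\Psi)|_{\Psi'=\Psi}$; (3) differentiate the identity once more in $\Psi$, applying the chain rule to the right-hand side which depends on $\Psi$ through two channels; (4) collect terms.

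The main obstacle is not the algebra but the regularity justification in step (2)–(3): one must be allowed to interchange differentiation and the integral over the missing data $\mathcal{X}(y)$ twice, and the second-slot derivative in step (3) must be handled carefully because differentiating $\bE_\Psi[\,\cdot\,|y]$ in $\Psi$ brings down a factor of the complete-data score against the conditional density, which is exactly what reproduces the cross term $\partial^2 R/\partial\Psi'\partial\Psi$. In our concrete generator setting these interchanges are unproblematic: $f(x;\Psi)>0$ almost everywhere and is smooth in $\Psi$ (it is an exponential of a polynomial in the $q_{ij}$), and on the constrained space $\Lambda_\epsilon$ all quantities are bounded, so dominated convergence applies. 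Since this lemma is quoted verbatim from \cite{Oakes1999}, I would present the short chain-rule derivation and refer to \cite{Oakes1999} for the analytic details rather than reproving the interchange theorems.
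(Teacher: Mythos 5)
Your proposal is correct: the paper does not prove this lemma at all (it is simply recalled from \cite{Oakes1999}), and your argument --- Fisher's identity for the score, $\partial_\Psi L(\Psi;y) = \partial_{\Psi'}R(\Psi';\Psi)|_{\Psi'=\Psi}$, followed by one more $\Psi$-differentiation with the chain rule picking up both slots of $R$ --- is precisely the derivation given in Oakes' original paper. The only cosmetic point is that $L(\Psi;y)$ here must be read as the \emph{log}-likelihood for the identity to hold, which is consistent with how the paper uses it (it plugs in $R$ built from $\log f$), and your regularity remarks (positivity of $f$, differentiation under the integral, boundedness on $\Lambda_\epsilon$) cover what needs to be checked.
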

Injecting \eqref{Eq:R function for CTMC} in the above we obtain,
\begin{equation}
\label{Eq:DoublePrimeDerivative}
\dfrac{\partial^{2}R(\mathbf{Q}^{\prime};\mathbf{Q})}{\partial q'_{\alpha \beta} \partial q'_{\mu \nu}} = \frac{-1}{q'^{2}_{\mu \nu}} \bE_{\mathbf{Q}}[K_{\mu \nu}(t)|y] \delta_{\alpha \mu}\delta_{\beta \nu} \, , 
\end{equation}
\begin{equation}
\label{Eq:SinglePrimeDerivative}
\dfrac{\partial^{2}R(\mathbf{Q}^{\prime};\mathbf{Q})}{\partial q_{\alpha \beta} \partial q'_{\mu \nu}} = \frac{1}{q'_{\mu \nu}} \dfrac{\partial}{\partial q_{\alpha \beta}}\bE_{\mathbf{Q}}[K_{\mu \nu}(t)|y] - \dfrac{\partial}{\partial q_{\alpha \beta}} \bE_{\mathbf{Q}}[S_{\mu}(t)|y] \, , 
\end{equation}
where $\delta_{ab}$ is the Kronecker delta. From our previous work, \eqref{Eq:DoublePrimeDerivative} is easy to obtain, however, \eqref{Eq:SinglePrimeDerivative} involves derivatives of the expected jumps and holding times and is thus challenging. \cite{BladtSorensen2009} opt for a simple numerical scheme to compute these derivatives and found unstable results, although the authors do remark that more sophisticated numerical schemes could yield improved results at greater computational expense.

It is worth noting we have made no comment on the allowed values of $\alpha, \beta, \mu$ and $\nu$, other than the clear restriction that they belong to $ \{1,\dots,h\}$. Let us now state the following definition. 
\begin{definition}[Allowed pairs]
\label{def:AllowedPairs}
We say that the pair $\alpha, \beta$ is allowed if $\alpha\neq \beta$ (not in the  diagonal) and $q_{\alpha \beta}$ is not converging to zero under the EM algorithm. 
\end{definition}
For practical applications, one can imagine the set of allowed values, as the set of $\alpha, \beta$ such that $q_{\alpha \beta}>\epsilon$, where $\epsilon$ is some cut-off point ($10^{-8}$, say). The reason we must exclude small parameters is, this analysis only holds in the large data limit, since we do not have an infinite amount of data we cannot for certain rule out some jump, however, if $q_{\alpha \beta}$ is converging to zero, it implies that this parameter is either zero or extremely close to zero and therefore we can bound it above by a small number. Moreover, from a mathematical point of view a parameter which does tend to zero (or even becomes zero) lies on the boundary, where the notion of differentiability is not clear. Therefore, we can think of the ``allowed pairs'' as the variables when solving the problem in a restricted parameter space. We now present the following theorem.

\begin{theorem}
\label{theo:SecondDerivativesR}
	Let $\mu,\nu,\alpha,\beta \in \{1,\dots,h\}$, and $\mathbf{Q}$, $\mathbf{Q}'$ be two generator matrices ($\in \Lambda_{\epsilon}$ for some $\epsilon>0$). For any two allowed pairs $\alpha, \beta$ and $\mu, \nu$, the derivative in \eqref{Eq:SinglePrimeDerivative} is, 
		\begin{align*}
	\dfrac{\partial^{2}R(\mathbf{Q}^{\prime};\mathbf{Q})}{\partial q_{\alpha \beta} \partial q'_{\mu \nu}} 
	=
	&
	\sum_{s=1}^{n-1}\frac{1}{q'_{\mu \nu}}
	     \biggl[-(e^{\mathbf{Q}(t_{s+1}-t_{s})})^{-2}_{y_{s},y_{s+1}} 
			         \left(e^{\mathbf{C}^{(\alpha \beta)}_{\eta}(t_{s+1}-t_{s})} \right)_{y_{s}, h+y_{s+1}}
							  (e^{\mathbf{C}^{(\mu \nu)}_{\gamma}(t_{s+1}-t_{s})})_{y_{s}, h+y_{s+1}} 
	\notag  
	\\
	&
	\qquad\qquad \qquad 
	            + (e^{\mathbf{Q}(t_{s+1}-t_{s})})^{-1}_{y_{s},y_{s+1}}
							\left(e^{\mathbf{C}_{\psi}^{(\alpha \beta,\mu \nu)}(t_{s+1}-t_{s})} \right)_{y_{s}, 3h+y_{s+1}} 
			 \biggr] 
	\notag 
	\\
	&
	\qquad
	-\biggl[-(e^{\mathbf{Q}(t_{s+1}-t_{s})})^{-2}_{y_{s},y_{s+1}}
	         \left(e^{\mathbf{C}^{(\alpha \beta)}_{\eta}(t_{s+1}-t_{s})} \right)_{y_{s}, h+y_{s+1}}
					 (e^{\mathbf{C}^{(\mu )}_{\phi}(t_{s+1}-t_{s})})_{y_{s}, h+y_{s+1}} 
	\notag 
	\\
	&
	\qquad\qquad \qquad 
	          +\, (e^{\mathbf{Q}(t_{s+1}-t_{s})})^{-1}_{y_{s},y_{s+1}}
						\left(e^{\mathbf{C}_{\omega}^{(\alpha \beta,\mu)}(t_{s+1}-t_{s})} \right)_{y_{s}, 3h+y_{s+1}}
						\biggr] \, ,
	\end{align*}
	where the $2h$-by-$2h$ matrices, $\mathbf{C}^{(\alpha \beta)}_{\gamma}, \mathbf{C}^{(\alpha)}_{\phi}, \mathbf{C}^{(\alpha \beta)}_{\eta}$, are defined as,
	\begin{equation*}
	\mathbf{C}^{(\alpha \beta)}_{\gamma}=
	\left[ {\begin{array}{cc}
		\mathbf{Q} & q_{\alpha \beta}\mathbf{e}_{\alpha} \mathbf{e}_{\beta}^{\intercal} \\
		0 & \mathbf{Q}
		\end{array} } \right] ~ , ~ 
		\mathbf{C}^{(\alpha)}_{\phi}=
	\left[ {\begin{array}{cc}
		\mathbf{Q} & \mathbf{e}_{\alpha} \mathbf{e}_{\alpha}^{\intercal} \\
		0 & \mathbf{Q}
		\end{array} } \right] ~ , ~
	\mathbf{C}^{(\alpha \beta)}_{\eta}=
	\left[ {\begin{array}{cc}
		\mathbf{Q} & \mathbf{e}_{\alpha} \mathbf{e}_{\beta}^{\intercal} -\mathbf{e}_{\alpha}\mathbf{e}_{\alpha}^{\intercal}\\
		0 & \mathbf{Q}
		\end{array} } \right] \, ,
	\end{equation*}
	and the $4h$-by-$4h$ matrices $\mathbf{C}_{\psi}^{(\alpha \beta,\mu \nu)},\mathbf{C}_{\omega}^{(\alpha \beta,\mu)}$ are defined 
	\begin{equation*}
	 \mathbf{C}_{\psi}^{(\alpha \beta,\mu \nu)}
	=
	 \left[ {\begin{array}{cc}
	 	\mathbf{C}_{\gamma}^{(\mu \nu)} & \frac{\partial \mathbf{C}_{\gamma}^{(\mu \nu)}}{\partial q_{\alpha \beta}} \\
	 	0 & \mathbf{C}_{\gamma}^{(\mu \nu)}
	 	\end{array} } \right] ~ , ~ 
		\mathbf{C}_{\omega}^{(\alpha \beta,\mu)}
		=
	 \left[ {\begin{array}{cc}
	 	\mathbf{C}_{\phi}^{(\mu)} & \frac{\partial \mathbf{C}_{\phi}^{(\mu)}}{\partial q_{\alpha \beta}} \\
	 	0 & \mathbf{C}_{\phi}^{(\mu)}
	 	\end{array} } \right] \, .
	\end{equation*}
\end{theorem}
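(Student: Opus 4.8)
The plan is to differentiate the closed-form expressions for $\bE_{\mathbf{Q}}[K_{\mu\nu}(t)|y]$ and $\bE_{\mathbf{Q}}[S_{\mu}(t)|y]$ from Proposition \ref{Prop:Expected Values} with respect to the generator entry $q_{\alpha\beta}$, plugging the result into \eqref{Eq:SinglePrimeDerivative}. Fix an observation window of length $\tau := t_{s+1}-t_s$ and write $M := \mathbf{C}^{(\mu\nu)}_{\gamma}$. Each summand of $\bE_{\mathbf{Q}}[K_{\mu\nu}(t)|y]$ is a quotient $\big(e^{M\tau}\big)_{y_s,h+y_{s+1}}\big/\big(e^{\mathbf{Q}\tau}\big)_{y_s,y_{s+1}}$, so by the quotient rule I need (a) the derivative of the denominator $\big(e^{\mathbf{Q}\tau}\big)_{y_s,y_{s+1}}$ in $q_{\alpha\beta}$ and (b) the derivative of the numerator $\big(e^{M\tau}\big)_{y_s,h+y_{s+1}}$ in $q_{\alpha\beta}$. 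For both I would invoke the standard matrix-exponential derivative (Van Loan's trick, already cited in the excerpt): for a matrix $A(\theta)$ depending on a scalar $\theta$ one has
\begin{align*}
\exp\left\{\left[ {\begin{array}{cc} A & \partial_\theta A \\ 0 & A \end{array}} \right]\tau\right\}
=
\left[ {\begin{array}{cc} e^{A\tau} & \int_0^\tau e^{A(\tau-u)}(\partial_\theta A)\,e^{Au}\,\ud u \\ 0 & e^{A\tau} \end{array}} \right],
\end{align*}
so the upper-right block reads off $\partial_\theta (e^{A\tau})$.

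For the denominator, $\partial_{q_{\alpha\beta}}\mathbf{Q} = \mathbf{e}_\alpha\mathbf{e}_\beta^{\intercal} - \mathbf{e}_\alpha\mathbf{e}_\alpha^{\intercal}$ because raising $q_{\alpha\beta}$ by a unit forces $q_{\alpha\alpha}$ down by a unit to keep rows summing to zero; this is exactly the off-diagonal block of $\mathbf{C}^{(\alpha\beta)}_{\eta}$, so $\partial_{q_{\alpha\beta}}\big(e^{\mathbf{Q}\tau}\big)_{y_s,y_{s+1}} = \big(e^{\mathbf{C}^{(\alpha\beta)}_{\eta}\tau}\big)_{y_s,h+y_{s+1}}$, and $\partial_{q_{\alpha\beta}}\big(e^{\mathbf{Q}\tau}\big)^{-1}_{y_s,y_{s+1}} = -\big(e^{\mathbf{Q}\tau}\big)^{-2}_{y_s,y_{s+1}}\big(e^{\mathbf{C}^{(\alpha\beta)}_{\eta}\tau}\big)_{y_s,h+y_{s+1}}$, which produces the first term in each of the two bracketed groups. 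For the numerator I note $\partial_{q_{\alpha\beta}} M = \partial_{q_{\alpha\beta}}\mathbf{C}^{(\mu\nu)}_{\gamma}$ lives in the block structure of $\mathbf{C}^{(\mu\nu)}_{\gamma}$, and applying the Van Loan identity with $A = \mathbf{C}^{(\mu\nu)}_{\gamma}$ and $\partial_\theta A = \partial_{q_{\alpha\beta}}\mathbf{C}^{(\mu\nu)}_{\gamma}$ builds precisely the $4h$-by-$4h$ matrix $\mathbf{C}_\psi^{(\alpha\beta,\mu\nu)}$; the entry $\big(e^{\mathbf{C}_\psi^{(\alpha\beta,\mu\nu)}\tau}\big)_{y_s,3h+y_{s+1}}$ is the upper-right block evaluated at the required coordinates. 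Combining the two via the quotient rule gives the first bracketed group in the statement. The holding-time term $\bE_{\mathbf{Q}}[S_{\mu}(t)|y]$ is handled identically with $\mathbf{C}^{(\mu)}_{\phi}$ in place of $\mathbf{C}^{(\mu\nu)}_{\gamma}$, yielding $\mathbf{C}_\omega^{(\alpha\beta,\mu)}$ and the second bracketed group; the overall minus sign in \eqref{Eq:SinglePrimeDerivative} accounts for the sign in front.

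The main obstacle is bookkeeping rather than any deep analytic point: I must verify that differentiating a $2h$-by-$2h$ exponential with respect to an entry that appears in \emph{both} diagonal blocks (since $\mathbf{Q}$ sits twice in $\mathbf{C}^{(\mu\nu)}_{\gamma}$) and possibly in the off-diagonal block (when $(\alpha,\beta)=(\mu,\nu)$) is correctly captured by the single $4h$-by-$4h$ matrix $\mathbf{C}_\psi^{(\alpha\beta,\mu\nu)}$, i.e. that the Van Loan nesting genuinely accommodates a full matrix derivative $\partial_{q_{\alpha\beta}}\mathbf{C}^{(\mu\nu)}_{\gamma}$ rather than just a rank-one perturbation, and that no diagonal cross-terms are double-counted. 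The restriction to \emph{allowed pairs} (Definition \ref{def:AllowedPairs}) is what legitimizes the differentiation: it keeps us in the interior of $\Lambda_\epsilon$ where $q'_{\mu\nu}>\epsilon>0$ so the quotients and the $1/q'_{\mu\nu}$ factors are smooth, and it is here that one invokes $\mathbf{Q},\mathbf{Q}'\in\Lambda_\epsilon$. Once the block-matrix identities are checked coordinate by coordinate, assembling the four pieces is routine.
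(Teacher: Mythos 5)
Your proposal is correct and follows essentially the same route as the paper: quotient rule on the closed forms of Proposition \ref{Prop:Expected Values}, the identity $\partial_{q_{\alpha\beta}}\mathbf{Q}=\mathbf{e}_{\alpha}\mathbf{e}_{\beta}^{\intercal}-\mathbf{e}_{\alpha}\mathbf{e}_{\alpha}^{\intercal}$, and the Van Loan block-exponential construction to express the matrix-exponential derivatives (the paper merely splits your single step into the Wilcox/Tsai--Chan integral representation followed by Van Loan's evaluation of that integral, and your ``main obstacle'' is a non-issue since Van Loan's identity holds for an arbitrary off-diagonal block, not only rank-one ones). No gaps.
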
 

The proof of this uses similar techniques to Proposition \ref{Prop:Expected Values} along with differentiation properties of matrix-exponentials, and is deferred to Appendix \ref{Sec:Proof of second derivatives}.

\begin{remark}
\label{remark:extraSizeOfMatrix}
In the above representation for the derivative of $R$, we use subscripts of the form $h+y_{s+1}$ and $3h+y_{s+1}$, this is simply a consequence of the result in \cite{VanLoan1978}. Namely, we are not interested in all the entries of the matrix, only an $h$-by-$h$ segment. We therefore need to adjust the indexing to only take elements at this specific segment.
\end{remark}

Using Theorem \ref{theo:SecondDerivativesR} and Lemma \ref{Lem: EM Hessian}, we can write the elements of the Hessian corresponding to the $q_{\alpha \beta} q_{\mu \nu}$ derivative as, 
\begin{align*}
\frac{\partial^{2}L(\mathbf{Q} ;y)}{\partial q_{\alpha \beta} \partial q_{\mu \nu}}=
&
\sum_{s=1}^{n-1}\frac{-1}{q^{2}_{\mu \nu}}(e^{\mathbf{Q}(t_{s+1}-t_{s})})^{-1}_{y_{s},y_{s+1}}(e^{\mathbf{C}^{(\mu \nu)}_{\gamma}(t_{s+1}-t_{s})})_{y_{s}, h+y_{s+1}} \delta_{\alpha \mu}\delta_{\beta \nu} 
\\
&
\qquad 
+
\frac{1}{q_{\mu \nu}}\biggl[-(e^{\mathbf{Q}(t_{s+1}-t_{s})})^{-2}_{y_{s},y_{s+1}}\left(e^{\mathbf{C}^{(\alpha \beta)}_{\eta}(t_{s+1}-t_{s})} \right)_{y_{s}, h+y_{s+1}}(e^{\mathbf{C}^{(\mu \nu)}_{\gamma}(t_{s+1}-t_{s})})_{y_{s}, h+y_{s+1}} 
\\ 
& 
\qquad\qquad \qquad 
+ 
(e^{\mathbf{Q}(t_{s+1}-t_{s})})^{-1}_{y_{s},y_{s+1}}\left(e^{\mathbf{C}_{\psi}^{(\alpha \beta,\mu \nu)}(t_{s+1}-t_{s})} \right)_{y_{s}, 3h+y_{s+1}} \biggr] 
\\
&
\qquad 
-\biggl[-(e^{\mathbf{Q}(t_{s+1}-t_{s})})^{-2}_{y_{s},y_{s+1}}\left(e^{\mathbf{C}^{(\alpha \beta)}_{\eta}(t_{s+1}-t_{s})} \right)_{y_{s}, h+y_{s+1}}(e^{\mathbf{C}^{(\mu )}_{\phi}(t_{s+1}-t_{s})})_{y_{s}, h+y_{s+1}} 
\\
& 
\qquad \qquad \qquad 
+ (e^{\mathbf{Q}(t_{s+1}-t_{s})})^{-1}_{y_{s},y_{s+1}}\left(e^{\mathbf{C}_{\omega}^{(\alpha \beta,\mu)}(t_{s+1}-t_{s})} \right)_{y_{s}, 3h+y_{s+1}}\biggr] \, .
\end{align*}
A similar transform to \eqref{Eq:ExpectedHoldandJumpsTPM} can be applied here to obtain the Hessian from TPMs. When using the result to estimate the error, some knowledge of the number of companies per rating is required.
\subsubsection{Computation of the Error}

Due to the Hessian only being defined for parameters $q_{\alpha \beta}>0$, some parameters are not included in the Hessian, thus the matrix is smaller than $(h-1)^{2}$-\emph{by}-$(h-1)^{2}$. We compute the Hessian as follows,
\begin{itemize}
	\item Let $N_{a}$ be the number of allowed points in the estimated $\mathbf{Q}$ returned from the EM algorithm.
	\item Define an $N_{a}$-\emph{by}-$2$ matrix $\mathbf{V}_\mathbf{Q}$ as the matrix which records the allowed (in the sense of Definition \ref{def:AllowedPairs}) components of $\mathbf{Q}$. The $ij^{th}$ component of the Hessian is then the differential,
	\begin{equation*}
	\frac{\partial^{2}}{\partial q_{\mathbf{V}_{\mathbf{Q}}(i,1) \mathbf{V}_{\mathbf{Q}}(i,2)} \partial q_{\mathbf{V}_{\mathbf{Q}}(j,1)\mathbf{V}_{\mathbf{Q}}(j,2)}} \, . \notag
	\end{equation*}
	\item If we denote the Hessian by the $N_{a}$-\emph{by}-$N_a$ matrix $\mathbf{H}(\cdot)$, then the information matrix is given by $-\mathbf{H}(\cdot)$. The estimated variance of the allowed parameter $q_{ab}$ is then the $i^{th}$ diagonal element of  $-\mathbf{H}(\cdot)^{-1}$, where $\mathbf{V}_{\mathbf{Q}}(i,1)=a$ and $\mathbf{V}_{\mathbf{Q}}(i,2)=b$.
	\item Following standard statistics, the normal based $95\%$ confidence interval for $q_{ab}$ is $q_{ab} \pm 1.96\sqrt{Var(q_{ab})}$.
\end{itemize}


\section{Competitor Algorithms}
\label{sec:CompetitorAlgorithms}

\subsection{Deterministic algorithms}
\label{sec:DeterministicAlgorithms}
\subsubsection*{Diagonal Adjustment (DA)}
The first method to discuss is diagonal adjustment, see \cite{IsraelRosenthalWei2001}. Given a TPM, $\mathbf{P}$, one calculates the matrix logarithm directly. However, due to the embeddability problem, the logarithm may not be a valid generator. To solve this problem \cite{IsraelRosenthalWei2001} suggest setting for $i \neq j$,
\begin{align*}
q^{DA}_{ij} =
\begin{cases}
(\log(\mathbf{P}))_{ij} \, , \, \, \, &\text{if }  (\log(\mathbf{P}))_{ij} \geq 0 \, , \\
0 \, ,   &\text{otherwise.}
\end{cases}
\end{align*}
and adjusting (re-balancing) the diagonal element correspondingly, $q^{DA}_{ii}= \sum_{j \neq i} -q_{ij}$ for $i\in\{1,\cdots,h\}$. 
Hence forcing the corresponding matrix $\mathbf{Q}^{DA}$ to satisfy the properties of a generator.

\subsubsection*{Weighted Adjustment (WA)}
Weighted adjustment is also suggested in \cite{IsraelRosenthalWei2001}. It follows diagonal adjustment except, one re-balances across the entire row. Again, calculate the logarithm of the TPM to find $q$'s, then compute
\begin{align*}
G_{i}=|q_{ii}|+\sum_{j\neq i}\max(q_{ij},0) \, ,\hspace{4mm} B_{i}=\sum_{j \neq i} \max(-q_{ij},0) \, .
\end{align*}
The entries corresponding to weighted adjustment are defined as,
\begin{align*}
q^{WA}_{ij}=
\begin{cases}
0 \, \, \, \, &\text{ if } i \neq j \text{ and } q_{ij}<0\, , \\
q_{ij}-B_{i}|q_{ij}|/G_i   &\text{ otherwise if } G_i>0 \, , \\
q_{ij}  & \text{ otherwise if } G_{i}=0 \, .
\end{cases}
\end{align*}

\subsubsection*{Quasi-Optimisation of the Generator (QOG)}
The above two methods are unfortunately not optimal in any sense. The QOG (Quasi-Optimisation of the Generator), method suggested in \cite{KreininSidelnikova2001} relies on optimisation and is therefore an improvement on the diagonal and weighted adjustment methods. QOG involve solves the minimization problem 
$\min_{\mathbf{Q} \in \mathcal{Q}}\| \mathbf{Q} - \log(\mathbf{P})\|$,
where $\mathcal{Q}$ is the space of stable generator matrices and $|| \cdot ||$ is the Euclidean norm. Further, \cite{KreininSidelnikova2001} provide an efficient algorithm to obtain $\mathbf{Q}$.

\subsection{Statistical algorithm: Markov Chain Monte Carlo}
An alternative statistical algorithm one can adopt is MCMC (Markov Chain Monte Carlo). For the reader's convenience we have included a summary of MCMC in Appendix \ref{Sec:MCMC Overview}. It should be noted that all MCMC algorithms presented here use a so-called auxiliary variable technique, by introducing the fully observed Markov chain, $X$ as a random variable. Moreover, the prior for $\textbf{Q}$ is $\Gamma(\alpha,1/\beta)$ (shape and scale), which is conjugate for the likelihood of a CTMC.

\subsubsection{Gibbs Sampling - Bladt \& Sorensen 2005}
\label{Sec:BS05algorithm}

To simulate the Markov process, $X$, \cite{BladtSorensen2005} suggest a rejection sampling method. As is stated in \cite{BladtSorensen2005}, such a sampling method runs into difficulties when considering low probability events since the rejection rate will be high (e.g. default for high rated bonds). The MCMC algorithm is summarised as follows, \cite{Inamura2006},
 \begin{enumerate}
 	\item Construct an initial generator $\mathbf{Q}$ using the prior distribution ($\Gamma(\alpha_{ij},1/\beta_{i})$).
 	\item For some specified number of runs
 	\begin{enumerate}[(1)]
 		\item Simulate $X$ for each observation from $Y$, with law according to $\mathbf{Q}$.
 		\item Calculate the quantities of interest $K$ and $S$, from $X$.
 		\item Construct a new $\mathbf{Q}$ by drawing samples from $\Gamma(K_{ij}(t)+\alpha_{ij}, 1/(S_{i}(t)+ \beta_{i}))$.
 		\item Save this $\mathbf{Q}$ and use it in the next simulation.
 	\end{enumerate}
 	\item From the list of $\mathbf{Q}$s, drop some proportion (burn in), then take the mean of the remainder.
 \end{enumerate}
 The issues with this method are the choice of $\alpha$ and $\beta$ and the number of runs required before we know that the sample has converged (burn in). Both of these are critical in obtaining accurate answers from MCMC and although \cite{BladtSorensen2005} suggested taking $\alpha_{ij}$ and $\beta_i$ to be 1, they observe MCMC overestimating entries in the generator when true entries were small. Furthermore, here we are required to use the TPM indirectly through inferring company transitions. That is, we consider $M$ companies in each rating and define the number of companies to make the transition $i$ to $j$ as $M\times P_{ij}$, this of course need not be an integer, but we can always normalise the entries. The reason we cannot use the TPM directly as we did in the EM algorithm is due to the fact that MCMC becomes very sensitive to the values in the prior. The burn in for MCMC will be of little concern to us here as will become apparent when carrying out analysis on the algorithms.

\subsubsection{Importance Sampling - Bladt \& Sorensen 2009}
\label{Sec:BS09algorithm}

  \cite{BladtSorensen2009} address some of the issues in \cite{BladtSorensen2005} by running the same algorithm as previous combined with an importance sampling scheme based on the Metropolis-Hastings algorithm (in its essence a single component Metropolis-Hastings algorithm). The proposal distribution suggested is a Markov chain with generator given by the `neutral matrix' $\mathbf{Q}^{*}$, which takes the following form,
 \begin{align*}
 	\mathbf{Q}^{*} 
 		=\frac{1}{W}\big(\mathbf{1}_{h}-\textbf{I}_{h} -h \textbf{I}_{h}\big) \, , 		
 	\end{align*}
 where $\mathbf{1}_{h}$ and $\textbf{I}_{h}$ is the $h$-by-$h$ matrix of ones and identity matrix respectively and $W$ is a scaling factor set to match the intensities in the true generator matrix $\mathbf{Q}$. \cite{BladtSorensen2009} note that if entries in $\mathbf{Q}$ are known to be zero, then the corresponding element in $\mathbf{Q}^{*}$ should also be set to zero and the diagonal modified accordingly. Thus transitions rarely produced by the generated Markov chain will occur much more frequently under $\mathbf{Q}^{*}$. Thus we have solved (at least partially) one of the problems faced in MCMC.
 The importance sampling weights for a chain $X$ are,
 \begin{align*}
 w(X)= \frac{L(\mathbf{Q};X)}{L(\mathbf{Q}^{*};X)} \, ,
 \end{align*}
where $L$ is the CTMC likelihood.
 For the priors, \cite{BladtSorensen2009} do not suggest any significant improvement on their earlier work. The authors use $\alpha=1$ and $\beta=5$, which they claim gives better results than the suggestion in \cite{BladtSorensen2005}. However, it still provides a problem when dealing with entries in $\mathbf{Q}$ which are close to zero. The problem stems from the fact that very little information is known (rarely observed) for certain transitions, therefore the output for these entries is mostly based on our prior beliefs.
 
\subsubsection{MCMC Mode Algorithm}
\label{sec:MCMCMODEalgorithmInamura2006}

\cite{Inamura2006} presented an alternative algorithm to the original MCMC algorithm presented in \cite{BladtSorensen2005} whereby one calculates the mode rather than the mean. The author claims that this gives extremely accurate results and outperforms other algorithms. The reasoning presented is that the standard MCMC overestimates in the small probability cases due to the gamma distribution being `skewed', therefore the mode is a better estimate. \cite{Inamura2006} approximates the mode of $\{q_{ij}^{(k)}\}$ by kernel smoothing over the estimates (after taking the log transform to ensure all results are positive).

\begin{remark}[Other MCMC based estimators] Many extensions and different MCMC methods to solve this problem are possible (e.g. priors as hyperparameters or sequential Monte Carlo techniques). Here, we consider less complex MCMC algorithms which already set the tone for a comparative study.
\end{remark}



\section{Benchmarking the Algorithms} 
\label{Sec:Comparing Algorithms}

Due to the diversity of investments bank's make, one cannot assess an algorithms' performance with a single test. With this in mind we consider a host of tests on different portfolios and matrices. The computations were carried out on a Dell PowerEdge R430 with four Intel Xeon E5-2680 processors. During the review process of our work we found \cite{Pfeuffer2017} with an R-implementation of some of the algorithms covered in the previous section. The performance tests of \cite{Pfeuffer2017} are a just subset of those we present next and independently confirm (where there is overlap) our findings, in particular the timing of the MCMC algorithms versus the EM. A version of our algorithms will appear in the mentioned R-package (see Remark \ref{rem:pfeuffer-ctmcd}).

The first observation we make is, transition matrices can vary substantially depending on the financial climate (see \cite{ChristensenEtAl2004} and \cite{Cantor2004}). Therefore we consider two different generator matrices which can be thought of as the generator in financial stress and the generator in financial calm. In order to keep these matrices `reasonable' we start off with the generator given in \cite{ChristensenEtAl2004} built using a large amount of data (see also \cite{Inamura2006}) and consider a generator which has in general higher transition rates and one with lower transition rates. Through considering more than one generator this provides a more detailed assessment of the performance of the various algorithms than other comparative reviews, such as \cite{Inamura2006}. The generators we consider are shown in Table~\ref{Table:Unstable Generator} and Table~\ref{Table:Stable Generator}. We observe that Table~\ref{Table:Unstable Generator}, has more non-zero entries and larger entries than that of Table~\ref{Table:Stable Generator}. 

\begin{table}[!h]
\centering
	\footnotesize
	\begin{tabular}{ c | c | c | c | c | c | c | c | c |}
		~& AAA & AA & A & BBB & BB & B & C & D \\ \hline
		AAA & -0.146371 & 0.085881 & 0.04549 & 0.015 & 0 & 0 & 0 & 0 \\ \hline
		AA & 0.018506 & -0.166337 & 0.114831 & 0.033 & 0 & 0 & 0 & 0 \\ \hline
		A & 0.0276 & 0.047012 & -0.198043 & 0.09043 & 0.023001 & 0.01 & 0 &0 \\ \hline
		BBB & 0.011469 & 0.010734 & 0.088133 & -0.243046 & 0.077569 & 0.044407 & 0.010734 & 0 \\ \hline
		BB & 0 & 0 & 0.019159 & 0.184699 & -0.323077 & 0.106166 & 0.013053 & 0 \\ \hline
		B & 0 & 0 & 0.012280 & 0.034822 & 0.093489 & -0.296265 & 0.134273 & 0.022401 \\ \hline
		C & 0 & 0 & 0 & 0 & 0.02 & 0.140209 & -0.600939 & 0.440730 \\ \hline
		D & 0 & 0 & 0 & 0 & 0 & 0 & 0 & 0 
		\\ \hline
	\end{tabular}
	\caption{True unstable generator}
	\label{Table:Unstable Generator}
\end{table}
\begin{table} [!h]
\centering
	\footnotesize
	\begin{tabular}{ c | c | c | c | c | c | c | c | c |}
		~& AAA & AA & A & BBB & BB & B & C & D \\ \hline
		AAA & -0.061371 & 0.055881 & 0.005490 & 0 & 0 & 0 & 0 & 0 \\ \hline
		AA & 0.013506 & -0.096337 & 0.074831 & 0.008 & 0 & 0 & 0 & 0 \\ \hline
		A & 0 & 0.037012 & -0.097442 & 0.06043 & 0 & 0 & 0 &0 \\ \hline
		BBB & 0 & 0.000734 & 0.058133 & -0.120843 & 0.057569 & 0.004407 & 0 & 0 \\ \hline
		BB & 0 & 0 & 0.009159 & 0.104699 & -0.190024 & 0.076166 & 0 & 0 \\ \hline
		B & 0 & 0 & 0 & 0.024822 & 0.083489 & -0.174985 & 0.064273 & 0.002401 \\ \hline
		C & 0 & 0 & 0 & 0 & 0 & 0.080209 & -0.300939 & 0.220730 \\ \hline
		D & 0 & 0 & 0 & 0 & 0 & 0 & 0 & 0 \\ \hline
	\end{tabular}
	\caption{True stable generator}
	\label{Table:Stable Generator}
\end{table}

Throughout the analysis we refer to the multiple MCMC algorithms introduced in Section \ref{sec:CompetitorAlgorithms} which we label in the following way: \verb|MCMC BS05| is \cite{BladtSorensen2005}'s algorithm of Section \ref{Sec:BS05algorithm}; \verb|MCMC BS09| is \cite{BladtSorensen2009}'s algorithm of Section \ref{Sec:BS09algorithm}; and \verb|MCMC Mode| is \cite{Inamura2006}'s algorithm in Section \ref{sec:MCMCMODEalgorithmInamura2006}.

\subsection{Sample Size Inference}
The first test we consider is an extension to a test in \cite{Inamura2006}, where the author considers a true underlying generator and masks it by using it to simulate TPMs, which we view as observations, then applying the algorithms to each observation. The key point here is, \cite{Inamura2006} only simulates $100$ companies per rating and hence the outputted TPM is non-embeddable (has $0$ entries for accessible jumps). This is an extremely useful test because it provides a fair and intuitive way to assess the performance of each algorithm, however, \cite{Inamura2006} only considers one true generator and only one level of information i.e. $100$ companies per rating. Alongside the two different generators we also consider a range of companies per rating to determine its effect on convergence for each algorithm. Furthermore, \cite{Inamura2006} uses seven years worth of data, although one would likely have access to multiple years worth of TPM data, it is unlikely that we would have seven years of transitions from the same generator. Hence we consider four years, which is more consistent with time homogeneity estimates for generators (see \cite{ChristensenEtAl2004}).
We calculate our estimates for the generator as follows.
\begin{enumerate}
\setlength\itemsep{0em}
	\item Take a range of obligors per rating, $[100,200, 300, 500, 750, 1000]$ and $10$ random seeds.
	\item For each true generator simulate four one year TPMs for each seed and for each obligor per rating. Hence we have (\#Years$\times$\#Obligors categories$\times$\#Random Seeds$\times$\#True generators), simulated TPMs.
	\item For each set of four simulated TPM we estimate the generator for each algorithm. MCMC may take a long time to run, therefore we consider the time taken to carry out the first 10 runs and the total time taken, if these exceed 180 or 18 000 seconds respectively, the algorithm is deemed to be too slow and no result is returned. Note, MCMC algorithms use 3000 runs with a burn in of 300. This is smaller than \cite{Inamura2006} for example, however, \cite{Inamura2006} shows apparent convergence to the stationary distribution in a small number of iterations and we observe a similar result.
	\item Therefore, for each algorithm we have (\# Obligors categories $\times$ \# Random Seeds $\times$ \# True generators) estimated generators to analyze.
\end{enumerate} 
We analyze the estimated generators by considering, distance between estimated generator and true generator in Euclidean norm and difference in one year probability of default. All results presented have been obtained by analyzing the estimated generator for each seed, then averaging. This gives a better picture of the average performance.
\begin{table}[htb]
\centering
		\small
		\begin{tabular}{ c | c | c | c}
Algorithms & Deterministic & EM  & MCMC\\ \hline
Time (seconds)  & $< 1$ & $\sim 10$ &$ \sim 10^3$ to $\sim 10^4$\\ \hline
		\end{tabular}
		\caption{Order of time taken to execute the various algorithms. Note that MCMC also depends on the level of information i.e. obligors in each rating. We also note that BS 09 algorithm is faster than the other MCMC algorithms but still takes $10^4$ seconds in the case of 1000 obligors per rating.}
		\label{Table:Running Times}
\end{table}

\subsubsection{Convergence in Euclidean Norm}
Our goal in this analysis is to consider the empirical rate of improvement of each algorithm as our `information' about the true generator increases. For each obligor category we calculate the natural log of the distance (measured by the Euclidean norm) between the estimate and the true. The results are shown in figures ~\ref{EuclideanAlgorithmConvergenceUnstable}~ and~ \ref{EuclideanAlgorithmConvergenceStable}.

\begin{figure}[!ht] 
	\centering
	\includegraphics[width=13.5cm]{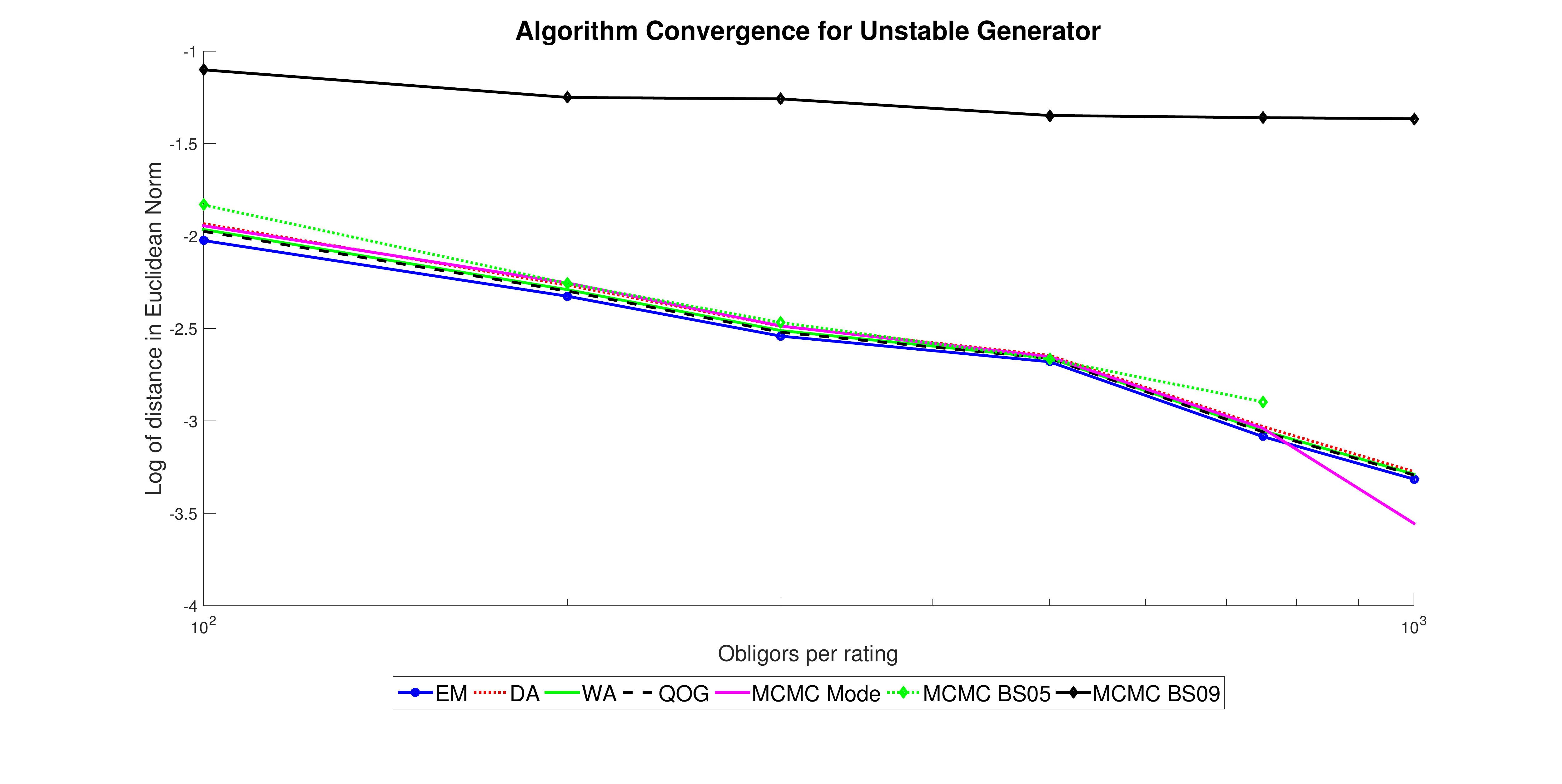}
	\vspace{-1cm}
	\caption{Showing the log of the error for each algorithm as a function of obligors per rating.}
	\label{EuclideanAlgorithmConvergenceUnstable}
\end{figure} 

\begin{figure}[!ht]
	\centering
	\includegraphics[width=13.5cm]{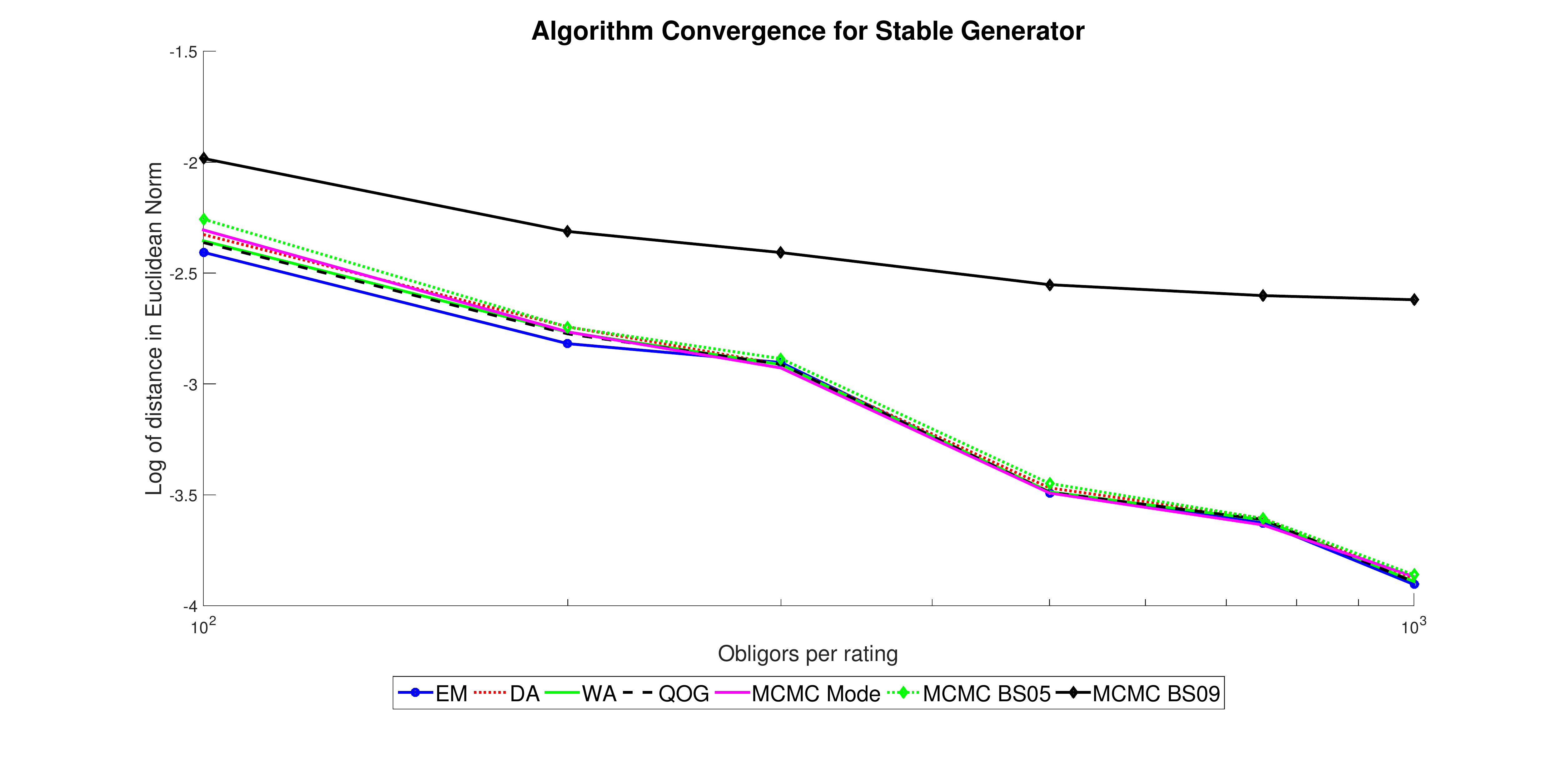}
	\vspace{-1cm}
	\caption{Showing the log of the error for each algorithm as a function of obligors per rating.}
	\label{EuclideanAlgorithmConvergenceStable}
\end{figure} 

Note the $x$-axis is on a logarithmic scale. We observe similarities between the two figures, most notably in the case of low information all algorithms have very similar convergence results, however, as we increase the information there is substantial variation in improvement, \verb|MCMC BS09| algorithm does not improve as well as the other algorithms. Missing points stem from an algorithm failing the acceptance times. 

The MCMC algorithms have a potentially increased error due to the Monte Carlo simulation, lowering it requires a larger computational expense to the  already most expensive algorithm being tested here. For the \cite{BladtSorensen2009} algorithm, the neutral matrix approximation may give poor mixing, thus the additional error.


\subsubsection{Error in Probability of Default}
Although overall error is important, it does not provide details on the small probability scale. This is extremely important in banking, since estimation of the probability of default is crucial. Using the same estimated generators as previous we calculate the corresponding one year TPM, that is, we calculate $\exp\{\mathbf{Q}_{\text{estimate}}\}$ (using the {\tt {expm}} function in MATLAB) for each seed then take the average. The averaged TPM default probabilities are compared to the true ones. To keep the numbers in the comparisons meaningful we plot the log of the relative error, where we define,
\begin{equation*}
\text{Relative Error}=\frac{|\text{PD}_{\text{estimate}}-\text{PD}_{\text{true}}|}{\text{PD}_{\text{true}}} \, .
\end{equation*} 
The results of which are given in Figures \ref{RelativeDefaultErrorUnstableAAABBB} and  \ref{RelativeDefaultErrorUnstableBBCCC}. 

\begin{figure}[!ht] 
	\centering
	\includegraphics[width=17cm]{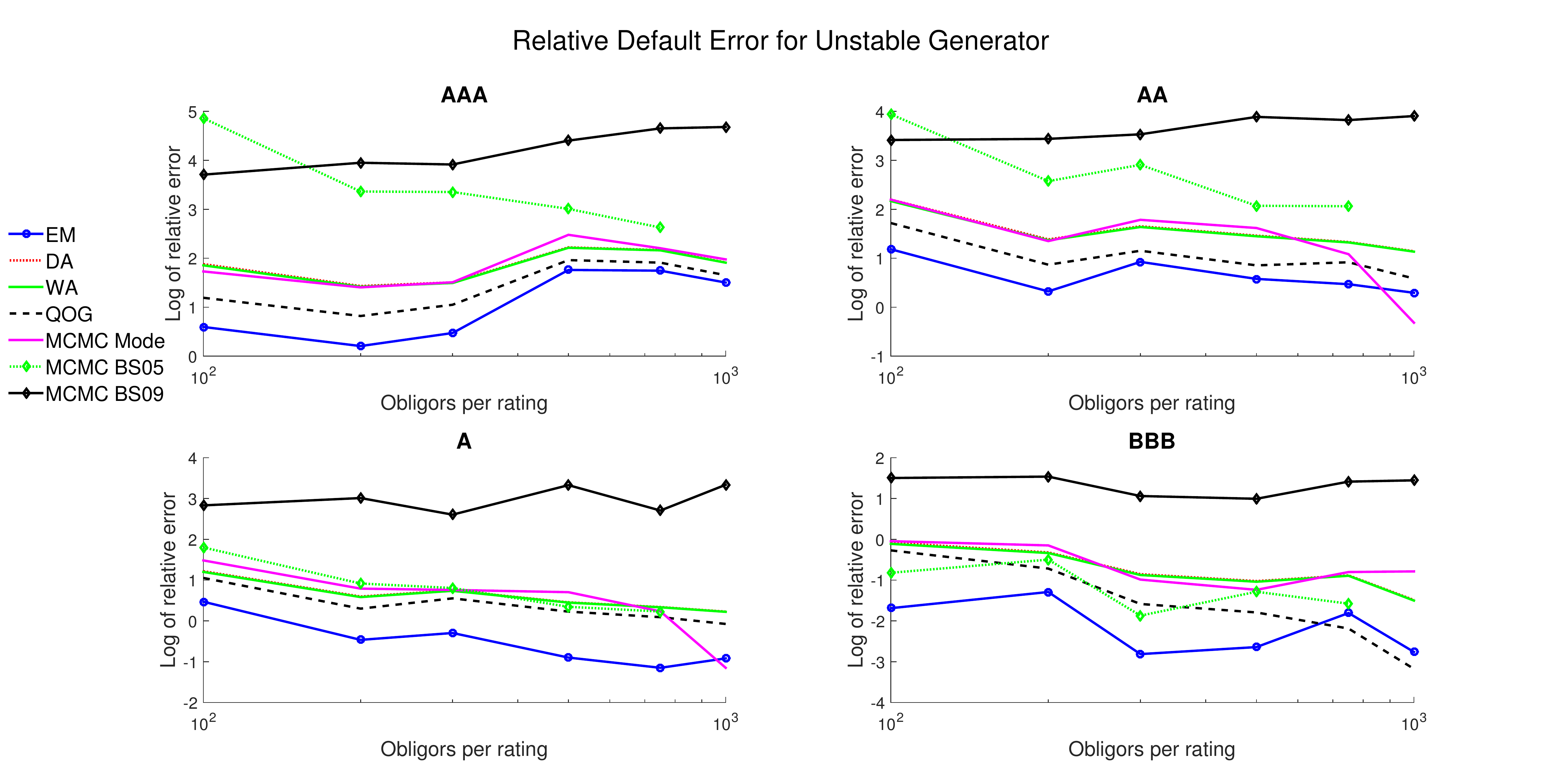}
	\vspace{-1cm}
	\caption{Showing the log of the relative default error for each algorithm as a function of obligors per rating.}
	\label{RelativeDefaultErrorUnstableAAABBB}
\end{figure} 

\begin{figure}[!ht] 
	\centering
	\includegraphics[width=17cm]{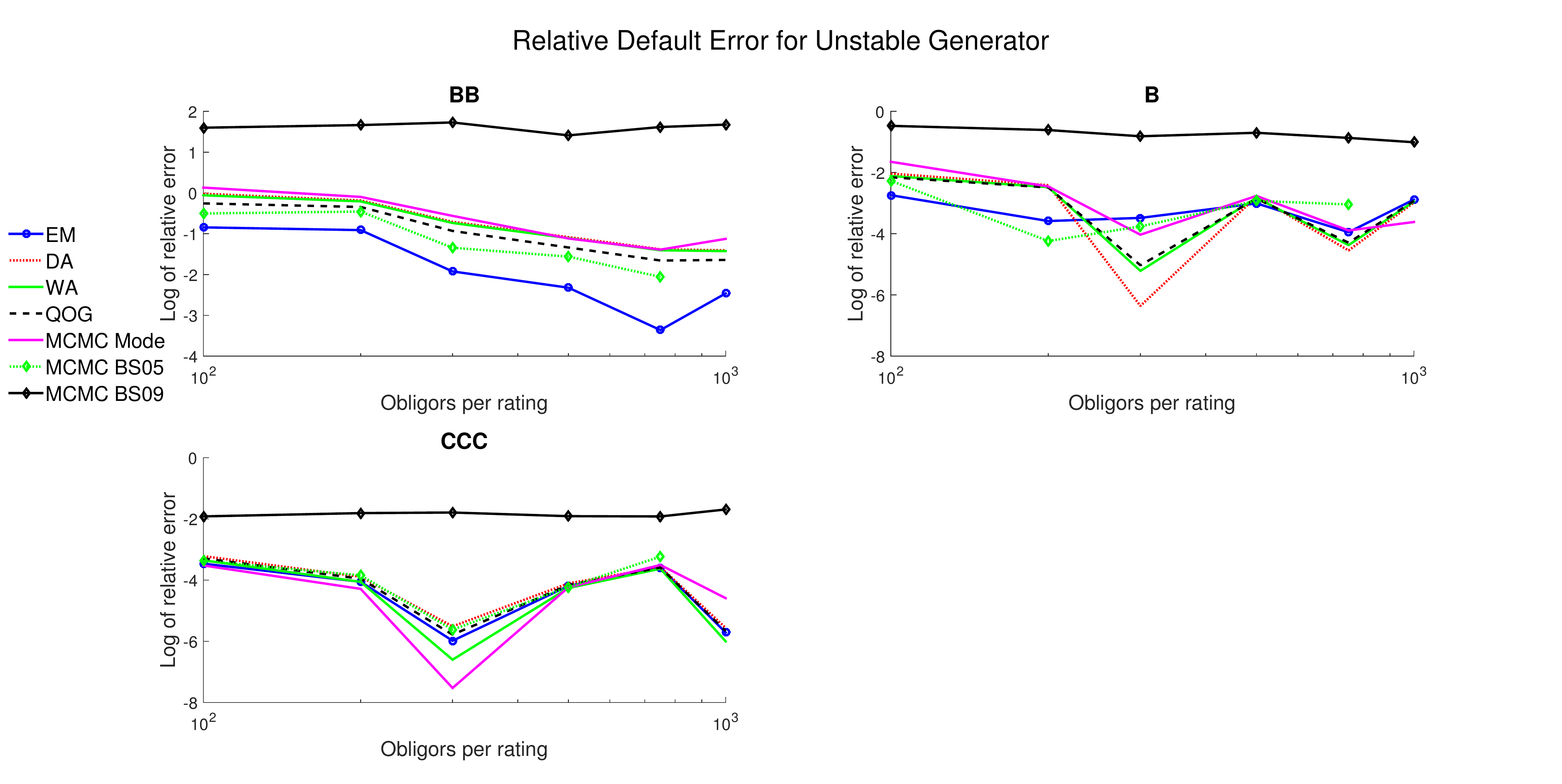}
	\vspace{-1cm}
	\caption{Showing the log of the relative default error for each algorithm as a function of obligors per rating.}
	\label{RelativeDefaultErrorUnstableBBCCC}
\end{figure} 

%

Unlike the overall error, there appears to be far greater volatility in the error estimation w.r.t. the probability of default. Moreover, there appears to be no general downward trend in error for the investment grade ratings. A likely cause for this is, even with 1000 companies there are still no/few investment grade defaults. Of the algorithms \verb|MCMC BS09| performs the worst. The EM algorithm though has consistently one of the smallest errors and is clearly the best in the investment grades.We have only shown the results for the unstable generator, the stable generator was similar.


\subsection{Time Dependent Probability of Default}
A key question that has not been addressed in the literature is how do the probabilities of default change in time among the several algorithms. For ths we only consider EM, QOG, WA and the MCMC Mode algorithm from \cite{Inamura2006}, since these algorithms gave the best probability of default estimates. 

We consider a non-embeddable TPM, then estimate the generator matrix $\mathbf{Q}$, from $\mathbf{Q}$ we can easily calculate the probability of a company with some initial rating defaulting in time $t>0$. The goal here is to assess how that probability changes with time. The TPM is given in Table \ref{Tab:TPM for Default Prob}, for the MCMC algorithm we took this table to be generated with 250 obligors per rating. 

\begin{table} [H]
	\centering
			\footnotesize
		\begin{tabular}{ c | c | c | c | c | c | c | c | c |}	
			~& AAA & AA & A & BBB & BB & B & C & D \\ \hline
			AAA & 0.8824 &  0.1176 & 0 & 0 & 0 & 0 & 0 & 0 \\ \hline
			AA & 0.0064 & 0.9111 & 0.0813 & 0.0008 & 0.0001 & 0 & 0.0003 & 0 \\ \hline
			A & 0.0003 & 0.0559 & 0.8836 & 0.0499 & 0.0079 & 0.0015 & 0.0002 & 0.0007 \\ \hline
			BBB & 0 & 0.0116 & 0.1585 & 0.7640 & 0.0528 & 0.0070 & 0 & 0.0061 \\ \hline
			BB & 0 & 0 & 0.0213 & 0.1193 & 0.7746 & 0.0623 & 0.0099  & 0.0127 \\ \hline
			B & 0 & 0 & 0.0062 & 0.0199 & 0.1669 & 0.7017 & 0.0730 & 0.0322 \\ \hline
			C & 0 & 0 & 0 & 0 & 0.0417 & 0.2083 & 0.4544 & 0.2956 \\ \hline
			D & 0 & 0 & 0 & 0 & 0 & 0 & 0 & 1 \\ \hline
		\end{tabular}
	\caption{Observed TPM used to estimate the generators in probability of default plots.}
	\label{Tab:TPM for Default Prob} 
\end{table}

\begin{figure}[!ht] 
	\centering
	\includegraphics[width=17cm]{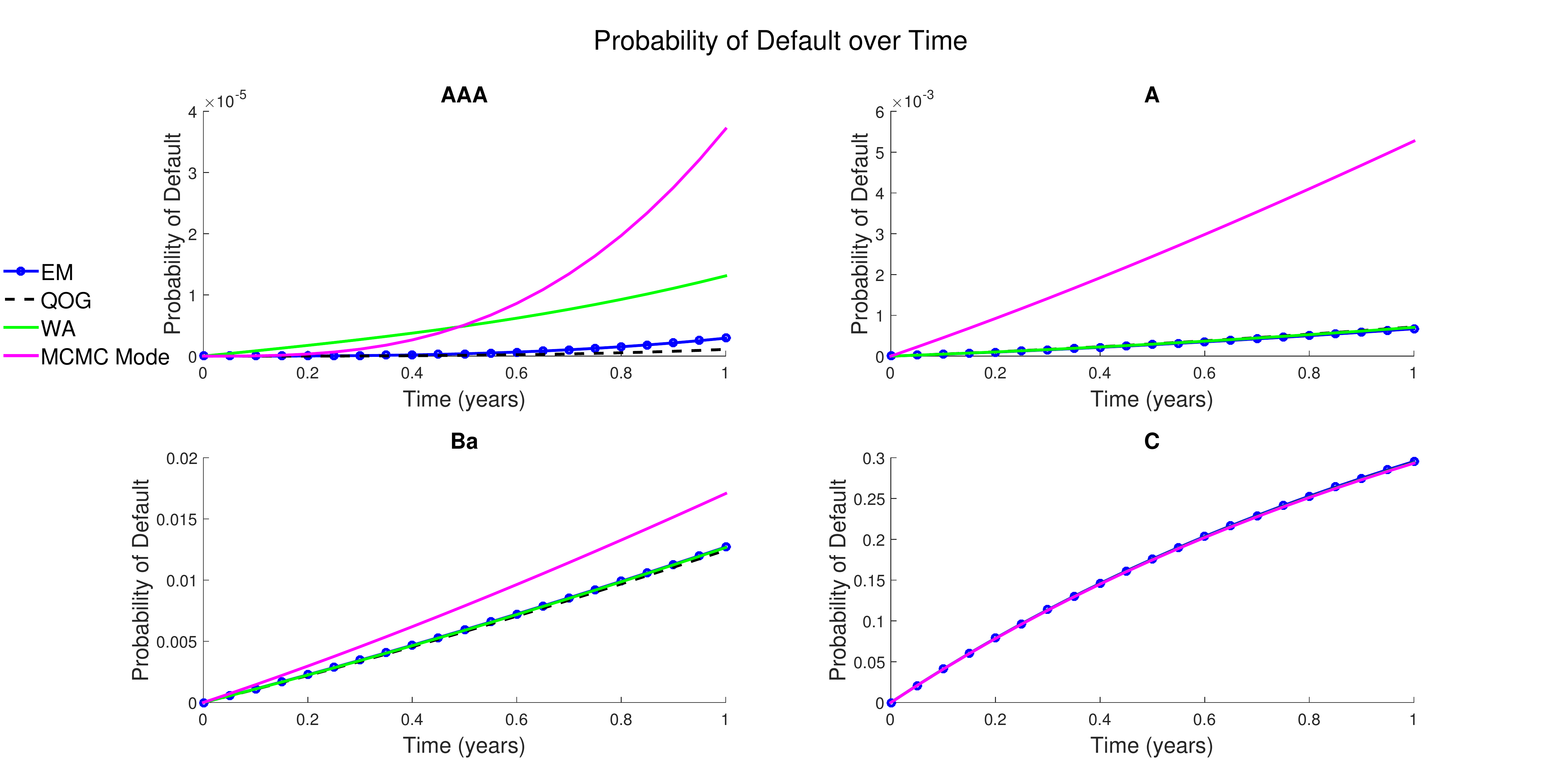}
	\vspace{-1cm}
	\caption{Probability of default over time for EM, QOG, MCMC Mode and WA.}
	\label{Fig:Default probs in time}
\end{figure} 

The probability of default across ratings over the one year time horizon is found in Figure \ref{Fig:Default probs in time}. The plots give a deeper understanding to the algorithms themselves. As the probability of default increases the algorithms converge, however, in the case of less defaults we observe a much larger discrepancy. This can be thought of as the algorithm's ability to deal with missing data, in the lower grades we observe defaults and thus have a handle on the probability, however, in the case of AAA ratings we observe no defaults and therefore it is an approximation by the algorithm. This shows the difference between the methods, shows the potential prior dependence in the MCMC algorithm. What is also extremely interesting is that QOG set the jump in the generator from AA to C as zero (even though the TPM has a non zero entry there), this implies QOG may in some places under estimate the risk for the investment ratings, this can be seen by the fact QOG puts a smaller probability of default on AAA. 

There is a clear ovestimation of the probability of default at higher grades by the WA and MCMC algorithms.

\subsection{Risk Charge}
The previous tests have been rather theoretical, we now consider a practical test to asses the performance of these algorithms in calculating risk charges. We do not give much discussion to the calculation of these risk charges for more technical details readers should consult texts such as \cite{SkoglundChen2011}. Here we consider multiple stylized portfolios to represent the risk appetites of different banks. To best of our knowledge analysis into how different risk measures react to different portfolio types has not been considered in the literature. The risk charges we consider are IRC (VaR at 99.9\% with a 3 months liquidity horizon including mark to market loss), IDR (VaR at 99.9\% over one year only considering default) and a theoretical risk charge which is IRC but measured using Expected Shortfall (ES) at 97.5\%. The final risk charge is included due to the Basel committee showing an increasing interest in ES. We consider 4 years worth of simulated data, and to keep the analysis realistic we consider 200 companies per rating. We consider 3 different portfolios corresponding to risk adverse (all investment grade), a speculative portfolio (all speculative grades) and finally a mixed portfolio. The portfolios considered are given in Tables \ref{Table:Mixed Portfolio}, \ref{Table:Investment Portfolio} and \ref{Table:Speculative Portfolio}. The tables show the values and ratings of the various bonds in each portfolio.

\begin{table}[hbt] 
	\parbox{.3\linewidth}{
		\centering
			\footnotesize
		\begin{tabular}{| l | l |}
			\hline
			AAA &  100, 500, 1500, 750 \\ \hline
			AA & 200, 750, 2000, 650 \\ \hline
			A & 150, 400, 400\\ \hline
			BBB & 300, 500, 150, 1500 \\ \hline
			BB & 500, 250, 700 \\ \hline
			B &  200, 500\\ \hline
			C & 100, 150, 200\\ \hline
		\end{tabular}
		\caption{Mixed portfolio}
		\label{Table:Mixed Portfolio}
	}
	\hfill
	\hspace{-2cm}
	\parbox{.35\linewidth}{
		\centering
			\footnotesize
		\begin{tabular}{| l | l |}
			\hline
			AAA &  1000, 500,\\
			    & 1500, 1500 \\ \hline
			AA & 100, 400, 750, \\
			   & 2000, 400, 1500 \\ \hline
			A & 150, 100, 800,\\
			     & 400, 200\\ \hline
			BBB &  \\ \hline
			BB &  \\ \hline
			B &  \\ \hline
			C & \\ \hline
		\end{tabular}
		\caption{Investment portfolio}
		\label{Table:Investment Portfolio}
	}
	\hfill
	\hspace{-2.0cm}
	\parbox{.35\linewidth}{
		\centering
			\footnotesize
		\begin{tabular}{| l | l |}
			\hline
			AAA &   \\ \hline
			AA & \\ \hline
			A &\\ \hline
			BBB &  \\ \hline
			BB & 1000, 150, 100,\\
			  & 800, 1500 \\ \hline
			B &  100, 300, 400, \\
			  &  750, 2000, 1500\\ \hline
			C & 400, 500, 400, 1000\\ \hline
		\end{tabular}
		\caption{Speculative portfolio}
		\label{Table:Speculative Portfolio}
	}
\end{table}

Alongside these portfolios we calculate the risk charges using the following information,
\begin{itemize}
	\item The interest rates we receive for a bond in each rating are
	\begin{center}
		\small
		\begin{tabular}{ c | c |c  | c | c | c | c }
			AAA & AA & A & BBB & BB & B & C \\ \hline
			2.65\% & 2.69\% & 2.78\% & 2.93\% & 3.18\% & 5.45\% & 12.39\% 
		\end{tabular}
	\end{center}
	These figures are based on interest rates from Moody's and can be found in Section 4.1 of \cite{SkoglundChen2011}. Although these interest rates do not technically match the generators we are using for the TPMs they provide reasonable interest rates for our toy example.
	\item We assume that all money is lost in the case of default (zero recovery rate).
	\item We calculate credit migration using the one factor\footnote{This is technically not the true regulation for the calculation of IDR which requires a two factor model, however our goal here is only to use these calculations as a method for comparing algorithms.} credit metrics model (\cite{GuptonEtAl1997}), i.e. normalised asset returns follow,
	\begin{equation*}
	z_{i}= \beta_{i}X + \sqrt{1-\beta_{i}^2}\epsilon_i \, ,
	\end{equation*}
	where $X$ is the systematic risk, $\epsilon_i$ is the idiosyncratic risk both standard normally distributed and $\beta_i$ is the correlation to the systematic risk, defined in \cite{Basel2003}*{p.50},
	\begin{equation*}
	\beta_{i}= 0.12 \left( \frac{1-\exp\{-50\, P^{D}_{i}\}}{1-\exp\{-50\}} \right)+0.24\left(1-\frac{1-\exp\{-50\, P^{D}_{i}\}}{1-\exp\{-50\}}\right) \, ,
	\end{equation*}
	where $P^{D}_{i}$ is the probability of default of asset $i$. Consequently we see that the higher $P_{i}^{D}$ the lower the value of $\beta$. 
	\item Although more sophisticated methods are available for calculation of VaR and ES (see \cite{Fermanian2014}), we calculate the risk charges using Monte Carlo. This is sufficient here since the portfolios are small relative to a typical bank portfolio, therefore we can obtain accurate estimates using a reasonable number of simulations. 
	\item Again, we calculate 10 realizations of the TPMs and estimate a generator for each.
\end{itemize}
We consider 15$\times 10^{5}$ simulations for each portfolio, to assess whether this was sufficient we calculated VaR and ES using 7.5$\times 10^{5}$, 10$\times 10^{5}$, 12.5$\times 10^{5}$ and 15$\times 10^{5}$ simulations and found the difference between 7.5$\times 10^{5}$ and 15$\times 10^{5}$ to be $<5\%$ for all cases. Hence were are confident that 15$\times 10^{5}$ gives sufficiently accurate results for our purposes.

With respect to the risk charge calculation, similar to the previous analysis, we calculate the risk charges for every set of TPMs, then average over all the seeds to obtain the risk charge. The risk charges as set by the true generators are given in Table \ref{Table: Risk charge true}. 
%
%
%

\begin{table}[!hbt] 
	\begin{center}
			\footnotesize
		\begin{tabular}{ c | c |  c | c | c | c | c }
~ & \multicolumn{1}{c}{} & \multicolumn{1}{c}{Stable} & ~ & \multicolumn{1}{c}{} & \multicolumn{1}{c}{Unstable} & ~ \\ \hline
			~ & Mixed & Investment & Speculative & Mixed & Investment & Speculative\\ \hline
			IRC & 702 & 0.32 & 3395 &1251 & 0.41 & 5057 \\ \hline

			IRC ES & 508 & 0.20 & 2409  & 842 & 3.78 & 3826 \\ \hline
			
			IDR & 750 & 0 & 3400 & 1750 & 200 & 4600 \\ \hline
		\end{tabular}
		\caption{Risk charge results for the true generators.}
		\label{Table: Risk charge true}
	\end{center}
\end{table}

To asses the performance of each algorithm we measure the error by the following,
\begin{equation*}
\text{Risk Error} = \frac{\frac{1}{N}\sum_{i=1}^{N} |\text{Risk Charge Estimate}(i) - \text{Risk Charge True}|}{\text{Risk Charge True}} \, ,
\end{equation*}
where $\text{Risk Charge Estimate}(i)$ is the $i^{th}$ realization of the risk charge and $N$ is the number of TPM sets (10 here). The results obtained by the algorithms are shown in Table \ref{Table:Risk Charge Algorithm Results}.

\begin{table}[!ht] 
	\begin{center}
			\footnotesize
		\begin{tabular}{c| c | c |  c | c | c | c | c }
~ &~ & \multicolumn{1}{c}{} & \multicolumn{1}{c}{Stable} & ~ & \multicolumn{1}{c}{} & \multicolumn{1}{c}{Unstable} & ~ \\ \hline
			~&~ & Mixed & Investment & Speculative & Mixed & Investment & Speculative\\ \hline

			~ & EM & 7.3 & 7.5 & 1.5 & 22.5 & 29 195 & 2.6 \\ \cline{2-8}
			~ & DA & 11.9 & 8.1 & 2.4 & 36.9 & 66 829 & 4.3 \\ \cline{2-8}
			~ & WA & 11.8 & 8.1 & 2.3 & 37.3 & 69 293 & 4.1 \\ \cline{2-8}
			\multirow{2}{*}{\textbf{IRC}} & QOG & 11.6 & 7.8 & 2.3 & 26.7 & 38 976 & 4.1 \\ \cline{2-8}
			~ & MCMCBS05 & 154 & 306 000 & 2 & 49.6 & 478 000 & 4.1 \\ \cline{2-8}
			~ & MCMCBS09 & 24.9 & 18.4 & 14.4 & 68.3 & 264 000 & 14 \\ \cline{2-8}
			~ & MCMCMode & 12.5 & 8.1 & 3.6 & 34.9 & 39 000 & 3.9 \\ \hline

			~ & ~ & \multicolumn{1}{c}{} & \multicolumn{1}{c}{~} & ~ & \multicolumn{1}{c}{} & \multicolumn{1}{c}{~} & ~ \\ \hline
			~ & EM & 5.3 & 115 & 3.4 & 8.6 & 375 & 2.7 \\ \cline{2-8}
			~ & DA & 8.2 & 235 & 5.1 & 16.6 & 1130 & 3.9 \\ \cline{2-8}
			~ & WA & 7.8 & 210 & 5 & 16.4 & 1109 & 3.8 \\ \cline{2-8}
			\multirow{2}{*}{\textbf{IRC ES}} & QOG & 7.3 & 123 & 4.9 & 12.6 & 622 & 3.8 \\ \cline{2-8}
			~ & MCMCBS05 & 35.4 & 135 000 & 4.7 & 19.7 & 5315 & 4.1 \\ \cline{2-8}
			~ & MCMCBS09 & 21 & 610 & 15.5 & 67.7 & 6693 & 13.1 \\ \cline{2-8}
			~ & MCMCMode & 9.2 & 235 & 6.1 & 19.1 & 1063 & 3.5 \\ \hline

			~ & ~ & \multicolumn{1}{c}{} & \multicolumn{1}{c}{~} & ~ & \multicolumn{1}{c}{} & \multicolumn{1}{c}{~} & ~ \\ \hline
			~ & EM & 6 & 0 & 0.3 & 4.3 & 113 & 3.5 \\ \cline{2-8}
			~ & DA & 10 & 0 & 1.2 & 8.6 & 295 & 5.7 \\ \cline{2-8}
			~ & WA & 9.3 & 0 & 0.6 & 8.6 & 295 & 5.2 \\ \cline{2-8}
			\multirow{2}{*}{\textbf{IDR}} & QOG & 7.3 & 0 & 0.6 & 5.4 & 185 & 5.3 \\ \cline{2-8}
			~ & MCMCBS05 & 139 & 1580 & 0.3 & 12.6 & 530 & 4.7 \\ \cline{2-8}
			~ & MCMCBS09 & 20 & 40 & 9.3 & 33.7 & 775 & 13.2 \\ \cline{2-8}
			~ & MCMCMode & 10 & 10 & 0.9 & 8 & 278 & 4.7 \\ \hline
		\end{tabular}
		\caption{Risk charge results for each algorithm as a \%.}
		\label{Table:Risk Charge Algorithm Results}
	\end{center}
\end{table}

It should be noted, in the stable IDR some algorithms produce a non-zero value for the investment portfolio, therefore we have inserted the money value. The first observation we make is, all algorithms overestimate the risk for the investment portfolio. This is down to two key feature, one is the `step like' nature of VaR, where in a small portfolio, small probability changes can make a large difference. The other is because we are averaging over multiple Monte Carlo simulations, thus having one default in one of those realizations will change the overall average dramatically. In terms of a typical bank portfolio this type of error should not be a problem since we would be dealing with a far larger number of assets and hence one would obtain multiple defaults. However, the results do still give a useful comparison between the algorithms. Although the MCMC algorithms can outperform the deterministic algorithms for the speculative grades, remarkably in all categories the EM produces the best results. From the tests we have considered we conclude the EM to be the superior algorithm for this problem.



\subsection{Error estimation of the EM algorithm}
\label{section:EMerrorEstimation}
A major advantage of the statistical algorithms over their deterministic counterparts is that one can derive error estimates (confidence intervals) without the brute force (slightly ad-hoc) method of bootstrapping. For MCMC this comes by looking at the posterior distribution, which we get for free. However, as we have seen MCMC is computationally expensive and we have derived a relatively cheap formula to calculate the confidence intervals from the EM algorithm.

In a similar fashion to the analysis we have carried out previously we now test the error estimate given by the EM. Again, we mask the true generator by using simulated TPMs, however, here we only consider the scenario of 300 obligors per rating, but the number of years worth of data is varied. That is, we simulate 50 years worth of TPMs and then apply the EM algorithm using 1 years worth then 2 years etc up to 50 years. This analysis shows both the estimated error for the parameter and also how the error changes when more information is added. It should also be noted that we replace companies who have defaulted to the rating they were pre default. This keeps the number of companies in the system constant and can be thought of as the flow of new companies being rated. Moreover, this is only one realisation of the data, hence the parameter estimate and confidence intervals are not particularly smooth.

\begin{figure}[!ht] 
	\centering
	\includegraphics[width=16cm]{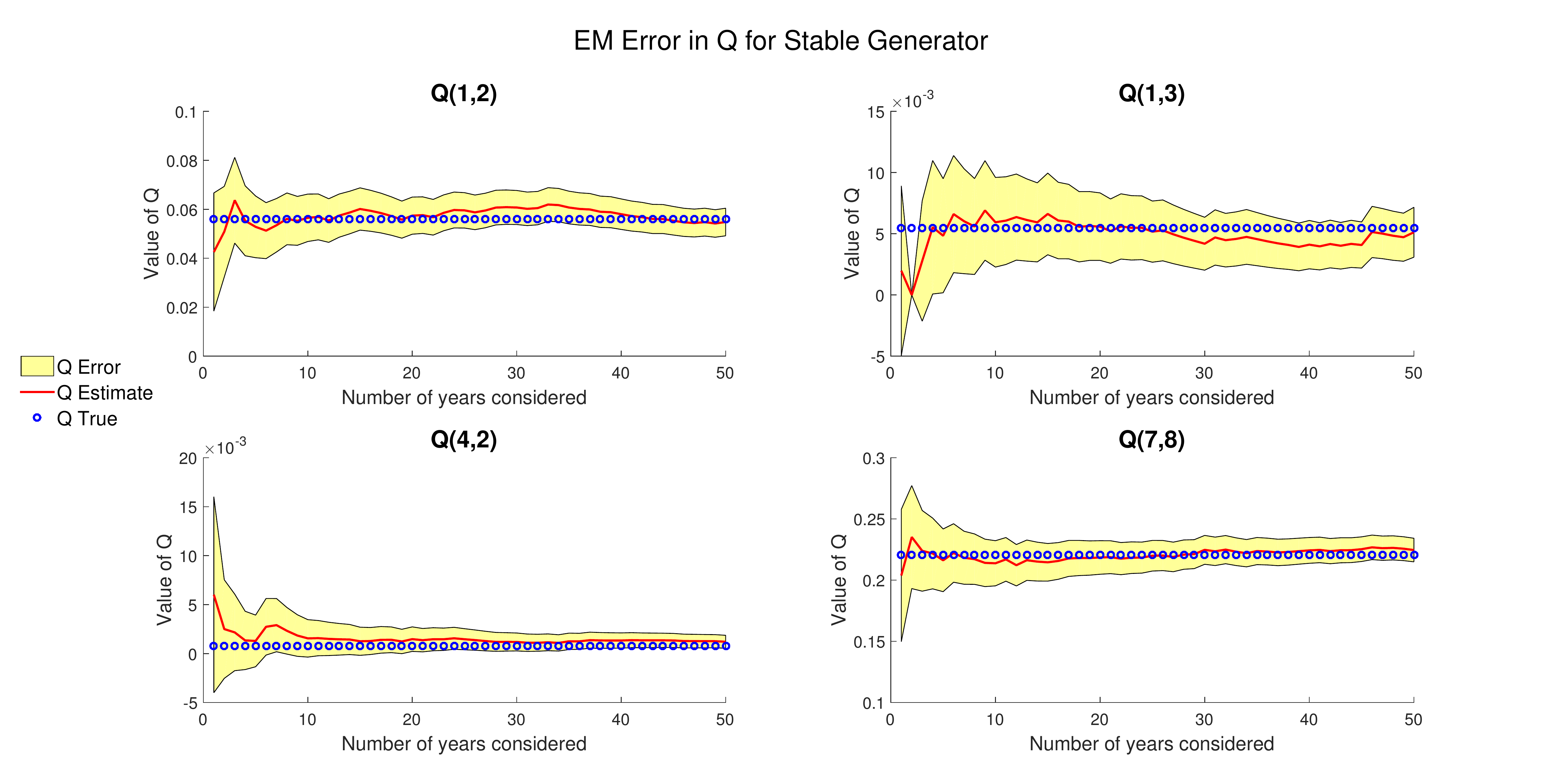}
	\vspace{-0.5cm}
	\caption{Showing the estimated 95\% confidence interval for parameters as a function of years.}
	\label{EMErrorForStable}
\end{figure} 

\begin{figure}[!ht] 
	\centering
	\includegraphics[width=16cm]{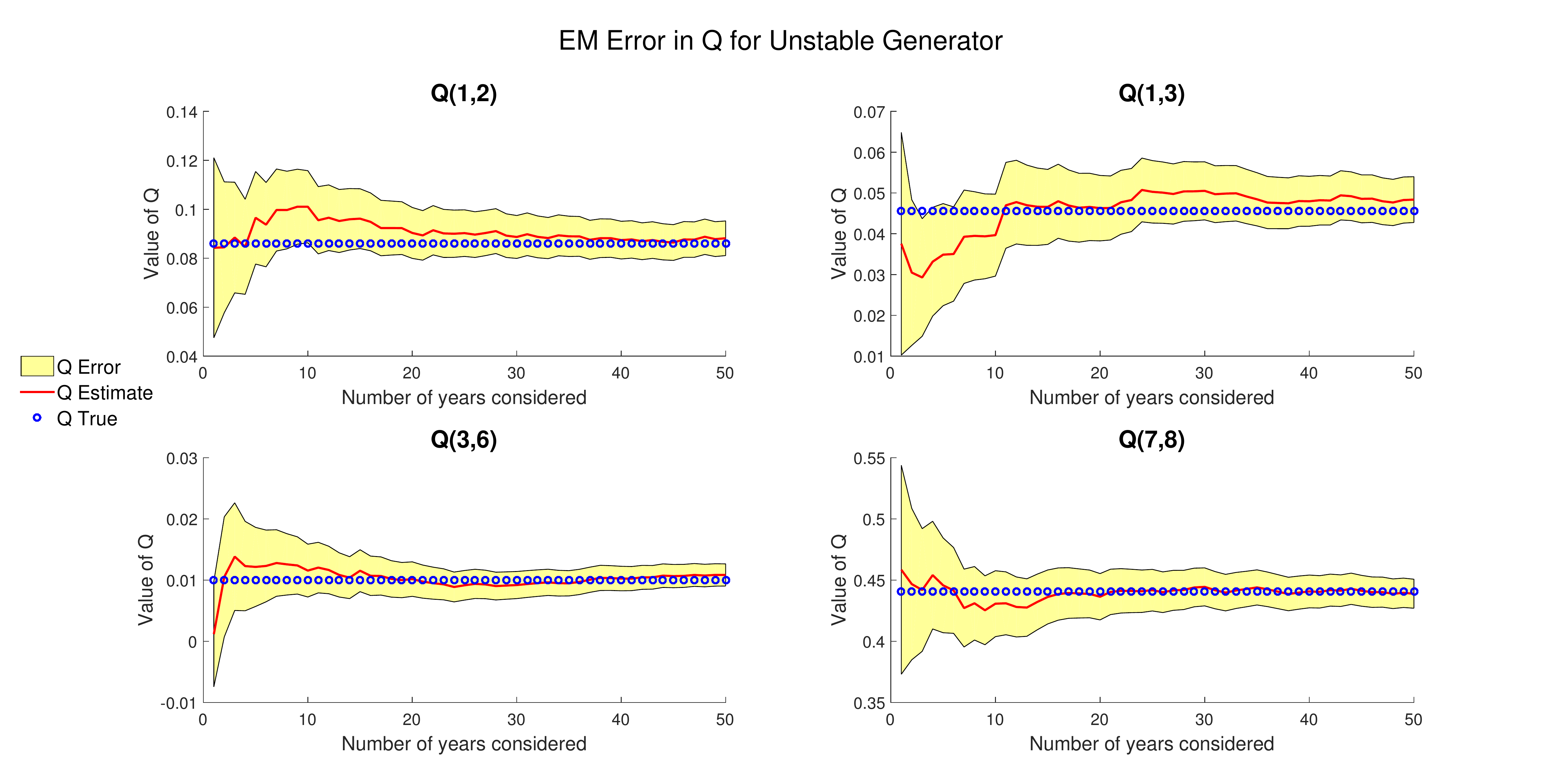}
	\vspace{-0.5cm}
	\caption{Showing the estimated 95\% confidence interval for parameters as a function of years.}
	\label{EMErrorForUnstable}
\end{figure} 


The transitions shown in Figures \ref{EMErrorForStable} and \ref{EMErrorForUnstable} were chosen to show a spectrum of the magnitudes in the generators, the other entries not shown are similar. The first point to make is that the true value of the parameter is almost always within the confidence interval and the confidence interval shrinks as the number of years increases.
One of most important features though is that the confidence interval is only small when the EM is stable and close to the true parameter, hence the EM is not ``over confident'' but is also providing reasonable estimates on its own error. The final point to make is, although some confidence intervals go below zero by a small amount, this is only true in the case where the parameter is extremely close to zero initially and further, once more data is considered, all parameters have confidence interval which are strictly positive.
\begin{remark}
[Confidence intervals in practice and regulation]
Figures \ref{EMErrorForStable} and \ref{EMErrorForUnstable} show a very important feature. Namely, how much the estimate can vary with data, especially when the parameter is reasonably small. Such analysis exposes the variance in the estimate and how much data is actually required before the estimate levels out. From this example though the confidence intervals calculated from the information matrix appear to be able to capture this error. In the view of future regulation it may be prudent to take such confidence intervals into account when considering risk charges. 
\end{remark}

\subsubsection*{Connection to the Global Maximum} 
\label{Sec:Connection to Global Max}
A previous problem with the EM was one could not be sure of the nature of the stationary point. However, we know the form of the Hessian, and therefore we can easily check if this point is a maximum by assessing the eigenvalues of this matrix. Clearly, if we were not at a maximum, then it would be worth perturbing the outputted generator and rerunning the algorithm. As discussed in Remark \ref{rem:LikelihoodUniqueMaximizer} the question of a global maximum is very difficult in this setting.
\begin{remark}
One way that has been suggested to improve the chances of the EM converging to the global maximum is, to start from multiple points. Here we can consider creating starting points by setting for each $i \neq j$, $q_{ij} \sim \text{Exp}(\lambda)$ for an appropriate $\lambda$ then setting $q_{ii}$ appropriately.
\end{remark}
We tested the EM according to the above remark and found in every case considered the EM always returns the same generator.


\section{Conclusions and future research}

In this manuscript we built upon the closed form expressions for the expected number of jumps and holding times of a CTMC with an absorbing state, over given observations and used the results to 
derive a closed form expression for the Hessian of the likelihood. This coupled with stronger convergence has elevated the EM algorithm to be the optimal algorithm to tackle this problem.


Across the battery of tests carried out, the EM algorithm outperforms competing algorithms. The EM is a tractable algorithm, slower than the deterministic algorithms but still several orders of magnitude faster than the Markov-Chain Monte-Carlo alternatives (Table \ref{EuclideanAlgorithmConvergenceStable}). The statistical algorithms (EM and MCMC) embed a strong robustness property for the estimator contrary to the deterministic algorithms, i.e. the likelihood is far less sensitive to small changes in the underlying TPM. In terms of estimating risk charges, the EM algorithm has superior results in all scenarios.

On the more practical side, Figure \ref{Fig:Default probs in time} highlights that for lower ratings algorithms produce essentially the same estimates for the probabilities of default while a palpable difference emerges at higher ratings. Moreover, the error estimates in the EM may provide a sensible way to test in effect model risk.

Lastly, non-Markovian phenomena like rating momentum (see \cite{LandoSkodeberg2002}) and appropriate models to tackle it will be addressed in forthcoming research.

\section*{Acknowledgement(s)}
The authors would like to thank Dr.~R.~P.~Jena at Nomura Bank plc London for the helpful comments. In addition, the authors would like to thank Ruth King (U.~of Edinburgh), Ioannis Papastathopoulos (U.~of Edinburgh) and Samuel Cohen (Oxford Uni.) for the helpful discussions.

We thank as well the two anonymous referees for their comments which led to improvements on the initial submission.

\section*{Funding}
\small
G. Smith was supported by The Maxwell Institute Graduate School in Analysis and its
Applications, a Centre for Doctoral Training funded by the UK Engineering and Physical
Sciences Research Council (grant [EP/L016508/01]), the Scottish Funding Council, Heriot-Watt
University and the University of Edinburgh.

G. dos Reis acknowledges support from the \emph{Funda{\c c}$\tilde{\text{a}}$o para a Ci$\hat{e}$ncia e a Tecnologia} (Portuguese Foundation for Science and Technology) through the project [UID/MAT/00297/2013] (Centro de Matem\'atica e Aplica\c c$\tilde{\text{o}}$es CMA/FCT/UNL).

%
%
%
%

\appendix

\section{Proofs}
\subsection{Proof of Lemma \ref{Lem:ExpectationBounds}}
\label{Sec:Proof of Expectation Lemma}
We now provide the proof of Lemma \ref{Lem:ExpectationBounds}, all terms used have the same definition as they did when the Lemma was stated.
Throughout we assume $i \neq h$, thus from from Assumption \ref{Assump:Element Constraint} $\bP_{\mathbf{Q}}(X(t)=j|X(0)=i)>0$ for all $j\in \{1, \dots, h\}$ and $t>0$. The first inequality we prove is the lower bound on the expected number of jumps. Following the assumptions in Lemma \ref{Lem:ExpectationBounds} and time homogeneity we make the observation
\begin{equation*}
\bE_{\mathbf{Q}}[K_{ij}(T)|\mathbf{P}] \ge P_{ij}^{u}\bP_{\mathbf{Q}}(K_{ij}(t) \ge 1 |X(0)=i, X(t)=j) \, .
\end{equation*}
The above inequality holds because we are only considering $X(0)=i$, $X(t)=j$ and not all possible combinations of start and end states, moreover, $\bP_{\mathbf{Q}}(K_{ij} \ge 1|X(0)=i, X(t)=j) \le \sum_{n=1}^{\infty}n\bP_{\mathbf{Q}}(K_{ij}=n|X(0)=i, X(t)=j)$.
We further observe,
\begin{equation*}
\bP_{\mathbf{Q}}(K_{ij} \ge 1|X(0)=i, X(t)=j) \ge \frac{q_{ij}}{-q_{ii}} \, .
\end{equation*}
Thus the lower bound in inequality \eqref{Eq:Bounds on expected jumps} can be easily obtained. 
We now prove the upper bound on the expected number of jumps. The first observation we make is for all $ \nu \in\{1, \dots, h\}$,
\begin{align*}
\bE_{\mathbf{Q}}[K_{ij}(T)|X(0)=i, X(t)=\nu] = \sup_{\mu \in \{1, \dots, h\}} \bE_{\mathbf{Q}}[K_{ij}(T)|X(0)=\mu , X(t)=\nu] \, .
\end{align*}
To see this, let $\mu \neq i$, then denote by $\tau_{i}$ the first time the process enters state $i$ (if $\bP_{\mathbf{Q}}(X(t)=i|X(0)=\mu)=0$ for $t>0$, then the result is trivial), by the law of total probability we find,
\begin{align*}
& \bE_{\mathbf{Q}}[K_{ij}(t)|X(0)=\mu , X(t)=\nu] \\
&\qquad
= \bE_{\mathbf{Q}}[K_{ij}(t)|X(0)=\mu , X(t)=\nu, \tau_{i} <t] \bP_{\mathbf{Q}}(\tau_{i} <t|X(0)=\mu , X(t)=\nu) 
\\
&
\qquad \qquad 
+\bE_{\mathbf{Q}}[K_{ij}(t)|X(0)=\mu , X(t)=\nu, \tau_{i}  \ge t] \bP_{\mathbf{Q}}(\tau_{i} \ge t|X(0)=\mu , X(t)=\nu)\, .
\end{align*}
The second term is zero. Then, using the Markov property we obtain,
\begin{align*}
\bE_{\mathbf{Q}}[K_{ij}(t)|X(0)=\mu , X(t)=\nu] & \le \bE_{\mathbf{Q}}[K_{ij}(t)|X(\tau_{i})=i , X(t)=\nu, \tau_{i} <t] 
\\
&\le \bE_{\mathbf{Q}}[K_{ij}(t)|X(0)=i , X(t)=\nu] \, .
\end{align*}
Consequently from this observation and \eqref{Eq:ExpectedHoldandJumpsTPM} we obtain,
\begin{align*}
\bE_{\mathbf{Q}}[K_{ij}(T)|\mathbf{P}] \le h N \sum_{\nu =1}^{h} \bE_{\mathbf{Q}}[K_{ij}(t)|X(0)=i , X(t)=\nu] \, .
\end{align*}
Observe that,
\begin{align*}
\bE_{\mathbf{Q}}[K_{ij}(t)|X(0)=i , X(t)=\nu] = \dfrac{\bE_{\mathbf{Q}}[K_{ij}(t) \1_{\{X(t)=\nu\}}|X(0)=i ]}{\bP_{\mathbf{Q}}(X(t)=\nu|X(0)=i)} \le \dfrac{\bE_{\mathbf{Q}}[K_{ij}(t)|X(0)=i ]}{\bP_{\mathbf{Q}}(X(t)=\nu|X(0)=i)} \, .
\end{align*}
The numerator is easy to bound by considering the expected number of jumps out of $i$,
\begin{align*}
\bE_{\mathbf{Q}}[K_{ij}(t)|X(0)=i ] \le -q_{ii}t \, .
\end{align*}
The denominator requires further analysis, firstly, let $n=|i - \nu|$, and therefore by Assumption \ref{Assump:Element Constraint} we can go from state $i$ to $\nu$ in $n$ jumps, w.l.o.g. let $i \ge \nu$ (it will be come clear that the ordering does not matter). Firstly, if $i= \nu$ then,
\begin{align*}
\bP_{\mathbf{Q}}(X(t)=\nu|X(0)=i) \ge e^{q_{ii}t} \, .
\end{align*}
For $i> \nu$, we use the Markov property to obtain,
\begin{align*}
\bP_{\mathbf{Q}}(X(t)=\nu|X(0)=i) \ge \prod_{a=1}^{n} \bP_{\mathbf{Q}}\left(X\left(\frac{a}{n}t\right)=i+a \biggl|X\left(\frac{a-1}{n}t\right)=i+a-1\right) \, .
\end{align*}
Conditioning on $X$ only making one jump in each increment we obtain,
\begin{align*}
\bP_{\mathbf{Q}}(X(t)=\nu|X(0)=i) 
&\ge
\prod_{a=1}^{n} \frac{q_{i+a-1, i+a}}{-q_{i+a-1, i+a-1}}(-q_{i+a-1, i+a-1})t \exp\{q_{i+a-1, i+a-1}t\} 
\\
&
\ge
\prod_{a=1}^{n} \epsilon t \exp\{-h t / \epsilon\} \, .
\end{align*}
As $n \le h$ and the terms are strictly smaller than $1$, the sought result follows (independent of $\nu \neq i$).

The last inequality to prove concerns the holding times. By taking for $P^{u}_{ii}>0$,
\begin{align*}
\bE_{\mathbf{Q}}[S_{i}(T)|\mathbf{P}]  \ge P^{u}_{ii}  \bE_{\mathbf{Q}}[S_{i}(t)|X(0)=i, X(t)=i]  \ge P^{u}_{ii} t \exp\{q_{ii}t\} \, ,
\end{align*}
where the final inequality follows by simply considering the case of no jumps. We can then apply the bounds from Assumption \ref{Assump:Element Constraint} to complete the inequality.

\subsection{Proof of Theorem \ref{theo:SecondDerivativesR}}
\label{Sec:Proof of second derivatives}
We recall from \cite{Wilcox1967}, \cite{TsaiChan2003} that for a square matrix $\mathbf{M}$ whose elements depend on a vector of parameters $\{\lambda_{1}, \dots \lambda_{r}\}$ (for $r \in \bN$), the following identity holds
\begin{equation}
\label{Eq:MatrixParameterDifferentiation}
\frac{\partial e^{\mathbf{M}({\mathbf{ \lambda}})t}}{\partial \lambda_{i}}= \int_{0}^{t} e^{(t-u)\mathbf{M}({\mathbf{ \lambda}})} \frac{\partial \mathbf{M}({\mathbf{ \lambda}})}{\partial \lambda_{i}} e^{u \mathbf{M}({\mathbf{ \lambda}})} du \, , 
\end{equation}
for all $i \in \{1,\dots,r\}$. Let $\mu,\nu,\alpha,\beta \in \{1,\dots,h\}$. Recalling Proposition \ref{Prop:Expected Values}, differentiating $\bE_{\mathbf{Q}}[K_{\mu \nu}(t)|y]$ w.r.t. $q_{\alpha \beta}$ yields,
\begin{align*}
\dfrac{\partial}{\partial q_{\alpha \beta}}\bE_{\mathbf{Q}}[K_{\mu \nu}(t)|y]
=
& 
\sum_{s=1}^{n-1} 
                -(e^{\mathbf{Q}(t_{s+1}-t_{s})})^{-2}_{y_{s},y_{s+1}}
								\left(\dfrac{\partial}{\partial q_{\alpha \beta}}e^{\mathbf{Q}(t_{s+1}-t_{s})} \right)_{y_{s},y_{s+1}}
								(e^{\mathbf{C}^{(\mu \nu)}_{\gamma}(t_{s+1}-t_{s})})_{y_{s}, h+y_{s+1}} 
\\
& 
\qquad 
\,              + (e^{\mathbf{Q}(t_{s+1}-t_{s})})^{-1}_{y_{s},y_{s+1}}
                       \left(\dfrac{\partial}{\partial q_{\alpha \beta}}
											          e^{\mathbf{C}^{(\mu \nu)}_{\gamma}(t_{s+1}-t_{s})} 
											\right)_{y_{s}, h+y_{s+1}} \, .
\end{align*}
Note that although the expected value of $K$ only depends on individual elements of the matrix and not the full matrix, we are still able to use the differentiation result since $A_{ij} = \mathbf{e}_{i}^{\intercal}\mathbf{A} \mathbf{e}_{j}$. Hence, from \eqref{Eq:MatrixParameterDifferentiation} we obtain,
\begin{align*}
\dfrac{\partial}{\partial q_{\alpha \beta}}\bE_{\mathbf{Q}}[K_{\mu \nu}(t)|y]=
& 
\sum_{s=1}^{n-1} 
-(e^{\mathbf{Q}(t_{s+1}-t_{s})})^{-2}_{y_{s},y_{s+1}}
\left(\int_{0}^{t} e^{(t-u)\mathbf{Q}} \frac{\partial \mathbf{Q}}{\partial q_{\alpha \beta}} e^{u\mathbf{Q}} du \right)_{y_{s},y_{s+1}}
(e^{\mathbf{C}^{(\mu \nu)}_{\gamma}(t_{s+1}-t_{s})})_{y_{s}, h+y_{s+1}}  
	\\
	&
	\qquad \qquad 
	+ (e^{\mathbf{Q}(t_{s+1}-t_{s})})^{-1}_{y_{s},y_{s+1}}
	\left(\int_{0}^{t} e^{(t-u)\mathbf{C}_{\gamma}^{(\mu \nu)}} 
	    \frac{\partial \mathbf{C}_{\gamma}^{(\mu \nu)}}{\partial q_{\alpha \beta}} 
			 e^{u\mathbf{C}_{\gamma}^{(\mu \nu)}} du \right)_{y_{s}, h+y_{s+1}} \, .
\end{align*}
Clearly, since $q_{\alpha \beta}$ appears twice in $\mathbf{Q}$,
\begin{equation}
\frac{\partial \mathbf{Q}}{\partial q_{\alpha \beta}} = \mathbf{e}_{\alpha}\mathbf{e}_{\beta}^{\intercal} -\mathbf{e}_{\alpha}\mathbf{e}_{\alpha}^{\intercal} \, , \qquad \text{and} \qquad 
\frac{\partial \mathbf{C}_{\gamma}^{(\mu \nu)}}{\partial q_{\alpha \beta}} =\left[ {\begin{array}{cc}
	\mathbf{e}_{\alpha}\mathbf{e}_{\beta}^{\intercal} -\mathbf{e}_{\alpha}\mathbf{e}_{\alpha}^{\intercal} & \mathbf{e}_{\mu}\mathbf{e}_{\nu}^{\intercal} \delta_{\mu \alpha} \delta_{\nu \beta} \\
	0 & \mathbf{e}_{\alpha}\mathbf{e}_{\beta}^{\intercal} -\mathbf{e}_{\alpha}\mathbf{e}_{\alpha}^{\intercal}
	\end{array} } \right] \, . \notag
\end{equation}
Then, by \cite{VanLoan1978} 
we can solve these integrals explicitly to obtain,
\begin{align*}
\dfrac{\partial}{\partial q_{\alpha \beta}}\bE_{\mathbf{Q}}[K_{\mu \nu}(t)|y]=& \sum_{s=1}^{n-1} -(e^{\mathbf{Q}(t_{s+1}-t_{s})})^{-2}_{y_{s},y_{s+1}}\left(e^{\mathbf{C}^{(\alpha \beta)}_{\eta}(t_{s+1}-t_{s})} \right)_{y_{s}, h+y_{s+1}}(e^{\mathbf{C}^{(\mu \nu)}_{\gamma}(t_{s+1}-t_{s})})_{y_{s}, h+y_{s+1}} 
	\\
	&
	\qquad \qquad 
	+(e^{\mathbf{Q}(t_{s+1}-t_{s})})^{-1}_{y_{s},y_{s+1}}\left(e^{\mathbf{C}_{\psi}^{(\alpha \beta,\mu \nu)}(t_{s+1}-t_{s})} \right)_{y_{s}, 3h+y_{s+1}} \, ,
\end{align*}
again $\mathbf{C}^{(\alpha \beta)}_{\eta}$ and $\mathbf{C}_{\psi}^{(\alpha \beta,\mu \nu)}$ are as defined in the Theorem's statement.

Therefore, we have a closed form expression for the derivative of expected jumps w.r.t. $q_{\alpha \beta}$. Applying a similar argument for the expected holding time we obtain,
\begin{align*}
\dfrac{\partial}{\partial q_{\alpha \beta}}\bE_{\mathbf{Q}}[S_{\mu}(t)|y]
=
& \sum_{s=1}^{n-1} -(e^{\mathbf{Q}(t_{s+1}-t_{s})})^{-2}_{y_{s},y_{s+1}}\left(e^{\mathbf{C}^{(\alpha \beta)}_{\eta}(t_{s+1}-t_{s})} \right)_{y_{s}, h+y_{s+1}}(e^{\mathbf{C}^{(\mu )}_{\phi}(t_{s+1}-t_{s})})_{y_{s}, h+y_{s+1}} 
\\
& \, 
\qquad 
+(e^{\mathbf{Q}(t_{s+1}-t_{s})})^{-1}_{y_{s},y_{s+1}}\left(e^{\mathbf{C}_{\omega}^{(\alpha \beta,\mu)}(t_{s+1}-t_{s})} \right)_{y_{s}, 3h+y_{s+1}} \, ,
\end{align*}
where $\mathbf{C}_{\omega}^{(\alpha \beta,\mu)}$ is as defined in the Theorem.
Combining these yields the required result.


\section{Overview of Markov Chain Monte Carlo algorithm}
\label{Sec:MCMC Overview}
For details on the Markov Chain Monte Carlo (MCMC) theory we refer the reader to \cite{GilksRichardsonEtAl1996}. Algorithms for implementing MCMC to estimate a generator, from discrete observations are discussed in \cite{BladtSorensen2005} and \cite{BladtSorensen2009}. MCMC differs from the EM in the sense that EM estimates the set of parameters which maximises the likelihood function, while MCMC samples from the posterior distribution. Namely, given some data $D$, the posterior distribution of parameters $\theta$ is $\pi(\theta|D)$, which by Bayes' theorem is,
 \begin{align*}
 \pi(\theta|D)
= 
 \frac{\pi(D|\theta)\pi(\theta)}{\int \pi(D|\theta)\pi(\theta)d\theta} \, ,
 \end{align*}
with $\pi(D|\theta)$ denoting the likelihood and $\pi(\theta)$ the prior distribution. MCMC obtains the best guess of $\theta$ by sampling from $\pi(\theta|D)$ and taking the Monte Carlo approximation of the expected value. The reason the expectation is our best guess is due to the fact we use both the data (likelihood) but also our experience on what the outcome should approximately be (the prior). Although the prior can be extremely useful in stopping `bad' answers it is also a criticism of MCMC due to so-called prior sensitivity. 

\begin{remark}
	Here we purely discuss MCMC to sample from the posterior, algorithms which approximate the maximum likelihood in the presence of missing data do exist, but are more useful when for example one cannot explicitly write the E step in the EM algorithm (see \cite{GelfandCarlin1993}).
\end{remark}

Similar to the case of the EM algorithm the problem faced here is missing data. Namely we wish to consider the so-called posterior distribution of the generator matrix $\mathbf{Q}$, which we denote by $\pi(\mathbf{Q}|D)$ (although it is common to suppress the data and only write $\pi(\mathbf{Q})$). The difficulty is, in its current state this is an extremely hard distribution to evaluate so we augment with an auxiliary variable $X$ (see  \cite{GilksRichardsonEtAl1996}*{p.105} and \cite{BesagGreen1993}). In general $X$ need not require an interpretation, although here it will correspond to the full Markov chain. In order to generate realisations of $\pi(\mathbf{Q}|D)$, we specify the conditional distribution $\pi(X|\mathbf{Q},D)$ which provides the joint distribution $\pi(\mathbf{Q},X|D)=\pi(\mathbf{Q}|D)\pi(X|\mathbf{Q},D)$ and therefore the marginal distribution of $\mathbf{Q}$ is $\pi(\mathbf{Q}|D)$. One can then sample from the marginal distribution by using any sampling method that preserves the joint distribution $\pi(\mathbf{Q},X|D)$ (and by extension $\pi(\mathbf{Q}|D)$), such as Gibbs or Metropolis Hastings.

The method used in \cite{BladtSorensen2005} and \cite{BladtSorensen2009} is the data augmentation algorithm from \cite{TannerWong1987} (see also \cite{LittleRubin2002}*{p.200}). We specify the prior distribution $\pi(\mathbf{Q})$ and take a realisation from this distribution, $\mathbf{Q}^{(0)}$, we then construct a sequence $\{\mathbf{Q}^{(k)},X^{(k)}\}$, for $k=1,\dots, M$ by:
\begin{enumerate}
	\item Draw, $X^{(k)} \sim \pi(X|\mathbf{Q}^{(k-1)},D)$.
	\item Draw, $\mathbf{Q}^{(k)} \sim \pi(\mathbf{Q}|X^{(k)},D)=\pi(\mathbf{Q}|X^{(k)})$ (since $X^{(k)}$ is richer than $D$).
	\item Save $\{\mathbf{Q}^{(k)},X^{(k)}\}$  and take $k=k+1$.
\end{enumerate}
Under mild conditions (see \cite{GilksRichardsonEtAl1996}*{Chapter 4}), after some burn-in $n$, the sequence $\{\mathbf{Q}^{(k)},X^{(k)}\}$ for $k \ge n$ has the same distribution as $\pi(\mathbf{Q},X|D)$. Moreover, the marginals also have the correct distribution, namely, $\{\mathbf{Q}^{(k)}\} \sim \pi(\mathbf{Q}|D)$ for $k \ge n$. Therefore we estimate the generator matrix by, $\frac{1}{M-n+1}\sum_{k=n}^{M} \mathbf{Q}^{(k)}$.

For the choice of prior, $\pi(\mathbf{Q})$, \cite{BladtSorensen2005} suggest a prior from the gamma distribution with shape $\alpha_{ij}$ and scale $1/\beta_{i}$. Hence, $q_{ij} \sim \Gamma(\alpha_{ij},1/\beta_{i})$, where $\alpha_{ij} ,\beta_{i} \ge 0$, $\forall \, i \neq j \in \{1,\dots,h\}$.  With this choice, the prior is a 
 conjugate prior. Although this prior has some drawbacks, we note, by assuming the prior to follow a Gamma distribution we effectively bound the parameter space, therefore there is no  need to make the space compact. Noting that the posterior distribution of $X$ is equivalent to the likelihood i.e. $\pi(X|\mathbf{Q})=L_{t}(X;\mathbf{Q})$, one has
 \begin{align*}
 \pi(\mathbf{Q}|X,D)=\pi(\mathbf{Q}|X)=\frac{\pi(\mathbf{Q},X)}{\pi(X)} \propto L_{t}(X;\mathbf{Q})\pi(\mathbf{Q}) \, .
 \end{align*}
 From the likelihood of a CTMC and the assumption on the prior we infer that,
 \begin{align*}
 L_{t}(X;\mathbf{Q})\pi(\mathbf{Q})\propto \prod_{i=1}^{h}\prod_{j \neq i} q_{ij}^{K_{ij}(t)}e^{-S_{i}(t)q_{ij}}\prod_{i=1}^{h}\prod_{j \neq i} q_{ij}^{\alpha_{ij}-1}e^{-\beta_{i} q_{ij}}= \prod_{i=1}^{h}\prod_{j \neq i} q_{ij}^{K_{ij}(t)+\alpha_{ij}-1}e^{-(S_{i}(t)+\beta_{i})q_{ij}} \, .
 \end{align*}
 We do not have equality here since there is no normalisation term. We generate $q_{ij}$ with $i \neq j$ from the distribution $\Gamma\big(K_{ij}(t)+\alpha_{ij}, 1/(S_{i}(t)+ \beta_{i})\big)$ (since each $q_{ij}$ is independent).


\bibliographystyle{alpha}



\begin{bibdiv}
\begin{biblist}

\bib{BangiaDieboldKronimusEtAl2002}{article}{
      author={Bangia, Anil},
      author={Diebold, Francis~X.},
      author={Kronimus, Andr{\'e}},
      author={Schagen, Christian},
      author={Schuermann, Til},
       title={Ratings migration and the business cycle, with application to
  credit portfolio stress testing},
        date={2002},
     journal={Journal of {B}anking \& {F}inance},
      volume={26},
      number={2},
       pages={445\ndash 474},
}

\bib{BesagGreen1993}{article}{
      author={Besag, Julian},
      author={Green, Peter~J.},
       title={Spatial statistics and {B}ayesian computation},
        date={1993},
     journal={Journal of the Royal Statistical Society. Series B
  (Methodological)},
       pages={25\ndash 37},
}

\bib{BladtEtAl2002}{article}{
      author={Bladt, Mogens},
      author={Meini, Beatrice},
      author={Neuts, Marcel~F.},
      author={Sericola, Bruno},
       title={Distributions of reward functions on continuous-time {M}arkov
  chains},
        date={2002},
     journal={Matrix-analytic methods},
       pages={39\ndash 62},
}

\bib{BrigoMaiScherer2014}{article}{
      author={Brigo, Damiano},
      author={Mai, Jan-Frederik},
      author={Scherer, Matthias~A.},
       title={Consistent iterated simulation of multi-variate default times: a
  {M}arkovian indicators characterization},
        date={2014},
     journal={Available at SSRN 2274369},
}

\bib{BladtSorensen2005}{article}{
      author={Bladt, Mogens},
      author={Sorensen, Michael},
       title={Statistical inference for discretely observed {M}arkov jump
  processes},
        date={2005},
     journal={Journal of the Royal Statistical Society: Series B (Statistical
  Methodology)},
      volume={67},
      number={3},
       pages={395\ndash 410},
}

\bib{BladtSorensen2009}{article}{
      author={Bladt, Mogens},
      author={Sorensen, Michael},
       title={Efficient estimation of transition rates between credit ratings
  from observations at discrete time points},
        date={2009},
     journal={Quantitative Finance},
      volume={9},
      number={2},
       pages={147\ndash 160},
}

\bib{Cantor2004}{article}{
      author={Cantor, Richard},
       title={An introduction to recent research on credit ratings},
        date={2004},
     journal={Journal of Banking \& Finance},
      volume={28},
      number={11},
       pages={2565\ndash 2573},
}

\bib{ContEtAl2010}{article}{
      author={Cont, Rama},
      author={Deguest, Romain},
      author={Scandolo, Giacomo},
       title={Robustness and sensitivity analysis of risk measurement
  procedures},
        date={2010},
     journal={Quantitative Finance},
      volume={10},
      number={6},
       pages={593\ndash 606},
}

\bib{ChristensenEtAl2004}{article}{
      author={Christensen, Jens~{H. E.}},
      author={Hansen, Ernst},
      author={Lando, David},
       title={Confidence sets for continuous-time rating transition
  probabilities},
        date={2004},
     journal={Journal of Banking \& Finance},
      volume={28},
      number={11},
       pages={2575\ndash 2602},
}

\bib{Culver1966}{article}{
      author={Culver, Walter~J},
       title={On the existence and uniqueness of the real logarithm of a
  matrix},
        date={1966},
     journal={Proceedings of the American Mathematical Society},
      volume={17},
      number={5},
       pages={1146\ndash 1151},
}

\bib{Cuthbert1973}{article}{
      author={Cuthbert, James~R.},
       title={The logarithm function for finite-state {M}arkov semi-groups},
        date={1973},
     journal={Journal of the London Mathematical Society},
      volume={2},
      number={3},
       pages={524\ndash 532},
}

\bib{DehayYao2007}{article}{
      author={Dehay, Dominique},
      author={Yao, Jian-Feng},
       title={On likelihood estimation for discretely observed {M}arkov jump
  processes},
        date={2007},
     journal={Australian \& New Zealand Journal of Statistics},
      volume={49},
      number={1},
       pages={93\ndash 107},
}

\bib{Fermanian2014}{article}{
      author={Fermanian, Jean-David},
       title={The limits of granularity adjustments},
        date={2014},
     journal={Journal of Banking \& Finance},
      volume={45},
       pages={9\ndash 25},
}

\bib{FrydmanSchuermann2008}{article}{
      author={Frydman, Halina},
      author={Schuermann, Til},
       title={Credit rating dynamics and {M}arkov mixture models},
        date={2008},
     journal={Journal of Banking \& Finance},
      volume={32},
      number={6},
       pages={1062\ndash 1075},
}

\bib{GelfandCarlin1993}{article}{
      author={Gelfand, Alan~E.},
      author={Carlin, Bradley~P.},
       title={Maximum-likelihood estimation for constrained-or missing-data
  models},
        date={1993},
     journal={Canadian Journal of Statistics},
      volume={21},
      number={3},
       pages={303\ndash 311},
}

\bib{GuptonEtAl1997}{book}{
      author={Gupton, Gred~M.},
      author={Finger, Christopher~Clemens},
      author={Bhatia, Mickey},
       title={Creditmetrics: technical document},
   publisher={{JP} Morgan \& Co.},
        date={1997},
}

\bib{GilksRichardsonEtAl1996}{article}{
      author={Gilks, Walter~R.},
      author={Richardson, Sylvia},
      author={Spiegelhalter, David~J.},
       title={Introducing {M}arkov chain {M}onte {C}arlo},
        date={1996},
     journal={Markov chain Monte Carlo in practice},
      volume={1},
       pages={19},
}

\bib{HobolthJensen2011}{article}{
      author={Hobolth, Asger},
      author={Jensen, Jens~Ledet},
       title={Summary statistics for endpoint-conditioned continuous-time
  markov chains},
        date={2011},
     journal={Journal of applied probability},
      volume={48},
      number={4},
       pages={911\ndash 924},
}

\bib{Inamura2006}{techreport}{
      author={Inamura, Yasunari},
       title={Estimating continuous time transition matrices from discretely
  observed data},
 institution={Citeseer},
        date={2006},
        note={(No. 06-E-7). Bank of Japan},
}

\bib{IsraelRosenthalWei2001}{article}{
      author={Israel, Robert~B.},
      author={Rosenthal, Jeffrey~S.},
      author={Wei, Jason~Z.},
       title={Finding generators for {M}arkov chains via empirical transition
  matrices, with applications to credit ratings},
        date={2001},
     journal={Mathematical finance},
      volume={11},
      number={2},
       pages={245\ndash 265},
}

\bib{JarrowLandoTurnbull1997}{article}{
      author={Jarrow, Robert~A.},
      author={Lando, David},
      author={Turnbull, Stuart~M.},
       title={A {M}arkov model for the term structure of credit risk spreads},
        date={1997},
     journal={Review of Financial studies},
      volume={10},
      number={2},
       pages={481\ndash 523},
}

\bib{Korolkiewicz2012}{article}{
      author={Korolkiewicz, Ma{\l}gorzata~Wiktoria},
       title={A dependent hidden {M}arkov model of credit quality},
        date={2012},
     journal={International Journal of Stochastic Analysis},
      volume={2012},
}

\bib{KreininSidelnikova2001}{article}{
      author={Kreinin, Alexander},
      author={Sidelnikova, Marina},
       title={Regularization algorithms for transition matrices},
        date={2001},
     journal={Algo Research Quarterly},
      volume={4},
      number={1/2},
       pages={23\ndash 40},
}

\bib{KuchlerSorensen1997}{book}{
      author={K{\"u}chler, Uwe},
      author={Sorensen, Michael},
       title={Exponential families of stochastic processes},
   publisher={Springer Science \& Business Media},
        date={1997},
      volume={3},
}

\bib{KremerWeissbach2013}{article}{
      author={Kremer, Alexander},
      author={Wei{\ss}bach, Rafael},
       title={Consistent estimation for discretely observed {M}arkov jump
  processes with an absorbing state},
        date={2013},
     journal={Statistical Papers},
      volume={54},
      number={4},
       pages={993\ndash 1007},
}

\bib{KremerWeissbach2014}{article}{
      author={Kremer, Alexander},
      author={Wei{\ss}bach, Rafael},
       title={Asymptotic normality for discretely observed {M}arkov jump
  processes with an absorbing state},
        date={2014},
     journal={Statistics \& Probability Letters},
      volume={90},
       pages={136\ndash 139},
}

\bib{Lin2011}{thesis}{
      author={Lin, Lijing},
       title={Roots of stochastic matrices and fractional matrix powers},
        type={Ph.D. Thesis},
        date={2011},
         url={https://www.escholar.manchester.ac.uk/uk-ac-man-scw:106850},
}

\bib{LongKeenanNeaguEtAl2011}{article}{
      author={Long, Kete},
      author={Keenan, Sean~C.},
      author={Neagu, Radu},
      author={Ellis, John~A.},
      author={Black, Jason~W.},
       title={The computation of optimised credit transition matrices},
        date={2011},
     journal={Journal of Risk Management in Financial Institutions},
      volume={4},
      number={4},
       pages={370\ndash 391},
}

\bib{LittleRubin2002}{book}{
      author={Little, Roderick J.~A.},
      author={Rubin, Donald~B.},
       title={Statistical analysis with missing data},
   publisher={John Wiley \& Sons},
        date={2002},
}

\bib{LandoSkodeberg2002}{article}{
      author={Lando, David},
      author={Skodeberg, Torben~M.},
       title={Analyzing rating transitions and rating drift with continuous
  observations},
        date={2002},
     journal={Journal of Banking and Finance},
      volume={26},
      number={2},
       pages={423\ndash 444},
}

\bib{McLachlanKrishnan2007}{book}{
      author={{McLachlan}, Geoffrey},
      author={Krishnan, Thriyambakam},
       title={The {EM} algorithm and extensions},
   publisher={John Wiley \& Sons},
        date={2007},
      volume={382},
}

\bib{Norris1998}{book}{
      author={Norris, James~R.},
       title={Markov chains},
   publisher={Cambridge university press},
        date={1998},
      number={2},
}

\bib{Oakes1999}{article}{
      author={Oakes, David},
       title={Direct calculation of the information matrix via the {EM}},
        date={1999},
     journal={Journal of the Royal Statistical Society: Series B (Statistical
  Methodology)},
      volume={61},
      number={2},
       pages={479\ndash 482},
}

\bib{Pfeuffer2017}{article}{
      author={Pfeuffer, Marius},
       title={{ctmcd: An R Package for Estimating the Parameters of a
  Continuous-Time Markov Chain from Discrete-Time Data}},
        date={2017},
     journal={The R Journal},
        note={To Appear},
}

\bib{PfeufferMoestelFischer2017}{techreport}{
      author={Pfeuffer, Marius},
      author={Moestel, Linda},
      author={Fischer, Matthias},
       title={{An Extended Likelihood Framework for Modeling Discretely
  Observed Credit Rating Transitions}},
 institution={University of Erlangen-Nuremberg},
        date={2017},
}

\bib{RutkowskiTarca2014}{article}{
      author={Rutkowski, Marek},
      author={Tarca, Silvio},
       title={Regulatory capital modeling for credit risk},
        date={2015},
     journal={International Journal of Theoretical and Applied Finance},
      volume={18},
      number={05},
       pages={1550034},
}

\bib{SkoglundChen2011}{article}{
      author={Skoglund, Jimmy},
      author={Chen, Wei},
       title={On the choice of liquidity horizon for incremental risk charges:
  are the incentives of banks and regulators aligned?},
        date={2011},
     journal={The Journal of Risk Model Validation},
      volume={5},
      number={3},
       pages={37\ndash 57},
}

\bib{Basel2003}{article}{
      author={Supervision, Basel Committee on~Banking},
       title={The new {B}asel capital accord},
        date={2003},
}

\bib{Basel2013}{article}{
      author={Supervision, Basel Committee on~Banking},
       title={Fundamental review of the trading book: A revised market risk
  framework},
        date={2013},
}

\bib{TsaiChan2003}{article}{
      author={Tsai, Henghsiu},
      author={Chan, KS},
       title={A note on parameter differentiation of matrix exponentials, with
  applications to continuous-time modelling},
        date={2003},
     journal={Bernoulli},
       pages={895\ndash 919},
}

\bib{TrueckOezturkmen2004}{incollection}{
      author={Tr{\"u}ck, Stefan},
      author={{\"O}zturkmen, Emrah},
       title={Estimation, adjustment and application of transition matrices in
  credit risk models},
        date={2004},
   booktitle={Handbook of computational and numerical methods in finance},
   publisher={Springer},
       pages={373\ndash 402},
}

\bib{TannerWong1987}{article}{
      author={Tanner, Martin~A.},
      author={Wong, Wing~Hung},
       title={The calculation of posterior distributions by data augmentation},
        date={1987},
     journal={Journal of the American statistical Association},
      volume={82},
      number={398},
       pages={528\ndash 540},
}

\bib{VanLoan1978}{article}{
      author={{Van}~{Loan}, C.},
       title={Computing integrals involving the matrix exponential},
        date={1978},
     journal={Automatic Control, {IEEE} Transactions on},
      volume={23},
      number={3},
       pages={395\ndash 404},
}

\bib{Wilcox1967}{article}{
      author={Wilcox, RM},
       title={Exponential operators and parameter differentiation in quantum
  physics},
        date={1967},
     journal={Journal of Mathematical Physics},
      volume={8},
      number={4},
       pages={962\ndash 982},
}

\bib{Wu1983}{article}{
      author={Wu, {C. F.}~Jeff},
       title={On the convergence properties of the {EM} algorithm},
        date={1983},
     journal={The Annals of statistics},
       pages={95\ndash 103},
}

\bib{YavinEtAl2014}{article}{
      author={Yavin, Tzahi},
      author={Wang, Eugene},
      author={Zhang, Hu},
      author={Clayton, Michael~A.},
       title={Transition probability matrix methodology for incremental risk
  charge},
        date={2014},
     journal={Journal of Financial Engineering},
      volume={1},
      number={01},
  url={http://www.worldscientific.com/doi/pdfplus/10.1142/S234576861450010X},
}

\end{biblist}
\end{bibdiv}



\end{document}